\newtheorem{lemma}{Lemma}
\newtheorem{proposition}{Proposition}
\newtheorem{definition}{Definition}
\DeclareMathOperator*{\argmin}{argmin}
\DeclareMathOperator*{\argmax}{argmax}
\DeclareMathOperator*{\sign}{sign}
\newcommand{\toP}{\stackrel{\mathbb{P}}{\longrightarrow}}
\newcommand{\simP}{\stackrel{\mathbb{P}}{\sim}}
\newcommand{\tod}{\overset{d}{\longrightarrow}}
\newcommand{\betaml}{\hat\beta_{{\rm{ml}}}}
\newcommand{\betamlj}{\hat\beta_{{\rm{ml}},j}}
\newcommand{\betarmap}{\hat\beta^{(r)}_{{\rm{map}}}}
\newcommand{\betacmap}{\hat\beta^{{\rm{c}}}_{{\rm{map}}}}
\newcommand{\betasmap}{\hat\beta^{{\rm{s}}}_{{\rm{map}}}}
\newcommand{\betasmapj}{\hat\beta^{{\rm{s}}}_{{\rm{map}},j}}
\newcommand{\dd}{{\rm{d}}}
\newcommand{\vMFS}{\rm{vMF}_{\rm{S}}}
\newcommand{\vMFB}{\rm{vMF}_{\rm{B}}}
\newcommand{\Au}{$\mathbf{(A_1)}$}
\newcommand{\Ad}{$\mathbf{(A_2)}$}
\newcommand{\At}{$\mathbf{(A_3)}$}
\newcommand{\MM}{$\mathbf{(A_0)}$}
\newcommand{\bigOP}{\mathcal{O}_{\mathbb{P}}}
\definecolor{viri1}{rgb}{0.26562500, 0.00390625, 0.32812500}
\definecolor{viri2}{rgb}{0.1914062, 0.4062500, 0.5546875}
\definecolor{viri3}{rgb}{0.2070312 ,0.7148438 ,0.4726562}
\definecolor{viri4}{rgb}{0.9882812 ,0.9023438 ,0.1445312}
\definecolor{forestgreen}{rgb}{0.13, 0.55, 0.13}
\let\realItem\item 
\NewDocumentCommand\myItem{ o }{%
   \IfNoValueTF{#1}%
      {\realItem}
      {\realItem[#1]\def\@currentlabel{#1}}
}
\setlist[enumerate]{
    before=\let\item\myItem,       
    label=\textnormal{(\arabic*)}, 
    widest=(2')                    
}
\definecolor{forestgreen}{rgb}{0.13, 0.55, 0.13}
\def\similarity{R\xspace}
\definecolor{C1}{HTML}{3182ce} 
\definecolor{C2}{HTML}{dd6b20} 
\definecolor{Fushia}{HTML}{8C368C} 
\definecolor{C4}{HTML}{008B72} %
\definecolor{C5}{HTML}{792500} 
\definecolor{ForestGreen}{RGB}{34, 139, 34}
\definecolor{bananayellow}{rgb}{0.94, 0.88, 0.19}
\begin{document}

\title{Shrinkage for Extreme Partial Least-Squares}

\author{Julyan Arbel$^{(1)}$, Stéphane Girard$^{(1,\star)}$ \& Hadrien Lorenzo$^{(2)}$}

\date{\small $^{(1)}$ Univ. Grenoble Alpes, Inria, CNRS, Grenoble INP, LJK, 38000 Grenoble, France. \\
$^{(2)}$ {Aix Marseille Univ, CNRS, I2M, Marseille, France.}\\
$^\star$ Corresponding author,
{\tt stephane.girard@inria.fr}
}
\maketitle


\begin{center}
    \textbf{Abstract}
\end{center}

This work focuses on dimension-reduction techniques for modelling conditional extreme values. Specifically, we investigate the idea that extreme values of a response variable can be explained by nonlinear functions derived from linear projections of an input random vector. In this context, the estimation of projection directions is examined, as approached by the Extreme Partial Least Squares (EPLS) method--an adaptation of the original Partial Least Squares (PLS) method tailored to the extreme-value framework. 
Further, a novel interpretation of EPLS directions as maximum likelihood estimators is introduced, utilizing the von Mises--Fisher distribution applied to hyperballs. The dimension reduction process is enhanced through the Bayesian paradigm, enabling the incorporation of prior information into the projection direction estimation. The maximum a posteriori estimator is derived in two specific cases, elucidating it as a regularization or shrinkage of the EPLS estimator. We also establish its asymptotic behavior as the sample size approaches infinity. 
A simulation data study is conducted in order to assess the practical utility of our proposed method. This clearly demonstrates its effectiveness even in moderate data problems within high-dimensional settings.
Furthermore, we provide an illustrative example of the method's applicability using French farm income data, highlighting its efficacy in real-world scenarios.
\\

\noindent\textbf{Keywords}: Extreme-value analysis, Dimension reduction, Shrinkage, Non-linear inverse regression, Partial Least Squares.

\noindent {\bf MSC 2020 subject classification}:  62G32, 62H25, 62H12, 62E20.



\section{Introduction}

\paragraph{Partial Least Squares (PLS).}
In modern statistical regression situations, one has to deal with problems where the dimension~$p$ of the covariates $X$ is large,
and where the size $n$ of the dataset is insufficient to provide reliable estimations. Using standard (parametric or nonparametric)
regression techniques in such situations may yield overfitting and therefore unstable estimations.
This curse of dimensionality~\citep{Geenens:2011} may be mitigated 
by identifying a low-dimensional subspace of the covariates $X$ that maintains a strong link between the projected covariates and the response variable $Y$. 
As an example, Partial Least Squares (PLS) regression~\citep{wold1975soft} aims at estimating linear combinations of $X$ coordinates having a high covariance with $Y$. Even though PLS has been initially developed within the chemometrics field~\citep{martens1992multivariate},
it has also received considerable attention in the statistical literature, see for instance~\citet{naik}.
Sliced Inverse Regression \citep[SIR,][]{Li1991} is an alternative method to estimate a so-called central dimension reduction subspace
based on an inverse regression model, {\it i.e.} when $X$ is written as a function of $Y$. Several extensions have been developed for PLS and SIR, see~\citet{cook2013envelopes,li2007partial} and~\citet{chiancone2017student,coudret2014new,portier} among others or~\citet{review} for a review. 
While the above-mentioned methods adopt the frequentist point of view, there also exist a number of works in the literature 
based on Bayesian approaches. In~\citet{reich2011sufficient}, the authors model the response variable $Y$ in terms of the predictors $X$ using a mixture model whose parameters are estimated with a Markov chain Monte Carlo (MCMC) procedure.
The converse point of view is adopted in~\citet{mao2010supervised}: $X$ is modelled as a function of $Y$ thanks to an inverse mixture model,
the estimation also requiring an MCMC method. A similar approach is proposed in~\citet{cai2021bayesian} using a Bayesian inverse regression through Gaussian processes and MCMC procedures.

\paragraph{Extreme Partial Least Squares (EPLS).}
The curse of dimensionality is exacerbated when modelling conditional extremes since tail events are rare by nature.
Nonparametric estimators of extreme conditional features~\citep{Bernoulli,Suite,  Annals} are thus impacted both by the scarcity of extremes and the high dimensional setting. 
Recently, some works have introduced dimension-reduction tools dedicated to conditional extremes. 
One can mention~\citet{sabourin,Gardes2018} who propose extreme analogues of the central dimension reduction subspace. In~\citet{wang2020extreme}, a semi-parametric approach is introduced for the estimation of extreme conditional quantiles based on a tail single-index model. The dimension reduction direction is estimated by fitting a misspecified linear quantile regression model.
 Extreme Partial Least Squares \citep[EPLS,][]{ExtremePLS2022} is a dimension reduction method relying on PLS 
principles for estimating the linear combinations of $X$ that best explain the extreme values of $Y$. 
See also \cite{Cambyse} for an adaptation of EPLS to functional covariates.

\paragraph{Shrinkage EPLS, contributions, and outline.}
In this work, we develop two shrinkage versions of the EPLS method for high-dimensional settings under the common acronym SEPaLS. 
The starting point consists of recognizing the EPLS estimator as a maximum likelihood estimator associated with a von Mises--Fisher likelihood (Section~\ref{sec-EPLS}).
The latter distribution, which naturally arises for modelling directional data distributed on the unit sphere~\citep{mardia2009directional}, is here adapted to hyperballs. 
Two prior distributions are introduced on the dimension reduction direction in
Section~\ref{sec-BEPLS}: a conjugate one based on the von Mises--Fisher distribution
 and a second one using the Laplace distribution (both defined on the unit sphere) to enforce sparsity.
Proposition~\ref{prop:conj} and Proposition~\ref{prop:sparse} show that the maximum a posteriori (MAP) estimator is available in closed form. Its computation does not require MCMC methods and can be interpreted as a shrinkage version of the
initial EPLS estimator. See Figure~\ref{fig:cadre} for a summary of the different PLS adaptations.
\begin{figure}
    \centering
\begin{tcolorbox}[colback=black!3!white,colframe=gray,boxrule=1pt,width=.83\linewidth, left=0cm]
    \begin{tabular}{rcl}
    \textbf{PLS}  & $\hat\beta$ & maximizes covariance between $\langle \beta,X\rangle$ and $Y$ \\
    \textbf{\textcolor{C1}{E}PLS}  & $\textcolor{C1}{\betaml(y)}$ & a PLS estimator for values of $Y$ larger than $y$ \\
    \multirow{2}{*}{\textbf{\textcolor{Fushia}{S}\textcolor{C1}{E}P\textcolor{Fushia}{a}LS}} 
    & $\textcolor{Fushia}{\betacmap(y)}$ & an EPLS estimator with conjugate prior \\
     & $\textcolor{Fushia}{\betasmap(y)}$ & an EPLS estimator with sparse prior
\end{tabular}    
\end{tcolorbox}
    \caption{Different Partial Least Squares approaches discussed here with their adaptations to the extreme and shrinkage frameworks.}
    \label{fig:cadre}
\end{figure}
Convergence results are also established
when the sample size tends to infinity, in Proposition~\ref{prop-recall},  Proposition~\ref{prop-asymp-conj}, and Proposition~\ref{prop-asymp-sparse}. 
The behavior of the two proposed estimators is illustrated on simulated data in Section~\ref{sec-simBEPLS}, while an application on French farm income data is described in Section~\ref{sec-appBEPLS} to assess the influence of various parameters on field-grown carrot production. The functions to compute Shrinkage Extreme Partial Least Squares  estimators are available in the R package \texttt{SEPaLS}\footnote{\label{note:SEPaLS}\texttt{https://github.com/hlorenzo/SEPaLS/}} \citep{SEPaLS-package}, while the R code replicating the figures can be found online\footnote{\label{note:SEPaLS_simus}\texttt{https://github.com/hlorenzo/SEPaLS\_simus/}}.
A discussion is provided in Section~\ref{sec-discBEPLS} and proofs are postponed to Appendix~\ref{sec-appendBEPLS}.

\section{Extreme Partial Least Squares without shrinkage}
\label{sec-EPLS}

Throughout, $\langle\cdot,\cdot\rangle$ is the Euclidean scalar product on ${\mathbb R}^p$, $\|\cdot\|_2$ is the corresponding quadratic norm and
$S^{p-1}=\{x\in \mathbb{R}^{p}, \|x\|_2=1 \}$ is the associated unit sphere.
Moreover, for any set $\{z_1,\dots,z_n\}$, $z_{1:n}$ denotes the vector  $(z_1^\top,\dots,z_n^\top)^\top$.
Plus, two sequences of random variables $(A_n)$ and $(B_n)$ (where $(B_n)$ is almost surely non-zero) are equivalent in probability if  $A_n/B_n\toP 1$ which is denoted by $A_n \simP B_n$. Also, we write $A_n=o_{\mathbb{P}}(B_n)$ if $A_n/B_n \toP 0$. 

We first recall in Subsection~\ref{par-model} the derivation of the EPLS estimator from a statistical regression model
and, in Subsection~\ref{par-ev}, the extreme-value assumptions necessary to establish its asymptotic properties.
Subsection~\ref{par-vmf} is dedicated to the presentation of the von Mises--Fisher distribution on the sphere and to its
adaptation to hyperballs. Based on these, we then reinterpret the EPLS direction as a maximum likelihood estimator
and derive its asymptotic properties in Subsection~\ref{par-mle}.

\subsection{EPLS model}
\label{par-model}

The following single-index inverse regression model is introduced in~\citet{ExtremePLS2022}:
\begin{enumerate}
    \item [\MM]\label{MM} $X= g(Y) \beta  +  \varepsilon$,
where $\beta \in S^{p-1}$ is the unknown direction which is the parameter of interest, $X$ and $\varepsilon$ are $p$-dimensional random vectors, $Y$ is a real random variable, and $g:{\mathbb R}\to {\mathbb R}$ is an unknown link function. 
\end{enumerate}
Model \ref{MM}~is referred to as an inverse regression model since the covariates $X$ are written as functions of the response variable $Y$, see~\cite{BGG2009,Cook2007} for similar inverse models in the SIR framework.
Under model \ref{MM}, if the distribution tail of $\varepsilon$ is negligible compared to the one of $g(Y)$, then $X \simeq g(Y)\beta$ for large values of $Y$, leading to the approximate single-index forward model $Y\simeq g^{-1}(\langle\beta, X\rangle)$.
Finally, let us stress that no independence assumption is made on $(X,Y,\varepsilon)$. 
Let  $\{(X_1,Y_1),\dots,(X_n,Y_n)\}$ be an $n$ sample with same distribution as $(X,Y)$. 

\begin{definition}[EPLS estimator of the unit direction $\beta$, \citealp{ExtremePLS2022}]\label{def:EPLS}
The EPLS estimator $\hat\beta(y_n)$ of the unit direction $\beta$ is obtained by maximizing with respect to $\beta\in S^{p-1}$ the empirical covariance between $\langle \beta,X\rangle$ and $Y$ conditionally on values of $Y$ larger than $y_n$:
\begin{equation}
\label{eq-opti1}
\hat\beta(y_n)= 
\argmax_{\|\beta\|_2=1}  \langle \beta, \hat v(y_n) \rangle =\frac{\hat v(y_n)}{\|\hat v(y_n)\|_2},
\end{equation}
where, for any threshold $y_n\in \mathbb{R}$, $\hat v(y_n)$ is defined by
\begin{equation}
 \label{eq-vhat}
 \hat v(y_n)= \sum_{i=1}^n X_i \Phi_{i}(y_n,Y_{1:n}),
\end{equation}
with, for all $ i\in\{1,\dots,n\}$,
$$
\Phi_{i}(y_n,Y_{1:n}) =  \frac{1}{n}\left(\hat{\bar F}(y_n) Y_i - \hat m_{Y}(y_n) \right)\mathbf{1}{\{Y_i\geq y_n\}},
$$
 the following first-order empirical moment
$$ 
\hat m_{Y}(y_n) = \frac{1}{n}\sum_{i=1}^n  Y_i \mathbf{1}{\{Y_i\geq y_n\}},
$$
and $\hat{\bar F}$ the empirical survival function of $Y$.
\end{definition}
The asymptotic properties of the EPLS estimator can be established under some assumptions on the distribution tails, described hereafter.

\subsection{Extreme-value framework}
\label{par-ev}

Three assumptions on the link function $g$ and the distribution tail of $Y$ and $\varepsilon$ are considered. They rely on the notion of regularly-varying functions. 
Recall that $\varphi$ is regularly-varying with index $\theta\in\mathbb{R}$ if and only if $\varphi$ is positive and
    $$
    \lim_{y\to\infty} \frac{\varphi(ty)}{\varphi(y)}=  t^{\theta},
    $$
    for all $t>0$. We refer to~\citet{Bing1989} for a detailed account of regular variations.
\begin{enumerate}
    \item [\Au]\label{Au} The density function $f$ of $Y$ is regularly-varying of index $-{1}/{{\gamma_Y}}-1$, with $0<{\gamma_Y}<1$. 
    \item [\Ad]\label{Ad} The link function $g$ is regularly-varying of index $c> 0$ and  $2{\gamma_Y}(c+1)<1$.
    \item [\At]\label{At} There exists $q>1/(c{\gamma_Y} )$ such that $\mathbb{E}(\|\varepsilon\|_2^q)<\infty$.
\end{enumerate} 
Assumption \ref{Au}~implies that the survival function $\bar{F}$ is regularly-varying with index $-1/{\gamma_Y}$, which in turn is equivalent to assuming that the distribution of $Y$ is in the Fr\'echet maximum domain of attraction with positive tail-index ${\gamma_Y}$,
see \citet[Theorem~1.5.8]{Bing1989} and  \citet[Theorem~1.2.1]{Haan2007}. This domain of attraction consists of heavy-tailed distributions, such as Pareto, Burr and Student distributions, see~\citet{beigoesegteu2004} for further examples.
The larger ${\gamma_Y}$ is, the heavier the tail.
The restriction to ${\gamma_Y}<1$ ensures that the first-order moment { $\mathbb{E}(Y\mathbf{1}{\{Y\geq y\}})$ exists for all $y\geq 0$}.
Assumption \ref{Ad}~ensures that the link function $g$  ultimately behaves like a power function.
Combined with \ref{Au}, it implies that  $g(Y)$ is heavy-tailed with tail-index $\gamma_{g(Y)}:=c{\gamma_Y}$.
Finally, \ref{At}~can be interpreted as an assumption on the tail of $\|\varepsilon\|_2$. It is satisfied, for instance, by distributions with exponential-like tails such as Gaussian, Gamma or Weibull distributions. 
More specifically, 
$\mathbb{E}(\|\varepsilon\|^q)<\infty$ implies that the tail-index associated with
$\|\varepsilon\|$ is such that $\gamma_{\|\varepsilon\|}<1/q$.
 Condition~\ref{At} thus imposes that $\gamma_{g(Y)}>\gamma_{\|\varepsilon\|}$, meaning that $g(Y)$ has an heavier right tail than $\|\varepsilon\|$.
Under model~\ref{MM}, the tail behaviors of $|\beta^t X|$ and $\|X\|$ are thus driven by $g(Y)$, {\it i.e.},
$\gamma_{\|X\|}={\gamma_{g(Y)}}$,
which is the desired property.
Finally, condition $2{\gamma_Y}(c+1)<1$
implies the existence of { var$(XY\mathbf{1}{\{Y\geq y\}})$ for all $y\geq 0$}.

\subsection{Two von Mises--Fisher distributions}
\label{par-vmf}

The von Mises--Fisher distribution $\vMFS(\mu, \kappa)$ on the unit sphere $S^{p-1}$, $p\geq2$, is defined by its probability density function~\citep{watson1956construction}:
$$
    f_{\vMFS}(x|\mu,\kappa) = c_p(\kappa) \exp\left(\kappa \langle\mu, x\rangle\right) \mathbf{1}\{\|x\|_2=1\},
$$
where $\mu\in S^{p-1}$ is a location parameter and $\kappa \ge 0$ is a concentration parameter. The normalizing constant is given by:
\begin{equation}
    \label{cpk}
    c_p(\kappa)=\frac{\kappa^{p/2 -1}}{(2\pi)^{p/2} I_{p/2-1}(\kappa)} \mbox{ if } \kappa>0 \mbox{ and }
    c_p(0)=\frac{\Gamma(p/2)}{(2\pi)^{p/2}} \mbox{ otherwise},
\end{equation}
where $I_{q}(\cdot)$ is the modified Bessel function of the first kind and order $q\geq0$
defined on $\mathbb{R}_+$ by
\begin{equation}
    \label{besselsum}
  \kappa \mapsto  I_q(\kappa) = \sum_{\ell=0}^\infty \frac{1}{\Gamma(q+\ell+1)\ell!}\left(\frac{\kappa}{2}\right)^{2\ell+q},
\end{equation}
see~\citet[Chapter~9]{abramowitz1965handbook}, with ${\Gamma(\cdot)}$  the Gamma function.
The von Mises--Fisher distribution on the unit sphere is widely used in the analysis of directional data and can be considered as a spherical analogue of the multivariate Gaussian distribution~\citep{mardia1975distribution}.
Let us also recall that, for all $\mu\in S^{p-1}$,  
$\vMFS(\mu, 0)$ is the uniform distribution on the unit sphere (and thus, $c_p(0)$ coincides with the inverse of the sphere surface)
and that $\mu$ is the mode of the $\vMFS(\mu, \kappa)$ distribution for all $\kappa>0$. We propose the following adaptation of this distribution on balls:
\begin{definition}[von Mises--Fisher distribution on the ball]
\label{def-ball}
The von Mises--Fisher distribution $\vMFB(\mu, r,\kappa)$ on the $p$-dimensional ball, $p\geq2$, of radius $r>0$ is defined by its probability density function:
$$
f_{\vMFB}(x|\mu,r,\kappa) = \frac{ 2\pi c_{p+2}(\kappa)}{r^p} \exp\left(\frac{\kappa \langle\mu, x\rangle}{r}\right) \mathbf{1}\{\|x\|_2\leq r\},
$$
where $\mu\in S^{p-1}$ is a location parameter and $\kappa \ge 0$ is a concentration parameter.
\end{definition}
\noindent We refer to Lemma~\ref{vMF/Bdist} in Appendix~\ref{sec-appendBEPLS} for a proof that $f_{\vMFB}(\cdot|\mu,r,\kappa) $ integrates to one.
The next paragraph shows that the $\vMFB$~distribution plays a central role in the interpretation
of the EPLS estimator as a maximum likelihood estimator.

\subsection{Maximum likelihood estimation}
\label{par-mle}

We first prove that the EPLS estimator, initially introduced by maximizing some empirical covariance, can also be interpreted as a maximum likelihood estimator.
It is thus denoted by $\betaml(y_n)$ in the sequel.
\begin{proposition}[EPLS estimator as a maximum likelihood estimator]
 \label{prop-MLE}
 The EPLS estimator of Definition~\ref{def:EPLS} is the maximum likelihood estimator of $\beta$, denoted by $\betaml(y_n)$, in the following model:
 \begin{enumerate}
  \item[(i)]\label{uni} $X_1,\dots,X_n$ are independent and, for all $i\in\{1,\dots,n\}$, $X_i$ given $(Y_{1:n},\varepsilon_i)$
  is $\vMFB(\beta,r_i,\kappa_i)$ distributed, with location parameter $\beta$, radius $r_i=|g(Y_i)| + \|\varepsilon_i\|_2$ and concentration parameter $\kappa_i=\theta_n r_i\Phi_{i}(y_n,Y_{1:n}) $, where $\theta_n>0$ is an arbitrary parameter.
  \item[(ii)]\label{deuxi} $(Y_{1:n},\varepsilon_{1:n})$ is distributed according to some arbitrary density $p(\cdot,\cdot)$ on $\mathbb{R}^n\times\mathbb{R}^{pn}$ that does not depend on $\beta$.
 \end{enumerate}
\end{proposition}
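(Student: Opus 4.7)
The plan is to exploit the factorization of the joint likelihood into a $\beta$-free marginal on $(Y_{1:n},\varepsilon_{1:n})$ and a conditional on $X_{1:n}$, then to show that the log-likelihood of the conditional, after using the explicit form of the $\vMFB$ density and the prescribed expressions for $r_i$ and $\kappa_i$, reduces up to $\beta$-free constants to $\theta_n\langle\beta,\hat v(y_n)\rangle$. Since $\theta_n>0$, Cauchy--Schwarz on the unit sphere then identifies the maximizer with the EPLS estimator of Definition~\ref{def:EPLS}.

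Concretely, I would first write the full likelihood as
$$L(\beta)\;=\;p(Y_{1:n},\varepsilon_{1:n})\,\prod_{i=1}^n f_{\vMFB}(X_i\mid \beta,r_i,\kappa_i),$$
using conditional independence from assumption (i) and the $\beta$-free density from assumption (ii). Taking logarithms and plugging in the density of Definition~\ref{def-ball}, only the exponential factor carries $\beta$, so that
$$\log L(\beta) \;=\; \text{(constants in }\beta)\;+\; \sum_{i=1}^n \frac{\kappa_i}{r_i}\,\langle \beta, X_i\rangle,$$
the normalizing factors $2\pi c_{p+2}(\kappa_i)/r_i^p$ and the indicator terms $\mathbf{1}\{\|X_i\|_2\le r_i\}$ being $\beta$-free (the latter is moreover satisfied under model \ref{MM} by the triangle inequality). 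Substituting $\kappa_i=\theta_n r_i\Phi_i(y_n,Y_{1:n})$ makes the $r_i$ cancel and yields
$$\sum_{i=1}^n \frac{\kappa_i}{r_i}\,\langle \beta, X_i\rangle \;=\; \theta_n\,\Big\langle \beta,\sum_{i=1}^n X_i\,\Phi_i(y_n,Y_{1:n})\Big\rangle \;=\; \theta_n\,\langle \beta, \hat v(y_n)\rangle,$$
by definition~\eqref{eq-vhat} of $\hat v(y_n)$. Maximizing this linear functional over $S^{p-1}$ (with $\theta_n>0$) returns $\hat v(y_n)/\|\hat v(y_n)\|_2$ by Cauchy--Schwarz, which coincides with the EPLS estimator~\eqref{eq-opti1}.

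I do not expect a genuine obstacle here; the argument is essentially a bookkeeping exercise. The one mildly delicate point is that the definition of $\vMFB$ requires $\kappa\ge 0$, whereas $\Phi_i(y_n,Y_{1:n})$ (and hence $\kappa_i$) may be negative for some $i$. This does not affect the argmax computation because the density formula remains a well-defined function of $\beta$ for any real $\kappa_i$ (we are only using it as a likelihood up to $\beta$-free constants), and the $\vMFB$ interpretation can be recovered by flipping the location parameter $\beta\mapsto -\beta$ on the indices with $\Phi_i<0$. Beyond that, the proof reduces to carefully tracking which quantities depend on $\beta$ and observing the crucial cancellation of $r_i$ produced by the choice $\kappa_i=\theta_n r_i\Phi_i(y_n,Y_{1:n})$.
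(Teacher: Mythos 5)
Your proof is correct and follows essentially the same route as the paper's: the paper simply runs the identification in the opposite direction, rewriting the EPLS objective $\exp(\theta_n\langle\beta,\hat v(y_n)\rangle)$ as the product $\prod_i f_{\vMFB}(X_i\mid\beta,r_i,\kappa_i)$ via the same cancellation of $r_i$ and the same triangle-inequality argument for the indicator. Your remark on the possible negativity of $\kappa_i$ is a legitimate point the paper passes over silently, and your resolution (the $\beta$-dependence of the density is unaffected, and $\vMFB(\mu,r,\kappa)=\vMFB(-\mu,r,-\kappa)$) is sound.
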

\noindent The next proposition provides a consistency result on the EPLS maximum likelihood estimator (Definition~\ref{def:EPLS} and Proposition~\ref{prop-MLE}).
\begin{proposition}[EPLS consistency]
 \label{prop-recall}
 Assume \ref{MM}, \ref{Au}, \ref{Ad}~and \ref{At}~hold. Let $y_n\to\infty$ such that $n\bar F(y_n)\to\infty$ and $n \bar F(y_n)^{1-2/q}/g^2(y_n) \to 0$
 as $n\to\infty$. Then,
$$
\sqrt{n\bar{F}(y_n)}  \left(\betaml(y_n) - \beta \right) \toP 0.
$$
\end{proposition}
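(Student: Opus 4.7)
The plan is to reduce this to a known result. By Proposition~\ref{prop-MLE} the maximum likelihood estimator $\betaml(y_n)$ coincides with the original EPLS estimator $\hat\beta(y_n)$ of Definition~\ref{def:EPLS}, and the consistency of that estimator at the rate $\sqrt{n\bar F(y_n)}$, under the assumptions \ref{MM}, \ref{Au}, \ref{Ad}, \ref{At} and the bandwidth conditions on $y_n$, is essentially the main consistency theorem of \citet{ExtremePLS2022}. So the proposition can be obtained simply by invoking that result together with Proposition~\ref{prop-MLE}.

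For transparency I would also sketch the underlying argument. Substituting model~\ref{MM} into~\eqref{eq-vhat} gives the decomposition
$$
\hat v(y_n) \;=\; S_n\,\beta \;+\; R_n, \qquad S_n=\sum_{i=1}^n g(Y_i)\,\Phi_i(y_n,Y_{1:n}),\qquad R_n=\sum_{i=1}^n \varepsilon_i\,\Phi_i(y_n,Y_{1:n}).
$$
Splitting $R_n=\langle R_n,\beta\rangle\beta + R_n^\perp$ and using that $\betaml(y_n)=\hat v(y_n)/\|\hat v(y_n)\|_2$ lives on $S^{p-1}$, an elementary unit-sphere expansion shows
$$
\bigl\|\betaml(y_n)-\beta\bigr\|_2 \;\lesssim\; \frac{\|R_n^\perp\|_2}{|S_n|} \;\leq\; \frac{\|R_n\|_2}{|S_n|},
$$
whenever $|S_n|$ dominates $\|R_n\|_2$, the component of the error along $\beta$ being of second order by the $\|\cdot\|_2=1$ constraint. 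It therefore suffices to prove that $\sqrt{n\bar F(y_n)}\,\|R_n\|_2/|S_n|\toP 0$.

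To compare the two, I would identify the order of $S_n$ through Karamata-type computations on the moments $\mathbb{E}[g(Y)Y\mathbf 1\{Y\geq y_n\}]$, $\mathbb{E}[g(Y)\mathbf 1\{Y\geq y_n\}]$ and $\mathbb{E}[Y\mathbf 1\{Y\geq y_n\}]$ (all controlled by \ref{Au} and \ref{Ad}, with $2\gamma_Y(c+1)<1$ ensuring existence of the relevant second moment and hence the fluctuation rate of $S_n$ by a law-of-large-numbers argument on the exceedances). This yields $|S_n|\simP$ a deterministic quantity proportional to $g(y_n)\bar F(y_n)^2 y_n$ when $n\bar F(y_n)\to\infty$. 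For $R_n$, I would use \ref{At} and Markov's (or Hölder's) inequality applied to $\|R_n\|_2$: since $\Phi_i(y_n,Y_{1:n})$ is supported on $\{Y_i\geq y_n\}$ and $\mathbb{E}(\|\varepsilon\|_2^q)<\infty$, this gives a bound of the form $\|R_n\|_2=\bigOP\bigl(\bar F(y_n)^{1-1/q}\cdot(\text{rate involving }y_n,\bar F(y_n))\bigr)$. Plugging both rates in, the technical condition $n\bar F(y_n)^{1-2/q}/g^2(y_n)\to 0$ is exactly what one needs for $\sqrt{n\bar F(y_n)}\,\|R_n\|_2/|S_n|\toP 0$.

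The main obstacle is the control of the residual $R_n$: model~\ref{MM} assumes no independence between $\varepsilon$ and $Y$, so one cannot rely on the usual second-moment/variance computations; the bound must instead go through the moment condition \ref{At} and the precise tail interplay between $g(Y)$ and $\|\varepsilon\|_2$ encoded by the comparison $\gamma_{\|\varepsilon\|}<1/q<c\gamma_Y=\gamma_{g(Y)}$.
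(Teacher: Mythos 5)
Your proposal is correct and matches the paper's route: the proof there likewise reduces to Theorem~1 of \citet{ExtremePLS2022}, which gives $\sqrt{n\bar F(y_n)}\,\bigl(\hat v(y_n)/\|v(y_n)\|_2-\beta\bigr)\tod\xi\beta$ with $\xi$ a centred Gaussian variable and $v(y_n)$ the population analogue of $\hat v(y_n)$, and then applies the renormalization Lemma~\ref{lem-renormalised-sequence} to pass to $\betaml(y_n)=\hat v(y_n)/\|\hat v(y_n)\|_2$, under which the limit is projected to $P_\beta^\perp(\xi\beta)=0$. A bare invocation of the cited theorem would not quite suffice, since its limit is non-degenerate and directed along $\beta$; but your ``unit-sphere expansion'' observation that the error component along $\beta$ is second order is exactly the content of that projection lemma, and your further decomposition $\hat v(y_n)=S_n\beta+R_n$ together with the Karamata/H\"older rate comparison reconstructs the internals of the cited theorem rather than anything the present paper re-proves.
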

\noindent We refer to~\citet{ExtremePLS2022} for a discussion of the assumptions on the $(y_n)$ sequence. Let us simply note that the associated rate of convergence is faster than $\sqrt{n\bar F(y_n)}$. Even though the exact rate is not available there, this result will reveal sufficient for deriving the exact rates of convergence associated with the shrunk estimators, see Proposition~\ref{prop-asymp-conj} and Proposition~\ref{prop-asymp-sparse} hereafter.

\section{Shrinkage for Extreme Partial Least Squares}
\label{sec-BEPLS}

The result of item (i) in Proposition~\ref{prop-MLE} opens the door to the construction of shrinkage estimators for $\beta$ based on the Bayesian paradigm, referred to as Shrinkage for Extreme Partial Least Squares (SEPaLS) estimators. 
A prior distribution $\pi(\cdot)$
is introduced on the direction parameter $\beta$ and the shrinkage effect
of the maximum a posteriori (MAP) estimator is investigated.
The posterior distribution is established in 
Subsection~\ref{par-post} and MAP estimators are derived for two particular cases of priors, a conjugate one based on the von Mises--Fisher distribution on the sphere in Subsection~\ref{par-conj}, and a sparse one based on the Laplace distribution in Subsection~\ref{par-sparse}.

\subsection{Posterior distribution}
\label{par-post}

Combining Bayes' rule with Proposition~\ref{prop-MLE} 
makes it possible to derive the posterior distribution of $\beta$. See Appendix~\ref{sec-appendBEPLS} for a detailed proof.

\begin{proposition}[SEPaLS posterior distribution]
 \label{prop-post}
Let $\theta_n>0$ and $\pi(\cdot)$ a prior distribution on the direction parameter $\beta\in S^{p-1}$.
 Then, under the model~\ref{uni}, \ref{deuxi} of Proposition~\ref{prop-MLE},
 the posterior distribution of $\beta$ is given by
 $$
p(\beta|X_{1:n},Y_{1:n},\varepsilon_{1:n}) \propto 
\pi(\beta) \exp \left(   K_n \langle \beta, \betaml(y_n)\rangle \right),
$$
where we set $K_n:=\theta_n \|\hat v(y_n)\|_2 $.
\end{proposition}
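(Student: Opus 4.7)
The plan is a direct application of Bayes' rule combined with the structural simplifications afforded by model~\ref{uni}--\ref{deuxi} and the explicit form of the $\vMFB$ density.

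First, I would write Bayes' rule as
$$
p(\beta\mid X_{1:n},Y_{1:n},\varepsilon_{1:n})\propto p(X_{1:n}\mid \beta,Y_{1:n},\varepsilon_{1:n})\,\pi(\beta),
$$
using item (ii) of Proposition~\ref{prop-MLE} to discard the factor $p(Y_{1:n},\varepsilon_{1:n})$, which is free of $\beta$. By the conditional independence stated in item~(i), the likelihood factorizes as $\prod_{i=1}^n f_{\vMFB}(X_i\mid \beta,r_i,\kappa_i)$.

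Next, I would substitute the density from Definition~\ref{def-ball}. For each $i$, the prefactor $2\pi c_{p+2}(\kappa_i)/r_i^p$ and the indicator $\mathbf{1}\{\|X_i\|_2\le r_i\}$ are free of $\beta$ (note that $r_i$ depends only on $Y_i$ and $\varepsilon_i$, and $\kappa_i$ on $Y_{1:n}$ and $\varepsilon_i$), so both may be absorbed in the proportionality symbol. This leaves
$$
p(\beta\mid X_{1:n},Y_{1:n},\varepsilon_{1:n})\propto \pi(\beta)\,\exp\!\left(\sum_{i=1}^n\frac{\kappa_i}{r_i}\langle \beta, X_i\rangle\right).
$$
The crucial simplification is the one built into the definition of $\kappa_i$: since $\kappa_i=\theta_n r_i\Phi_i(y_n,Y_{1:n})$, the ratio $\kappa_i/r_i$ reduces to $\theta_n\Phi_i(y_n,Y_{1:n})$, so the exponent collapses to
$$
\theta_n\left\langle \beta,\sum_{i=1}^n X_i\Phi_i(y_n,Y_{1:n})\right\rangle=\theta_n\langle \beta,\hat v(y_n)\rangle,
$$
by the definition~\eqref{eq-vhat} of $\hat v(y_n)$.

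Finally, using the identity $\hat v(y_n)=\|\hat v(y_n)\|_2\,\betaml(y_n)$ from~\eqref{eq-opti1}, the exponent becomes $\theta_n\|\hat v(y_n)\|_2\langle \beta,\betaml(y_n)\rangle=K_n\langle \beta,\betaml(y_n)\rangle$, yielding the claimed posterior. There is no real obstacle here: the only point that requires a moment of attention is checking that every quantity being absorbed into the proportionality constant is genuinely $\beta$-free, in particular the support indicator and the $\kappa_i$-dependent normalizer $c_{p+2}(\kappa_i)$, both of which are functions of $(Y_{1:n},\varepsilon_{1:n})$ alone once the definitions of $r_i$ and $\kappa_i$ are unfolded.
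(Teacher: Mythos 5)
Your proof is correct and follows essentially the same route as the paper's: Bayes' rule, factorization of the $\vMFB$ likelihood, cancellation of $\kappa_i/r_i$ to $\theta_n\Phi_i(y_n,Y_{1:n})$, and the identity $\hat v(y_n)=\|\hat v(y_n)\|_2\,\betaml(y_n)$. Your explicit check that the normalizing constants $c_{p+2}(\kappa_i)$ and the support indicators are $\beta$-free is a welcome point of care that the paper leaves implicit in its use of $\propto$.
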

\noindent The mode of the above posterior distribution is referred to as the SEPaLS estimator in the sequel.
Its existence is ensured as soon as $\pi(\cdot)$ is continuous on $S^{p-1}$, since a continuous function on a compact domain attains its maximum value within that domain.
We focus on the computation of the SEPaLS estimator for two particular choices of $\pi(\cdot)$ described in the next two subsections.

\subsection{Conjugate $\vMFS$ prior}
\label{par-conj}

We first assume a $\vMFS$~prior distribution for the direction $\beta\in S^{p-1}$, with location parameter $\mu_0 \in S^{p-1}$ and concentration parameter $\kappa_0 \geq 0$. The unit vector $\mu_0$ can be interpreted as
a prior on $\beta$ while $\kappa_0$ is the confidence level on this prior.
A graphical representation in dimension $p=3$ of the density isocontours associated with this distribution is provided on the top of Figure~\ref{Fig-sphereConjugate} for $\mu_0=(1,0,0)^\top$ and  $\kappa_0\in\{0,1,10\}$. On the leftmost panel, the density is uniform on the unit sphere, and it becomes more peaked around $(1,0,0)^\top$ as $\kappa_0$ increases.
Proposition~\ref{prop-post}
entails that the posterior distribution is written for any $\beta\in S^{p-1}$ as:
$$
p(\beta|X_{1:n},Y_{1:n},\varepsilon_{1:n}) \propto  \exp\left(\langle\beta,    K_n \betaml(y_n) + \kappa_0 \mu_0  \rangle \right),
$$
which is still a $\vMFS$~distribution.
As expected, since the von Mises--Fisher distribution belongs to the exponential family, 
considering the associated conjugate prior for $\beta$ yields a posterior distribution of the same type~\citep{nunez2005bayesian,taghia2014bayesian}.
The following proposition is easily derived.
\begin{proposition}[MAP with conjugate prior]
\label{prop:conj}
Let $\theta_n>0$, $K_n:=\theta_n \|\hat v(y_n)\|_2 $ and set $\pi:= \vMFS(\mu_0,\kappa_0)$, with $\mu_0 \in S^{p-1}$ and $\kappa_0 \geq 0$, as prior distribution on $\beta$.
 Then, under the model~\ref{uni}, \ref{deuxi} of Proposition~\ref{prop-MLE},
 the posterior distribution of $\beta$ is given by
 $$
\beta| X_{1:n},Y_{1:n},\varepsilon_{1:n} \sim \vMFS(\mu_n,\kappa_n),
 $$
  with location parameter $\mu_n$ equal to the MAP estimator,
$$
 \mu_n=\betacmap(y_n)=  \frac{ K_n \betaml(y_n) + \kappa_0 \mu_0} {\| K_n\betaml(y_n) + \kappa_0 \mu_0 \|_2},
$$
and concentration parameter $\kappa_n= \| K_n \betaml(y_n) + \kappa_0 \mu_0\|_2$. 
\end{proposition}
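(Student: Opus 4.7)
The plan is to apply Proposition~\ref{prop-post} directly and exploit the conjugacy of the von Mises--Fisher family. First, I would write out the posterior density from Proposition~\ref{prop-post}, substituting the explicit form of the $\vMFS(\mu_0,\kappa_0)$ prior:
\[
\pi(\beta)=c_p(\kappa_0)\exp\!\bigl(\kappa_0\langle \mu_0,\beta\rangle\bigr)\mathbf{1}\{\|\beta\|_2=1\},
\]
so that, up to a multiplicative constant that does not depend on $\beta$,
\[
p(\beta\mid X_{1:n},Y_{1:n},\varepsilon_{1:n})
\;\propto\;
\exp\!\bigl(K_n\langle \beta,\betaml(y_n)\rangle+\kappa_0\langle \mu_0,\beta\rangle\bigr)\mathbf{1}\{\|\beta\|_2=1\}.
\]

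Next I would use the linearity of the inner product to collect the two exponential terms into a single scalar product $\langle\beta, K_n\betaml(y_n)+\kappa_0\mu_0\rangle$, and then factor out the Euclidean norm of $w_n:=K_n\betaml(y_n)+\kappa_0\mu_0$ by writing $w_n=\kappa_n\mu_n$ with $\kappa_n:=\|w_n\|_2$ and $\mu_n:=w_n/\kappa_n\in S^{p-1}$ (assuming $w_n\neq 0$; this is automatic when $K_n>0$ and $\kappa_0>0$, and trivial to handle in the degenerate edge cases). The density then takes the form $\exp(\kappa_n\langle \mu_n,\beta\rangle)\mathbf{1}\{\|\beta\|_2=1\}$, which, once renormalized by $c_p(\kappa_n)$, is exactly the density of $\vMFS(\mu_n,\kappa_n)$. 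Uniqueness of probability densities on $S^{p-1}$ identifies the normalizing constant, completing the distributional claim.

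Finally, to identify $\mu_n$ as the MAP estimator, I would invoke the fact already recalled in Subsection~\ref{par-vmf} that for $\kappa_n>0$ the $\vMFS(\mu_n,\kappa_n)$ density is strictly maximized (on $S^{p-1}$) at the location parameter $\mu_n$, as a direct consequence of the Cauchy--Schwarz inequality applied to $\langle \mu_n,\beta\rangle\leq\|\mu_n\|_2\|\beta\|_2=1$ with equality if and only if $\beta=\mu_n$.

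There is no real obstacle here: the argument is a one-line conjugacy computation, and the only point requiring minor care is the edge case $w_n=0$, which occurs only when $K_n\betaml(y_n)=-\kappa_0\mu_0$; in that situation the posterior reduces to the uniform distribution on $S^{p-1}$ and the MAP formula should be interpreted as undefined (every direction being a mode), a convention consistent with the limiting behavior of the closed-form expression.
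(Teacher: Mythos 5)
Your proposal is correct and follows essentially the same route as the paper: combine Proposition~\ref{prop-post} with the $\vMFS(\mu_0,\kappa_0)$ prior, collect the two linear terms in the exponent into $\langle\beta, K_n\betaml(y_n)+\kappa_0\mu_0\rangle$, and read off a $\vMFS(\mu_n,\kappa_n)$ posterior whose mode is its location parameter. Your explicit treatment of the degenerate case $K_n\betaml(y_n)+\kappa_0\mu_0=0$ is a small addition the paper leaves implicit, but otherwise the arguments coincide.
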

\noindent In this conjugate framework, the computation of the MAP estimator is 
straightforward since the mode of the $\vMFS$~distribution coincides with the location parameter: $\betacmap(y_n)$ is
a linear combination of the prior direction $\mu_0$ with the EPLS estimator $\betaml(y_n)$. 
Letting $\kappa_0\to\infty$ yields $\betacmap(y_n)\to\mu_0$, the EPLS estimator is shrunk towards the prior direction.
In contrast, setting $\kappa_0=0$ amounts to assuming a uniform prior distribution for the direction $\beta$ and we thus recover the EPLS framework.  
This behavior is illustrated on the bottom panel of Figure~\ref{Fig-sphereConjugate}
 with $\betaml \propto (3/2,-1, 1/2)^\top$ and $K_n=1$.

We show in the next proposition that a similar situation arises when $K_n \simP c \sqrt{n \bar F(y_n)}\to\infty$
(where $c>0$) and the rate of convergence of $\betacmap(y_n)$ to $\beta$ is provided.

\begin{proposition}[MAP consistency under conjugate prior]
 \label{prop-asymp-conj}
Under the assumptions of Proposition~\ref{prop-recall}, let $c>0$ and
 $$
 \theta_n \simP c \sqrt{n \bar F(y_n)} / \|\hat v(y_n)\|_2,
 $$
 as $n\to\infty$,
then, 
 $$
 \sqrt{n\bar{F}(y_n)}   \left(\betacmap(y_n)-\beta\right) \toP (\kappa_0/c) \, P_{\beta}^\perp(\mu_0),
 $$
 where $P_{\beta}^\perp(\mu_0):=\mu_0-\langle\mu_0,\beta\rangle\beta$ denotes the projection of $\mu_0$ on the hyperplane orthogonal to $\beta$.
\end{proposition}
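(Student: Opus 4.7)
The plan is to decompose $\betacmap(y_n)$ around $\betaml(y_n)$, extract the prior contribution at the correct rate $1/\sqrt{n\bar F(y_n)}$, and then linearize the spherical normalization to isolate the projection onto the hyperplane orthogonal to $\beta$. Write $a_n := \sqrt{n\bar F(y_n)}$, which diverges under the hypotheses of Proposition~\ref{prop-recall}. The assumption $\theta_n \simP c\,a_n/\|\hat v(y_n)\|_2$ rewrites as $K_n \simP c\,a_n$, and in particular $a_n \kappa_0/K_n \toP \kappa_0/c$. Dividing numerator and denominator in the closed form for $\betacmap(y_n)$ provided by Proposition~\ref{prop:conj} by $K_n$ yields
$$
\betacmap(y_n) = \frac{\tilde W_n}{\|\tilde W_n\|_2}, \qquad \tilde W_n := \betaml(y_n) + \frac{\kappa_0}{K_n}\mu_0.
$$

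Setting $\Delta_n := \tilde W_n - \beta$, Proposition~\ref{prop-recall} gives $a_n(\betaml(y_n)-\beta) \toP 0$, and combining with the previous limit yields
$$
a_n \Delta_n = a_n\bigl(\betaml(y_n)-\beta\bigr) + \frac{a_n\kappa_0}{K_n}\mu_0 \toP \frac{\kappa_0}{c}\,\mu_0,
$$
so in particular $\|\Delta_n\|_2 = \bigOP(a_n^{-1})$. This is the stochastic input that will drive the rest of the argument.

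Finally, I would linearize the normalization around the unit vector $\beta$. From $\|\tilde W_n\|_2^2 = 1 + 2\langle\beta,\Delta_n\rangle + \|\Delta_n\|_2^2$ a first-order Taylor expansion of the square root gives $\|\tilde W_n\|_2 = 1 + \langle\beta,\Delta_n\rangle + \bigOP(a_n^{-2})$, whence the reciprocal is $1 - \langle\beta,\Delta_n\rangle + \bigOP(a_n^{-2})$. Multiplying out with $\tilde W_n = \beta + \Delta_n$ and collecting, the $\beta$-aligned part of $\Delta_n$ is absorbed by the renormalization:
$$
\betacmap(y_n) - \beta = \Delta_n - \langle\beta,\Delta_n\rangle\beta + \bigOP(a_n^{-2}) = P_\beta^\perp(\Delta_n) + \bigOP(a_n^{-2}).
$$
Multiplying by $a_n$ and using linearity and continuity of $P_\beta^\perp$ together with the previous paragraph, the leading term converges to $(\kappa_0/c)\,P_\beta^\perp(\mu_0)$ while the remainder is $\bigOP(a_n^{-1}) = o_{\mathbb{P}}(1)$.

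The main obstacle I expect is bookkeeping the layers of in-probability approximation cleanly: the stochastic equivalence $K_n \simP c\,a_n$ must be combined by Slutsky-type arguments with the faster-than-$a_n^{-1}$ convergence of $\betaml(y_n)$ given by Proposition~\ref{prop-recall}, and one has to verify that the $\bigOP(a_n^{-2})$ remainder coming from the spherical Taylor expansion truly remains $o_{\mathbb{P}}(a_n^{-1})$ after multiplication by $a_n$. Once $\Delta_n = \bigOP(a_n^{-1})$ is in hand, the underlying geometric content — that a small perturbation of a unit vector contributes only its $\beta^\perp$-component to first order after reprojection on the sphere — is elementary.
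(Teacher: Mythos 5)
Your proof is correct and follows essentially the same route as the paper's: both expand $\betacmap(y_n)=(\beta+\Delta_n)/\|\beta+\Delta_n\|_2$ to first order in the perturbation and observe that the renormalization kills the $\beta$-aligned component, leaving $P_\beta^\perp$ applied to the limit $(\kappa_0/c)\mu_0$ of $a_n\Delta_n$. The only (cosmetic) difference is that you bundle the estimation error and the prior pull into a single $\Delta_n=\bigOP(a_n^{-1})$ — which in effect re-applies the paper's Lemma~\ref{lem-renormalised-sequence} with a nonzero limit — whereas the paper tracks the two terms separately with $o_{\mathbb{P}}(\sigma_n)+o_{\mathbb{P}}(1/K_n)$ remainders.
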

It appears that $\betacmap(y_n)$
converges to $\beta$ at the $\sqrt{n\bar{F}(y_n)}$ rate 
which is the classical convergence rate of most of extreme-value estimators since $n\bar{F}(y_n)$
is the effective number of tail observations involved in the estimator.
The MAP estimator can however reach a faster convergence rate  when $P_{\beta}^\perp(\mu_0)=0$
{\it i.e.} when $\mu_0=\beta$, meaning that the prior distribution is centred on the true (unknown) direction.

\begin{figure}
\begin{subfigure}{0.45\textwidth}
  \centering
  \includegraphics[width=\textwidth]{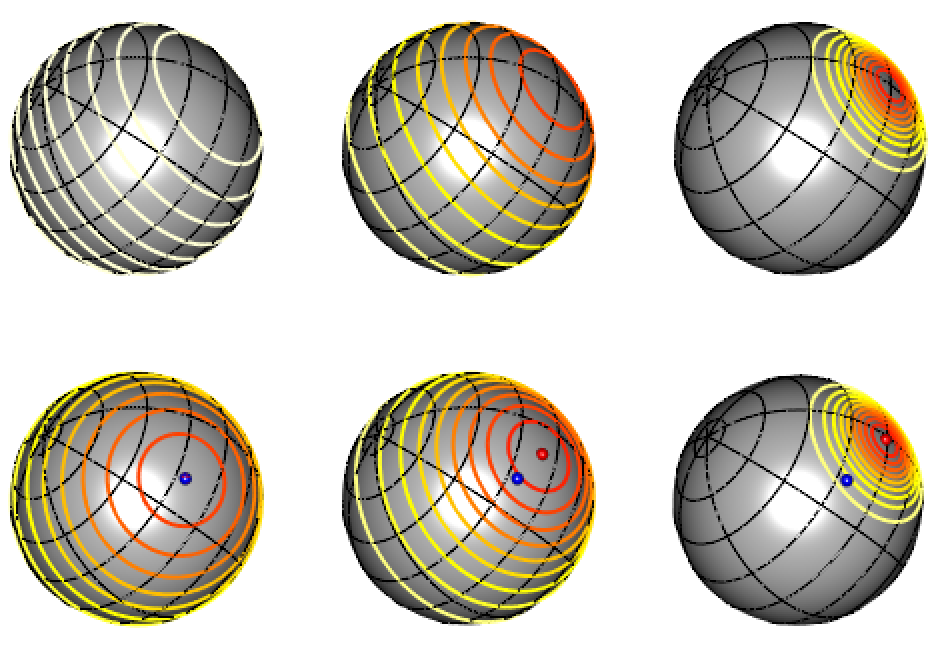}
  $\kappa_0 \approx 0$ \hspace{.12\textwidth} $\kappa_0 = 1$ \hspace{.12\textwidth}  $\kappa_0 = 10$
  \caption{Conjugate $\vMFS(\mu_0,\kappa_0)$ prior.}
  \label{Fig-sphereConjugate}
\end{subfigure}%
\hspace{.08\textwidth}
\begin{subfigure}{0.45\textwidth}
  \centering
  \includegraphics[width=\textwidth]{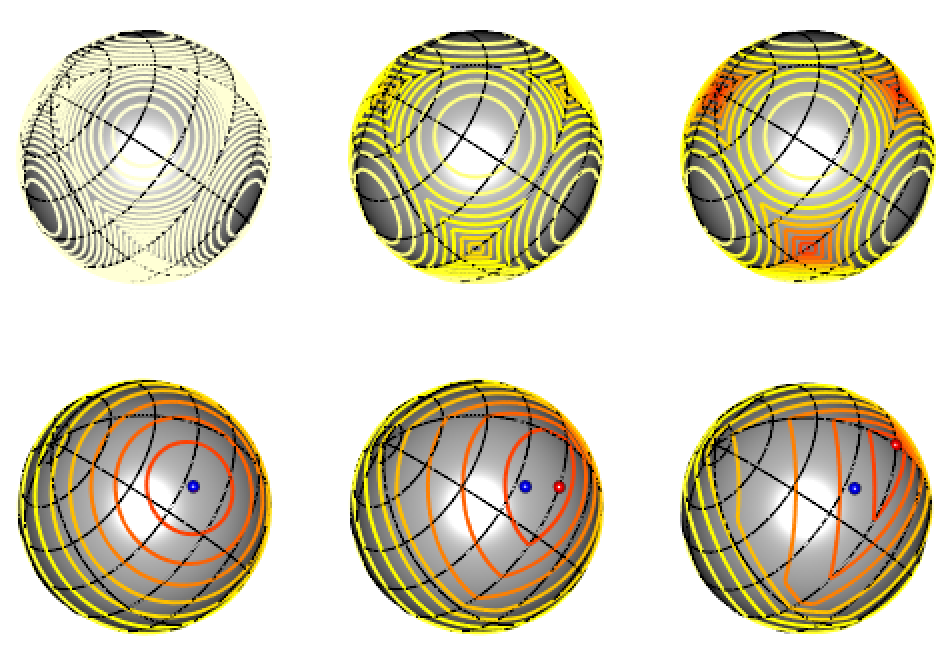}
  $\lambda \approx 0$ \hspace{.12\textwidth} $\lambda = 0.3$ \hspace{.12\textwidth}  $\lambda = 0.6$
  \caption{Sparse Laplace$(\lambda)$ prior.}
  \label{Fig-sphereSparse}
\end{subfigure}
\caption{Isocontour plots of (a) the von Mises--Fisher $\vMFS(\mu_0,\kappa_0)$ and (b) the Laplace$(\lambda)$ prior densities (top) and of the resulting posterior density (bottom) in dimension $p=3$. The estimators  $\betaml$ and $\hat{\beta}_{\text{map}}$ are depicted by blue and red points respectively.}
\label{fig:triplette_en_double}
\end{figure}

\subsection{Sparse Laplace prior}
\label{par-sparse}

The EPLS method can be adapted to take into account the information that only a few covariates in $X$ are useful to explain the extreme values of the response variable $Y$. 
 To this end,
consider a Laplace$(\lambda)$ distribution on the unit sphere: 
\begin{equation}
 \label{eq-laplace}
 \pi(\beta|\lambda)= \frac{1}{b_p(\lambda)}\exp(-\lambda \|\beta\|_1)\mathbf{1}\{\|\beta\|_2=1\}, \mbox{ with }
 b_p(\lambda) = \int_{\|x\|_2=1} \exp(-\lambda \|x\|_1)\dd x
\end{equation}
 as a prior for $\beta\in S^{p-1}$, where $\lambda \geq 0$ is a concentration parameter. We refer to~\citet{tibshirani1996regression}
 for the introduction of the Laplace prior in the regression context and to~\citet{chung2010sparse,vidaurre} for sparse versions of PLS in a non-extreme context.
 A graphical representation of the density isocontours of the Laplace distribution in dimension $p=3$ is provided on the top of Figure~\ref{Fig-sphereSparse} for $\lambda\in\{0,0.3,0.6\}$. On the leftmost panel, the density is nearly uniform on the unit sphere, and it becomes more peaked around the three vertices $(1,0,0)^\top$, $(0,1,0)^\top$ and $(0,0,1)^\top$ as $\lambda$ increases.
 
As a consequence of Proposition~\ref{prop-post}, the posterior distribution can be written as
\begin{equation}
\label{posteriorsparse}
p(\beta|X_{1:n},Y_{1:n},\varepsilon_{1:n}) \propto
 \exp\left(K_n\langle\beta, \betaml(y_n)\rangle -\lambda \|\beta\|_1  \right),
\end{equation}
for any $\beta\in S^{p-1}$. Although this posterior distribution does not correspond to a classical distribution on the unit sphere, the MAP can be computed in closed form:
\begin{proposition}[MAP with sparse prior]
\label{prop:sparse}
Let $\theta_n>0$, $K_n:=\theta_n \|\hat v(y_n)\|_2 $ and set $\pi(\cdot|\lambda)$ as the Laplace prior distribution~\eqref{eq-laplace} on $\beta$.
 Then, under the model \ref{uni}, \ref{deuxi} of Proposition~\ref{prop-MLE},
 the MAP estimator of $\beta$ is:
 $$
  \betasmap(y_n) = \tilde{\beta}(y_n)/{\|\tilde{\beta}(y_n)\|_2}, \mbox{ with } \tilde{\beta}_j(y_n)= S_{\lambda}( K_n\betamlj(y_n)), \; j\in\{1,\dots,p\},
 $$
and where $S_{\lambda}(\cdot)$ is the shrinkage operator defined as $
S_{\lambda}(x)=\sign(x) \left( |x| -\lambda \right)\mathbf{1}{\{|x|>\lambda\}}$, $x\in\mathbb{R}$.
\end{proposition}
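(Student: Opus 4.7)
The plan is to write out the posterior mode problem explicitly and solve it in closed form. By Proposition~\ref{prop-post} with the Laplace prior~\eqref{eq-laplace}, the MAP $\betasmap(y_n)$ is any maximizer on $S^{p-1}$ of
$$L(\beta) = K_n \langle \beta, \betaml(y_n)\rangle - \lambda \|\beta\|_1 = \sum_{j=1}^p \bigl(a_j \beta_j - \lambda |\beta_j|\bigr),$$
where $a_j := K_n \betamlj(y_n)$. Since $L$ is continuous and $S^{p-1}$ is compact, a maximizer exists. I would then solve the problem in two steps: first fix the magnitudes and optimize over the signs, then apply Cauchy--Schwarz to the resulting magnitude problem.

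\textbf{Step 1 (optimal signs).} For each coordinate $j$ and any prescribed magnitude $t=|\beta_j|\geq 0$, the quantity $a_j\beta_j - \lambda|\beta_j|$ equals $\sign(\beta_j)\,a_j\,t - \lambda t$, which is maximized over $\sign(\beta_j)\in\{-1,+1\}$ by $\sign(\beta_j)=\sign(a_j)$, giving the value $(|a_j|-\lambda)\,t$. If $|a_j|>\lambda$, this is strictly positive and monotone in $t$; if $|a_j|\leq\lambda$, it is non-positive, and one can only gain by taking $t=0$. Combining these observations, at any maximizer one has $\beta_j = \sign(a_j)\,|\beta_j|$ when $|a_j|>\lambda$ and $\beta_j=0$ when $|a_j|<\lambda$, so the coordinate contribution equals $|\beta_j|\cdot|S_\lambda(a_j)|$.

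\textbf{Step 2 (optimal magnitudes via Cauchy--Schwarz).} The problem reduces to maximizing $\sum_j |\beta_j|\cdot|S_\lambda(a_j)|$ over non-negative vectors $(|\beta_j|)_j$ with $\sum_j \beta_j^2=1$. Writing $\tilde\beta_j:=S_\lambda(a_j)$, Cauchy--Schwarz yields
$$\sum_{j=1}^p |\beta_j|\cdot|S_\lambda(a_j)| \leq \bigl(\sum_j \beta_j^2\bigr)^{1/2}\bigl(\sum_j \tilde\beta_j^2\bigr)^{1/2} = \|\tilde\beta\|_2,$$
with equality iff $|\beta_j|=|\tilde\beta_j|/\|\tilde\beta\|_2$ for every $j$. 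Combining this with the sign condition of Step~1 (recall $\sign(\tilde\beta_j)=\sign(a_j)$ whenever $\tilde\beta_j\neq 0$) gives the announced $\betasmap(y_n)=\tilde\beta/\|\tilde\beta\|_2$.

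\textbf{Main obstacle / caveat.} The technical content is concentrated in Step~1, where one must carefully argue that the non-convex sign variable can be decoupled from the magnitude before invoking Cauchy--Schwarz. The only delicate remaining point is the degenerate case $\tilde\beta=0$, i.e.\ $K_n\max_j|\betamlj(y_n)|\leq \lambda$, where the penalty dominates everywhere, $L\leq 0$ on $S^{p-1}$, and the normalization $\tilde\beta/\|\tilde\beta\|_2$ is ill-defined; this regime is non-generic and is implicitly excluded, as is customary for soft-thresholding formulas. Apart from it, the argument is entirely elementary and shows that the $\ell_1$-penalty on the sphere decouples across coordinates and reduces, via Cauchy--Schwarz, to LASSO-type soft-thresholding followed by $\ell_2$-normalization.
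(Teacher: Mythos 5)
Your proof is correct and follows essentially the same route as the paper's: optimize the signs coordinatewise, observe that coordinates with $|a_j|\leq\lambda$ must vanish, and solve the remaining linear problem on the sphere by Cauchy--Schwarz (which the paper phrases as ``minimization of projection'' onto the vector of negative terms). Your remark on the degenerate case $K_n\max_j|\betamlj(y_n)|\leq\lambda$ is a fair caveat that the paper's proof also leaves implicit.
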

\noindent The MAP is obtained by shrinking the coordinates of $\betaml(y_n)$  associated with the EPLS estimator towards zero. 
See 
Theorem~3 of \cite{chung2010sparse}
for a similar result in a non-extreme framework.
The zero coordinates in $\betasmap(y_n)$ correspond to covariates in $X$ that have no impact on the extreme values of $Y$.
Note that when the concentration parameter is set to $\lambda=0$, we recover the EPLS method. 
The behavior of the $\betasmap$ estimator is illustrated on the bottom panel of Figure~\ref{Fig-sphereSparse}
with $\betaml \propto (3/2,-1, 1/2)^\top$ and $K_n=1$. When $\lambda$ is small, both estimates $\betaml$ and $\betasmap$ are superimposed. When $\lambda$ increases, $\betasmap$ gets closer and closer to the vertex $(1,0,0)^\top$.

Similarly to the conjugate case, 
when $K_n \simP c \sqrt{n \bar F(y_n)}\to\infty$
(where $c>0$), the rate of convergence of $\betasmap(y_n)$ to $\beta$ can be established.

\begin{proposition}[MAP consistency under sparse prior]
 \label{prop-asymp-sparse}
Under the assumptions of Proposition~\ref{prop-recall}, let $c>0$ and
 $$
 \theta_n \simP c \sqrt{n \bar F(y_n)} / \|\hat v(y_n)\|_2,
 $$
 as $n\to\infty$, then, for all $j\in\{1,\dots,p\}$ such that $\beta_j\neq 0$,
$$
  \sqrt{n\bar{F}(y_n)} 
\left(\betasmapj(y_n)-\beta_j\right) \toP  (\lambda/c) \left(\|\beta\|_1 \beta_j  - \sign(\beta_j) \right) .
 $$
Otherwise, if $\beta_j=0$, then $\betasmapj(y_n)=0$ with probability tending to 1.
\end{proposition}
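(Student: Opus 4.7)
Write $a_n=\sqrt{n\bar F(y_n)}$ and $\delta_n=\betaml(y_n)-\beta$, so Proposition~\ref{prop-recall} gives $a_n\delta_n\toP 0$, i.e.\ $\|\delta_n\|_2=o_{\mathbb{P}}(1/a_n)$. By assumption $K_n=\theta_n\|\hat v(y_n)\|_2\simP c\,a_n\to\infty$. Because $S^{p-1}$ is a sphere, the identity $\|\beta+\delta_n\|_2^2=\|\beta\|_2^2=1$ forces $\langle\beta,\delta_n\rangle=-\tfrac12\|\delta_n\|_2^2=o_{\mathbb{P}}(1/a_n^2)$, a control I will use later.

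The first step is to localize the support of $\tilde\beta(y_n)$. For $j\in J:=\{j:\beta_j\neq0\}$, $K_n\betamlj=K_n\beta_j+K_n\delta_{n,j}$; the second summand is $o_{\mathbb{P}}(1)$ since $K_n=O_{\mathbb{P}}(a_n)$ and $\delta_{n,j}=o_{\mathbb{P}}(1/a_n)$, while the first summand diverges with sign $s_j:=\sign(\beta_j)$. Hence on an event $\Omega_n$ with $\mathbb{P}(\Omega_n)\to1$ one has $\sign(K_n\betamlj)=s_j$ and $|K_n\betamlj|>\lambda$ for every $j\in J$, and $|K_n\betamlj|=|K_n\delta_{n,j}|\leq\lambda$ for every $j\notin J$. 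On $\Omega_n$ the soft-thresholding operator yields
$$
\tilde\beta_j(y_n)=K_n\betamlj-\lambda\,s_j\ \text{ for } j\in J,\qquad \tilde\beta_j(y_n)=0\ \text{ for } j\notin J,
$$
which immediately gives the second half of the proposition (when $\beta_j=0$, $\betasmapj(y_n)=0$ on $\Omega_n$).

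Next I will expand $\|\tilde\beta(y_n)\|_2$ to order $o_{\mathbb{P}}(1)$. On $\Omega_n$,
$$
\|\tilde\beta\|_2^2=K_n^2\sum_{j\in J}\betamlj^2-2K_n\lambda\sum_{j\in J}\betamlj s_j+\lambda^2|J|.
$$
Using $\sum_{j\in J}\betamlj^2=1-\sum_{j\notin J}\delta_{n,j}^2=1+o_{\mathbb{P}}(1/a_n^2)$ (from the sphere identity above) and $\sum_{j\in J}\betamlj s_j=\|\beta\|_1+\langle\delta_n,s\rangle$ with $\langle\delta_n,s\rangle=O_{\mathbb{P}}(\|\delta_n\|_2)$, the cross and remainder terms satisfy $K_n^2\|\delta_n\|_2^2=o_{\mathbb{P}}(1)$ and $K_n\|\delta_n\|_2=o_{\mathbb{P}}(1)$, so
$$
\|\tilde\beta\|_2^2=K_n^2-2K_n\lambda\|\beta\|_1+o_{\mathbb{P}}(K_n),\qquad \|\tilde\beta\|_2=K_n-\lambda\|\beta\|_1+o_{\mathbb{P}}(1).
$$

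The last step is to form the ratio. For $j\in J$, on $\Omega_n$,
$$
\tilde\beta_j-\beta_j\|\tilde\beta\|_2=K_n\delta_{n,j}+\lambda\bigl(\beta_j\|\beta\|_1-s_j\bigr)+o_{\mathbb{P}}(1),
$$
and dividing by $\|\tilde\beta\|_2\simP c\,a_n$ after multiplying by $a_n$,
$$
a_n\bigl(\betasmapj-\beta_j\bigr)=\frac{a_n K_n}{\|\tilde\beta\|_2}\,\delta_{n,j}+\frac{a_n\lambda\bigl(\beta_j\|\beta\|_1-s_j\bigr)}{\|\tilde\beta\|_2}+o_{\mathbb{P}}(1).
$$
The first term is $(1+o_{\mathbb{P}}(1))\,a_n\delta_{n,j}\toP 0$ because $K_n/\|\tilde\beta\|_2\toP 1$, while the second term tends to $(\lambda/c)(\|\beta\|_1\beta_j-s_j)$ since $a_n/\|\tilde\beta\|_2\toP 1/c$. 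This gives the stated limit. The main obstacle is the norm expansion: one must combine the $\sqrt{\,\cdot\,}$ expansion with the sphere constraint $\|\betaml\|_2=1$ to cancel an otherwise-leading $2K_n\langle\beta,\delta_n\rangle$ term and reach the precision $\|\tilde\beta\|_2=K_n-\lambda\|\beta\|_1+o_{\mathbb{P}}(1)$; everything else is bookkeeping on rates.
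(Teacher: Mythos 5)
Your proof is correct and follows essentially the same route as the paper's: use the closed-form soft-thresholding expression of $\betasmap$ together with Proposition~\ref{prop-recall} to localize the support of $\tilde\beta(y_n)$ with probability tending to one, expand $\|\tilde\beta(y_n)\|_2$ to precision $K_n-\lambda\|\beta\|_1+o_{\mathbb{P}}(1)$, and form the ratio using $K_n/\sqrt{n\bar F(y_n)}\toP c$. The only cosmetic difference is that you carry additive remainders (and invoke the sphere identity $\langle\beta,\delta_n\rangle=-\tfrac12\|\delta_n\|_2^2$, which is in fact not needed since $K_n^2\langle\beta,\delta_n\rangle=o_{\mathbb{P}}(K_n)$ already follows from $\delta_n=o_{\mathbb{P}}(1/a_n)$), whereas the paper absorbs the same terms into multiplicative $(1+o_{\mathbb{P}}(1))$ factors.
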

\noindent It appears that the null coordinates of $\beta$ are recovered with large probability thanks to the Laplace prior.
Similarly to the conjugate case,  the MAP estimator
converges to $\beta$ at the usual $\sqrt{n\bar{F}(y_n)}$ rate. The convergence rate is higher when the non-zero coordinates of $\beta$ all coincide: $\beta_j=\sign(\beta_j)/\|\beta\|_1$ for all $j\in\{1,\dots,p\}$ such that $\beta_j\neq 0$.

\section{Illustration on simulated data}
\label{sec-simBEPLS}

\subsection{Experimental design}
\label{sub-design}

The behavior of the SEPaLS estimators $\betacmap$ and $\betasmap$ is illustrated on the regression model~\ref{MM}
with power link function: $t>0\mapsto g(t)=t^c$, $c\in\{1,1/2,1/4\}$.
The output variable $Y$ is distributed from a Pareto distribution with survival function $\bar{F}(y)=({y}/2)^{-1/{\gamma_Y}}$, $y\geq 2$ and with tail-index ${\gamma_Y}=1/5$.
 Each margin  $\varepsilon^{(j)}$, $j\in\{1,\dots,p\}$  of the error $\varepsilon$ is simulated as the absolute value of a $\mathcal{N}(0,\sigma^2)$ random variable and depending on $Y$ using the Clayton copula, an Archimedean copula~\cite[Section~4]{Nelsen2007}, defined for all $(u,v)\in[0,1]^2$ by
$$
C_\theta(u,v)
=
\left(
u^{-\theta}+v^{-\theta}-1
\right)^{-{1}/{\theta}},
$$
where $\theta\geq 0$ is a parameter tuning the dependence between the margins. 
Equivalently, the joint cumulative distribution function of $\varepsilon$ is given for all $x\in{\mathbb R}_+^p$ by
the one-factor model~\citep{krup}:
$$
F_\varepsilon(x)=\int_0^1 \prod_{j=1}^p \frac{\partial C_\theta}{\partial v}(2\Psi(x_j/\sigma)-1,v) \dd v,
$$
where $\Psi$ denotes the cumulative distribution function of the standard Gaussian distribution.
Note that $C_0(u,v)=uv$ represents the independence copula while, as $\theta\to\infty$, $C_\theta(u,v)\to \min(u,v)$ which represents the co-monotonicity copula.
The dependence between the margins is assessed using Kendall's tau $\tau(\theta)=\theta/(\theta+2)\in[0,1)$ and is thus limited to positive values.
We shall also consider the associated rotated copula defined by $\tilde C_\theta(u,v)=v-C_\theta(1-u,v)$ whose Kendall's tau is negative and given by $\tilde \tau(\theta)=-\tau(\theta)\in(-1,0]$, for all $\theta\geq 0$.
Here, $\theta \in\{1/2,8\}$ leads to four possible values of the Kendall's tau: $\{-0.8,-0.2,0.2,0.8\}$.

The standard deviation $\sigma$ is selected such that the signal-to-noise ratio, defined as $g(\bar F^{-1}(1/n))/\sigma$, is equal to 10. Note that $g(\bar F^{-1}(1/n))$ represents the approximate maximum value of $g$ on a $n$-sample from the distribution with associated survival function~$\bar F$. 
 
The sample size is fixed to $n=500$ and two dimensions are considered: $p\in\{30,300\}$. The true direction is 
$\beta=(1,1,0,\dots,0)^\top/\sqrt{2}$ for both dimensions.
  
The location parameter $\mu_0$ of the prior $\vMFS$~distribution (conjugate case) is set either to $\beta$, which corresponds to a perfect prior, or to $\tilde\beta:= (1,\dots,1,0,\dots,0)^\top/\sqrt{p/2}$, which is far from the true one, see Subsection~\ref{par-conj}. Four values of the concentration parameter are investigated: 
$\kappa_0\in \{0,10^{-4},3.10^{-3},10^{-2}\}$.
In the case of the Laplace prior (sparse case), we let
$\lambda\in \{0,10^{-4},5.10^{-4},10^{-3}\}$.
In both situations, we set $\theta_n:=1$ since this parameter
is irrelevant to the inference.

 \subsection{Performance assessment}
 
Let us define a similarity measure $\similarity$  between the theoretical vector $\beta$ and its MAP estimator  computed on $N = 1\ \!000$ replications as follows:
\begin{equation}
\similarity(y)= \frac{1}{N}\sum_{r=1}^{N}\big\langle\betarmap(y), \beta\big\rangle^2, 
 \end{equation}
 where $\betarmap$ denotes the MAP estimate on the $r^\mathrm{th}$ replication under either the conjugate or the sparse prior.
Clearly $\similarity \,\in[0,1]$ and the closer $\similarity$ is to 1, the larger the proximity is. In practice,  $\similarity (Y_{n-k+1,n})$ is computed as a function of the number of exceedances $k\in\{1,\dots,100\}$, 
where $Y_{n-k+1,n}$ denotes the $(n-k+1)^\mathrm{th}$ largest observation from the sample $\{Y_1,\ldots,Y_n\}$.

\newcommand{\legendconjugate}[1]{Finite sample behavior of the SEPaLS estimator computed with the conjugate prior on simulated data in dimension $d=30$ from a Pareto distribution (${\gamma_Y}=1/5,\;a=2$) and a (rotated) Clayton copula with Kendall's tau $\tau\in\{-0.8, -0.2, 0.2, 0.8\}$ (from top to bottom). The power of the link function $g(t)=t^c$ is fixed to $c=#1$. Vertically: $\similarity (Y_{n-k+1,n})$ between $\betacmap$ and $\beta$ for a prior direction $\mu_0=\beta$ (left) or $\mu_0=\tilde\beta$ (right) as a function of the number $k \in \{1,\dots, 100\}$ of exceedances (horizontally). The concentration parameter is {$\kappa_0\in \{0,10^{-4},3.10^{-3},10^{-2}\}$}, respectively in violet, blue, green and yellow. Coloured areas correspond to $90\%$ confidence intervals.
}

\begin{figure}[ht]
\centering
\begin{tabular}{ccc}
    & \multicolumn{2}{c}{Conjugate $\vMFS$ prior and link function $g(t)=t^c$ with $c=1$.}\\
    &  $\mu_0=\beta$ &   $\mu_0=\tilde\beta$\\
    \rotatebox[origin=l]{90}{\hspace{.9cm}$\tau=-0.8$}
    &\includegraphics[trim={1cm 1.5cm 0.5cm 2cm},clip,width=.43\textwidth]{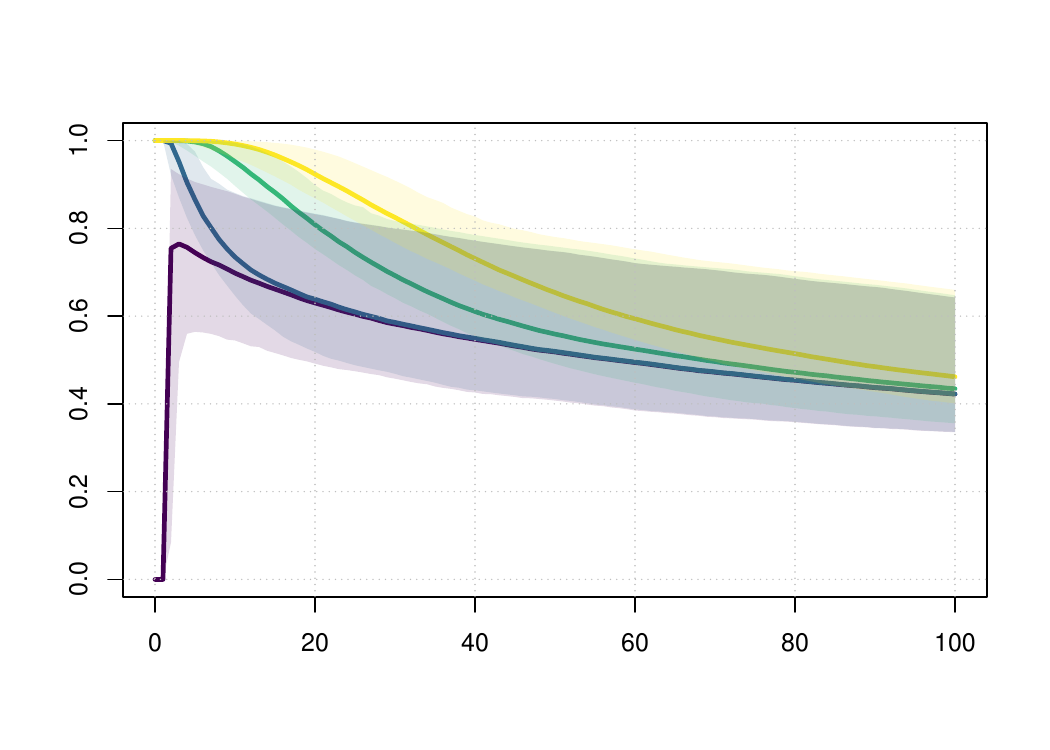}&
    \includegraphics[trim={1cm 1.5cm 0.5cm 2cm},clip,width=.43\textwidth]{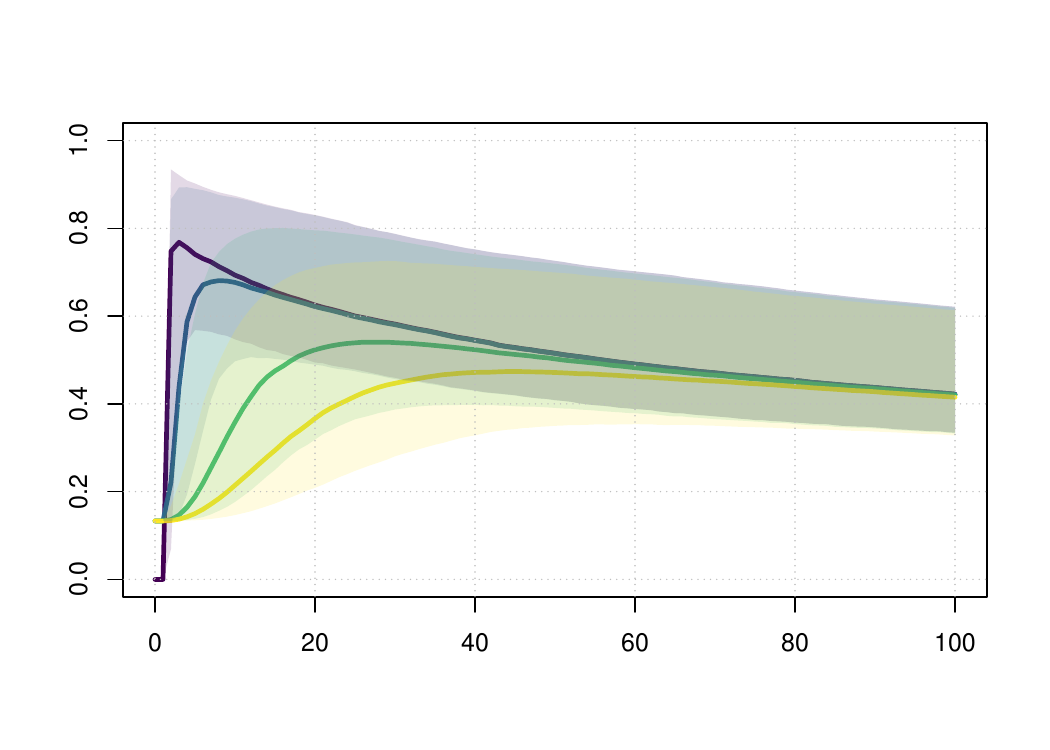}\\
    \rotatebox[origin=l]{90}{\hspace{.9cm}$\tau=-0.2$}
    &\includegraphics[trim={1cm 1.5cm 0.5cm 2cm},clip,width=.43\textwidth]{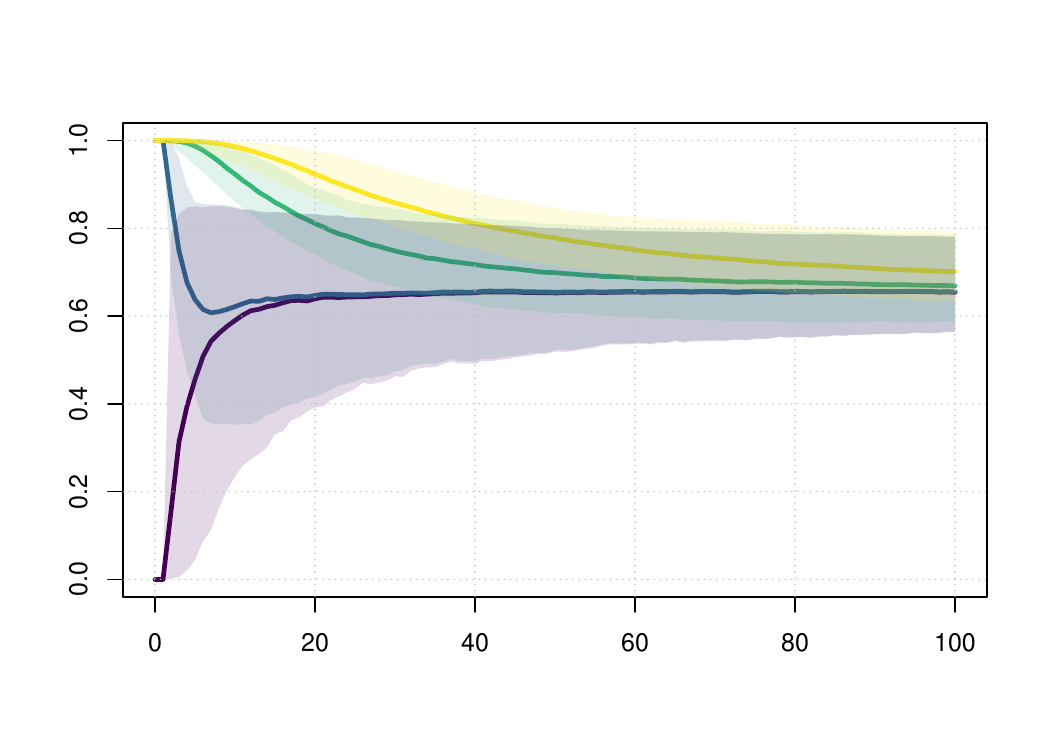}&
    \includegraphics[trim={1cm 1.5cm 0.5cm 2cm},clip,width=.43\textwidth]{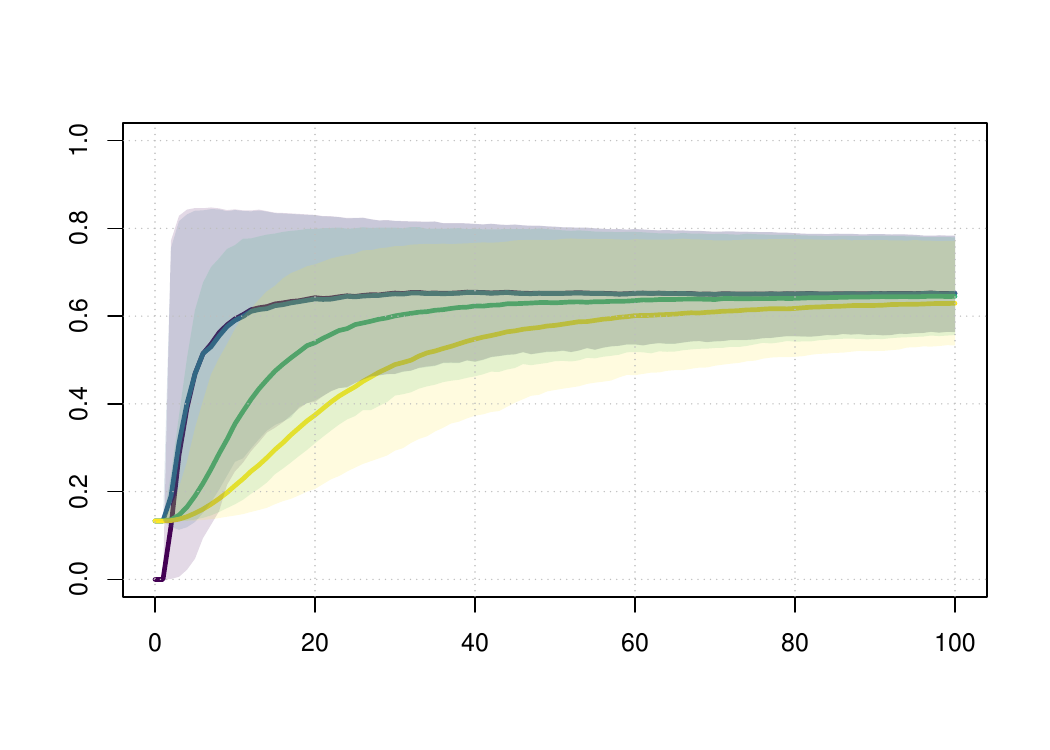}\\
    \rotatebox[origin=l]{90}{\hspace{.9cm}$\tau=0.2$}
    &\includegraphics[trim={1cm 1.5cm 0.5cm 2cm},clip,width=.43\textwidth]{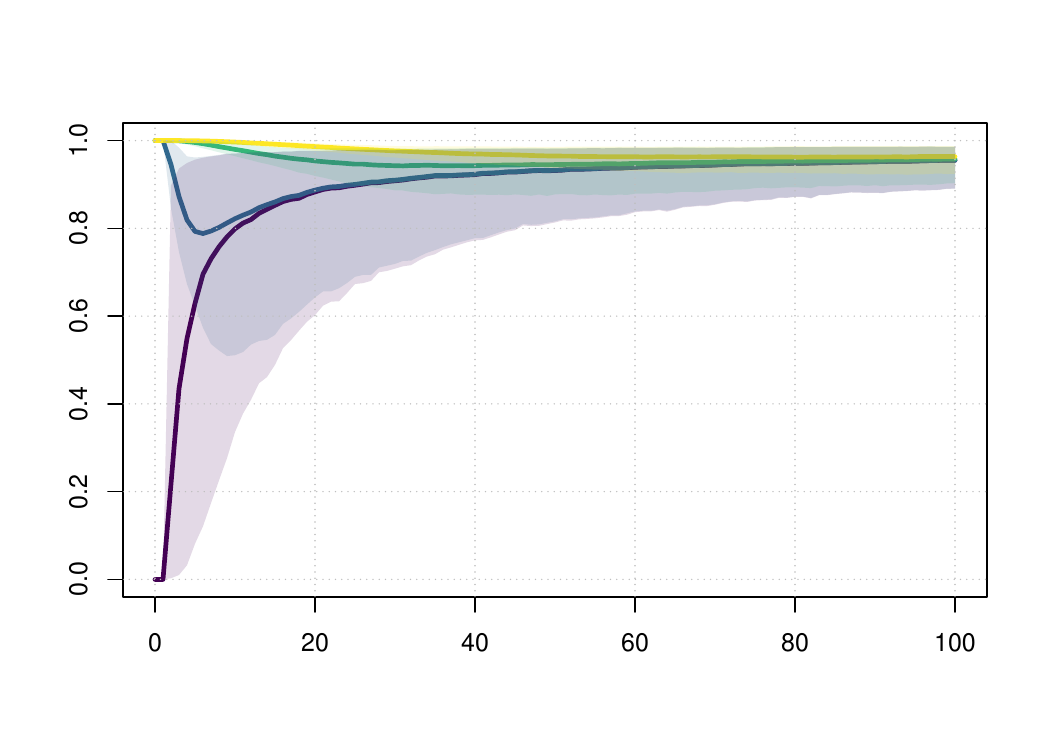}&
    \includegraphics[trim={1cm 1.5cm 0.5cm 2cm},clip,width=.43\textwidth]{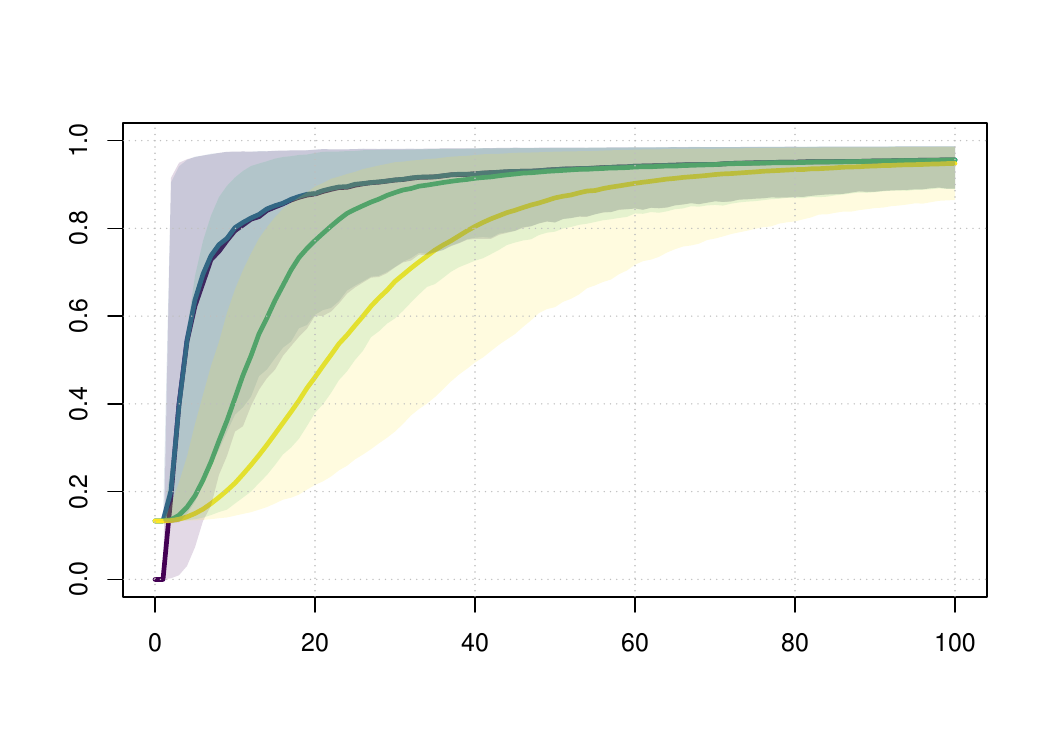}\\
    \rotatebox[origin=l]{90}{\hspace{.9cm}$\tau=0.8$}
    &\includegraphics[trim={1cm 1.5cm 0.5cm 2cm},clip,width=.43\textwidth]{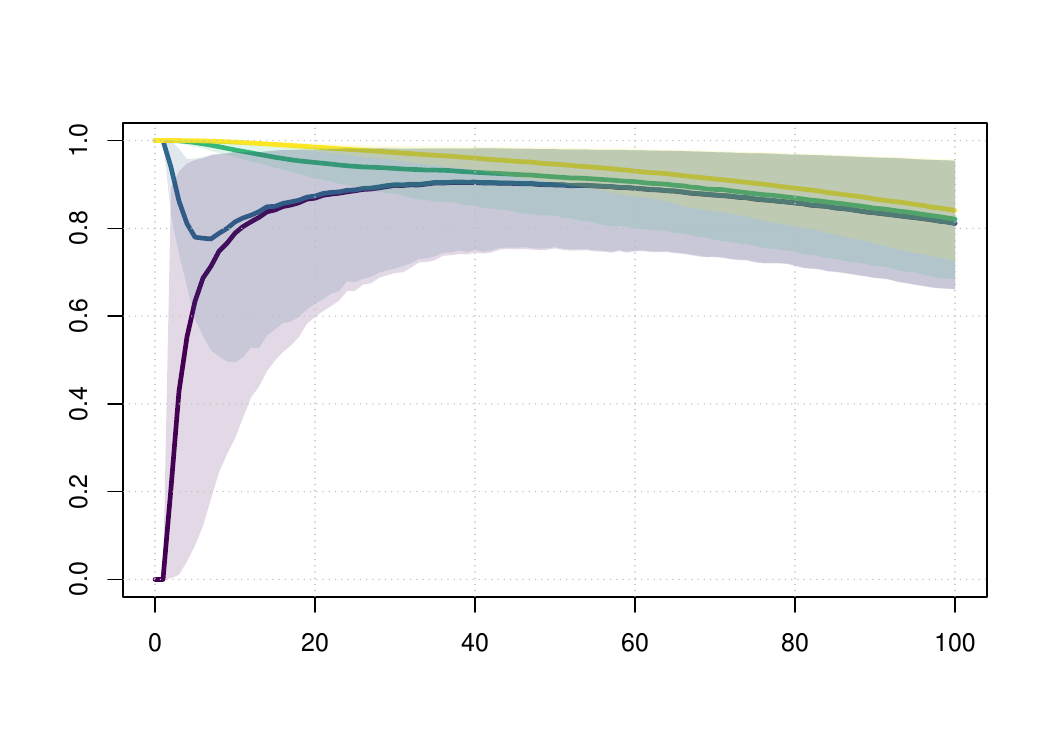}&
    \includegraphics[trim={1cm 1.5cm 0.5cm 2cm},clip,width=.43\textwidth]{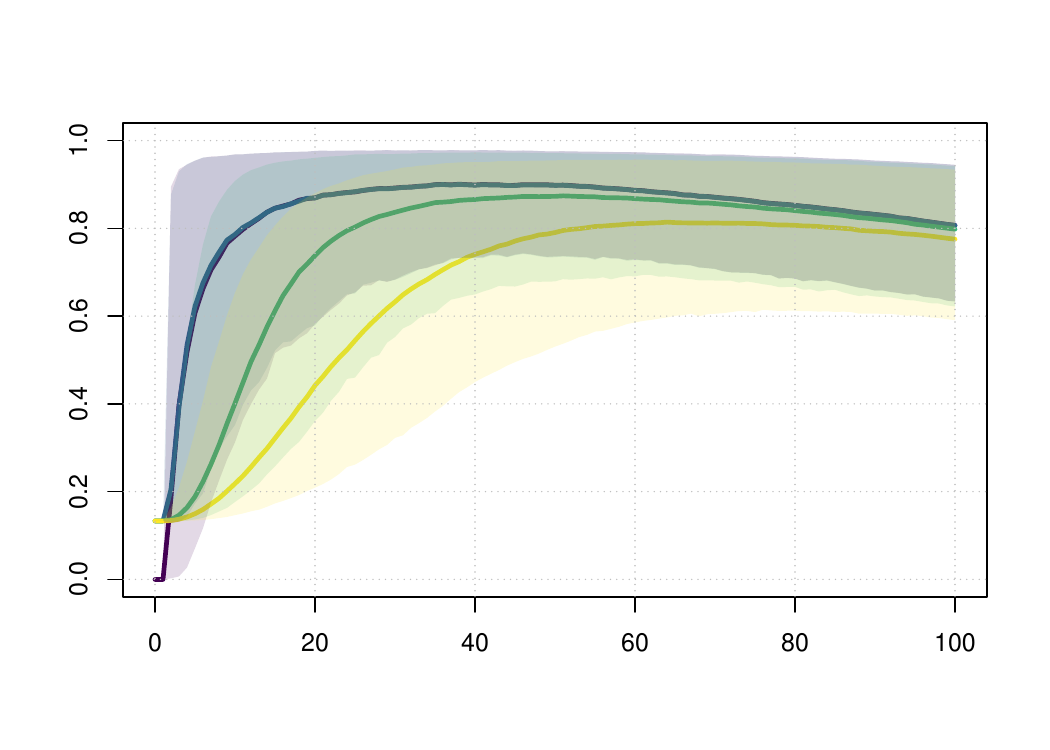}
\end{tabular}
\caption{\legendconjugate{1}}
\label{BEPLSConj1}
\end{figure}

\subsection{Results}

\textbf{Conjugate prior.}   The similarity measure $\similarity (Y_{n-k+1,n})$ between $\betacmap$ and $\beta$ is represented as a function
of $k\in\{1,\dots,100\}$ on Figure~\ref{BEPLSConj1} for the choice of parameter $c=1$. See Figure~\ref{BEPLSConj2} and Figure~\ref{BEPLSConj3} in Appendix~\ref{sec-extra-figures} for the cases $c\in\{1/2,1/4\}$. Each of these figures 
considers 32 configurations in dimension $d=30$:
$\kappa_0\in \{0,10^{-4},3.10^{-3},10^{-2}\}$, $\tau\in\{-0.8,-0.2,0.2,0.8\}$, and $\mu_0\in\{\beta,\tilde\beta\}$,
see Subsection~\ref{sub-design} for details.
Unsurprisingly, when $\mu_0=\beta$ {\it i.e.} when the prior direction points towards the true one, the shrinkage improves the results
of the original EPLS estimator (obtained when $\kappa_0=0$). Moreover, it reduces the sensitivity with respect to the number
of exceedances $k$, the dependence degree $\tau$, and the exponent $c$ of the link function. In all situations, one can obtain $\similarity \,\simeq 1$
with $\kappa_0=10^{-2}$.
In contrast, when $\mu_0=\tilde\beta$, the prior direction is ill-adapted since $\langle \tilde\beta,\beta\rangle^2=4/p\simeq 0.13$
and too large values of $\kappa_0$ deteriorate the EPLS estimator. As expected, the choice of $\mu_0$ is of primary importance in the 
conjugate prior.


\textbf{Sparse prior.}  Similarly, the similarity measure $\similarity (Y_{n-k+1,n})$ between $\betasmap$ and $\beta$ is represented as a function
of $k\in\{1,\dots,100\}$ on Figures~\ref{BEPLSSparse1}--\ref{BEPLSSparse3}  in Appendix~\ref{sec-extra-figures} for the cases $c\in\{1,1/2,1/4\}$. Each of these figures considers 32 configurations:
$\lambda\in \{0,10^{-4},5.10^{-4},10^{-3}\}$, $\tau\in\{-0.8,-0.2,0.2,0.8\}$, $c\in\{1,1/2,1/4\}$ and $d\in\{30,300\}$.
Here, the shrinkage always improves the results of the original EPLS estimator (obtained when $\lambda=0$) since the true direction
$\beta$ is rather sparse, it only has two non-zero coordinates. Enforcing sparsity allows to obtain $\similarity \,\simeq 0.8$ (resp. $\similarity \,\simeq 0.6$)
in dimension $p=30$ (resp. $d=300$) with exponents $c\geq 1/2$.  The case of small exponents $(c=1/4)$ appears to be more complicated, 
the maximum value of $R$ depending on the dimension $p$ and on the dependence degree $\tau$.



\section{Application to real data}
\label{sec-appBEPLS}

The SEPaLS method is illustrated on data extracted from the Farm Accountancy Data Network (FADN)\footnote{Available in French at:\\ \texttt{https://agreste.agriculture.gouv.fr/agreste-web/servicon/I.2/listeTypeServicon/}.}.
This dataset targets French farms described by numerous qualitative and quantitative variables over the period 2000--2015.
Here, we focus on the $n=598$ farms producing field-grown carrots.
The response variable $Y$ is the production of carrots (in quintals) and the covariate $X$ is made of $p=259$ continuous variables
including meteorological and economic measurements.
Our goal is to investigate, among the 259 collected factors, which ones may influence
the upper tail of $Y$, {\it i.e.} are linked to large productions of carrots. 
A similar study could be achieved on the small productions of carrots by focusing on the upper tail of $1/Y$.

Three visual checks are first carried out in Figure~\ref{fig:Histo_hill_real} to verify whether the heavy-tail hypothesis on $Y$ is realistic.
The histogram of the $\{Y_1,\dots,Y_n\}$ on the top left panel is skewed to the right and has a heavy right tail.
Besides, the Hill estimator~\citep{Hill}
$$
 \hat {\gamma}_Y(k) = \frac{1}{k} \sum_{i=1}^{k}  \log(Y_{n-i+1,n}/Y_{n-k,n})
$$
 of the tail-index ${\gamma_Y}$ is drawn on the top right panel as a function of $k\in\{1,\dots,500\}$.
 The resulting graph is stable on the range $k\in\{160,\dots,280\}$ and points towards ${\gamma_Y}\simeq 0.72$.
 Finally, selecting $k=199$ (this choice is discussed below), the associated quantile-quantile plot of the log-excesses $\log(Y_{n-i+1,n}/Y_{n-k,n})$ against the quantiles  $\log(k/i)$
 of the unit exponential distribution, $i\in\{1,\dots,k\}$, exhibits a linear trend (bottom panel) which is further empirical evidence that the heavy-tail assumption is appropriate, see \citet[pp.109--110]{beigoesegteu2004}. 

\begin{figure}
    \centering
    \begin{tabular}{cccc}
        \rotatebox[origin=l]{90}{\hspace{2cm}$\text{Frequency}$} &
        \includegraphics[width=.4\textwidth]{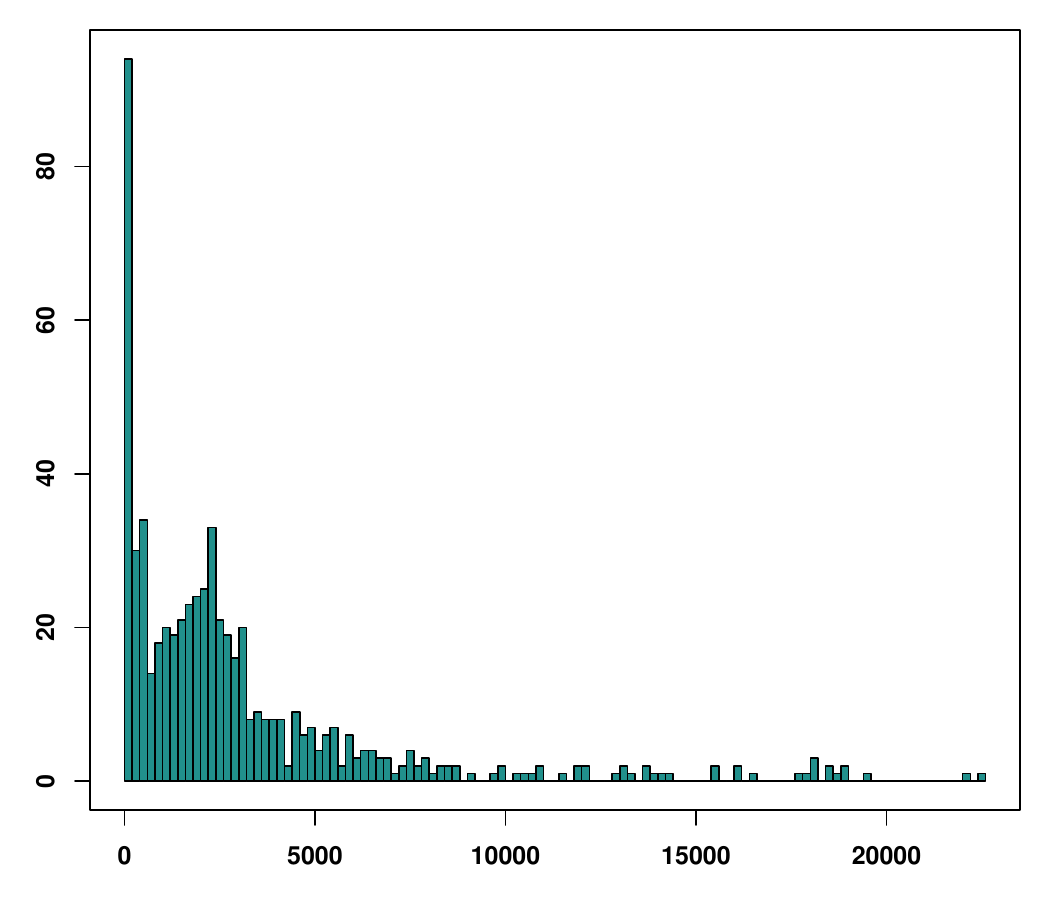} &
        \rotatebox[origin=l]{90}{\hspace{2.2cm}$\hat{\gamma}_Y(k)$} 
        &
        \includegraphics[width=.4\textwidth]{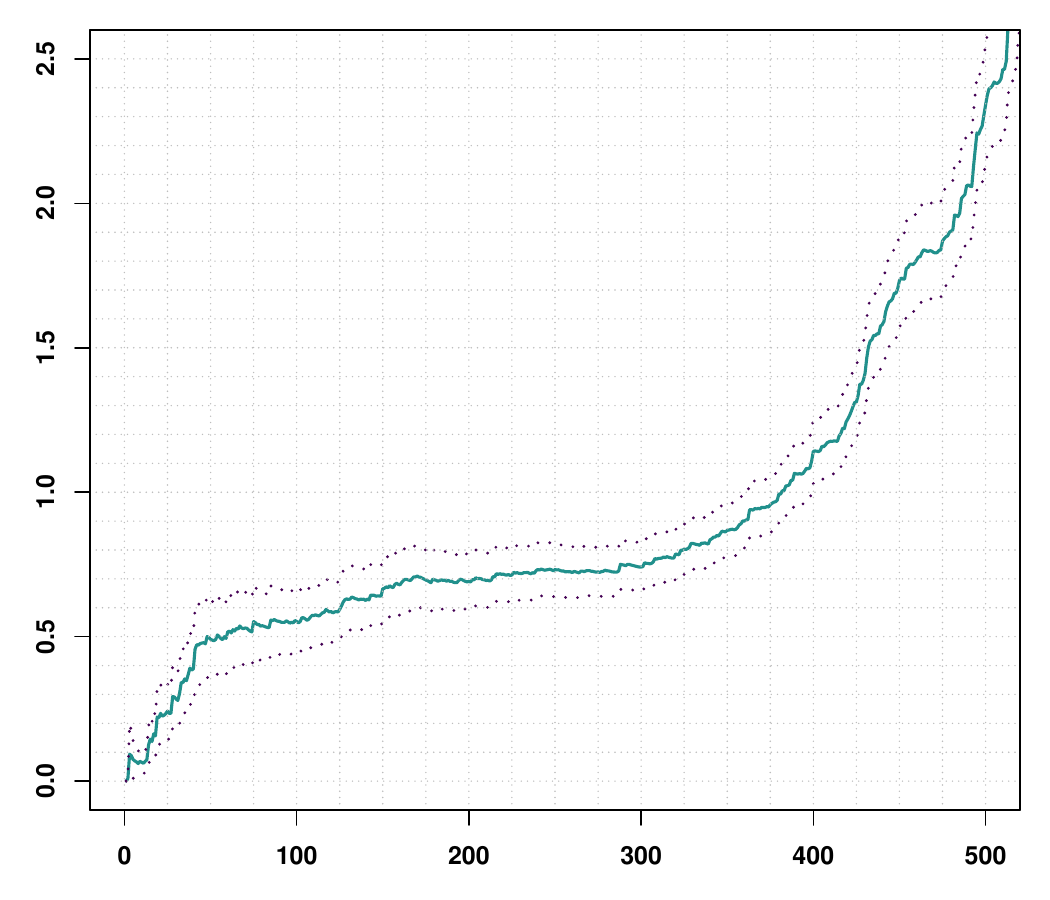}\\
        \rotatebox[origin=l]{90}{\hspace{1cm}$\log(Y_{n-i+1,n}/Y_{n-199,n})$} &
        \includegraphics[trim={1cm 1cm 0cm 0cm},clip,width=.4\textwidth]{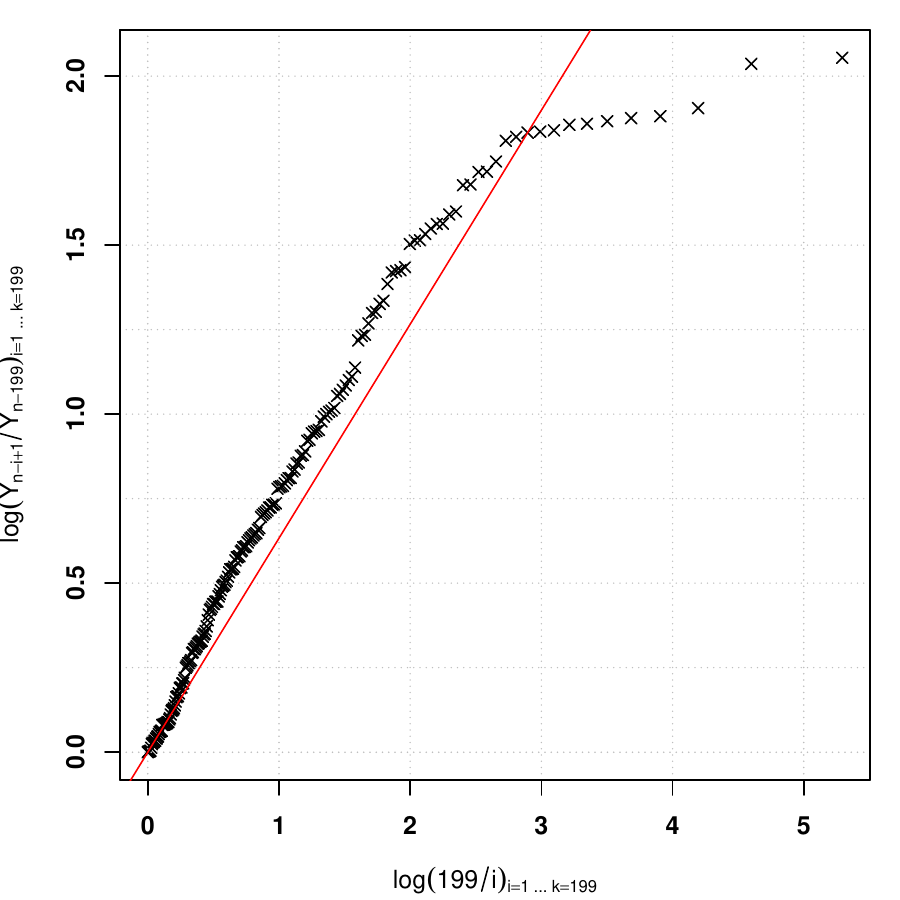} &
        \rotatebox[origin=l]{90}{\hspace{3cm}$Y_i$} &
        \includegraphics[trim={1cm 1cm 0cm 0cm},clip,width=.4\textwidth]{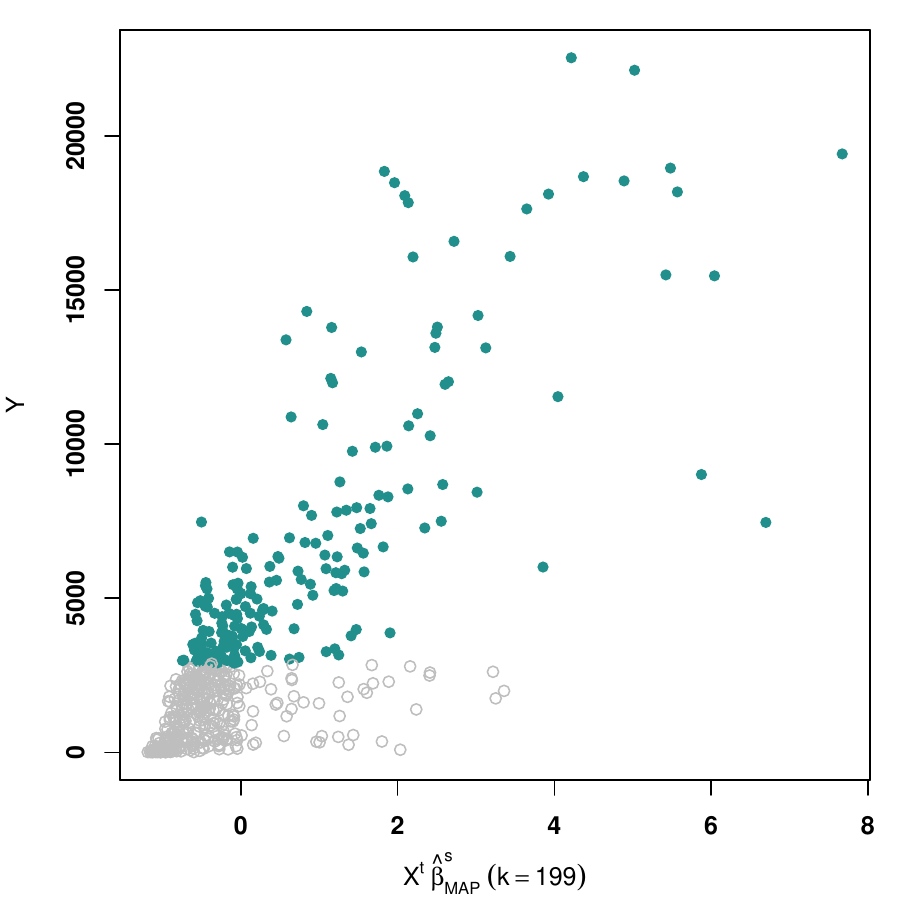}\\
        &
        $\log(199/i)$
        &
        &
        $\langle X_i, \betasmap(Y_{n-199+1,n})\rangle$
    \end{tabular}
    \caption{Real data example. Top left: Histogram of $\{Y_1,\dots,Y_n\}$. Top right: Hill plot $k\in\{1,\dots,500\}\mapsto \hat{\gamma}_Y(k)$ and associated confidence intervals (dotted lines).
    Bottom left: Quantile-quantile plot (horizontally: $\log(k/i)$, vertically: $\log(Y_{n-i+1,n}/Y_{n-k,n})$, 
    for $i \in \{1,\dots , k\}$) drawn with $k=199$, 
    the regression line is superimposed in red.
    Bottom right: Scatter-plot $(\langle X_i, \betasmap(Y_{n-k+1,n})\rangle,Y_i)$, $i\in\{1,\dots,n\}$ with $k=199$ depicted in green. Points below the threshold ($Y_i\leq Y_{n-k+1,n}$) are depicted in gray.}
    \label{fig:Histo_hill_real} 
\end{figure}

In the following, we focus on the sparse estimator $\betasmap$ since the use of $\betacmap$ would require an initial guess for $\beta_0$ which is not obvious in this application context. 
The next two conditional tail correlation measures are introduced to interpret the results obtained with $\betasmap$:
\begin{align}
\label{eqcorY}
\rho( \langle X,\betasmap(y)\rangle ,Y|Y\geq y) &= \frac{{\mbox{cov}}(\langle X,\betasmap(y)\rangle ,Y|Y\geq y)}{\sigma(\langle X, \betasmap(y)\rangle |Y\geq y)\sigma(Y|Y\geq y)},\,\,\text{(see Figure \ref{Fig-corY}),}\\
\label{eqcorX}
\rho( \langle X,\betasmap(y)\rangle ,X^{(j)}|Y\geq y) &= \frac{{\mbox{cov}}(\langle X,\betasmap(y)\rangle,X^{(j)}|Y\geq y)}{\sigma(\langle X,\betasmap(y)\rangle |Y\geq y)\sigma(X^{(j)}|Y\geq y)},\,\,\text{(see Figure \ref{Fig-corX}),}
\end{align}
with $j\in\{1,\dots,p\}$.
The role of the tail correlation measure~\eqref{eqcorY} is to assess the correlation in the tail between the response variable $Y$ and the summary $\langle X,\betasmap(y)\rangle$ of the predictors built by the SEPaLS method. 
It is computed at the threshold $y=Y_{n-k+1,n}$ and plotted on Figure~\ref{Fig-corY} as a function of the number of exceedances $k$ for several levels of shrinkage $\lambda$.
Note that, when $k$ is small, the correlation vanishes for a wide range of $\lambda$ values since, in this case, the prior weight is too large compared to the likelihood one. 
The global maximum is located at $k=199$ which
corresponds to a stable region of the Hill estimator according to Figure~\ref{fig:Histo_hill_real}.
The maximum correlation ($\rho\simeq 0.79$) is reached at $\lambda=353$. 

\begin{figure}[ht]
    \centering
    \begin{subfigure}{0.95\textwidth}
  \centering
  \begin{tabular}{cc}
    \rotatebox[origin=l]{90}{\hspace{2cm}$\scriptstyle{\rho( \langle X, \betasmap(y)\rangle ,Y|Y\geq y)}$}&
    \includegraphics[width=.9\textwidth]{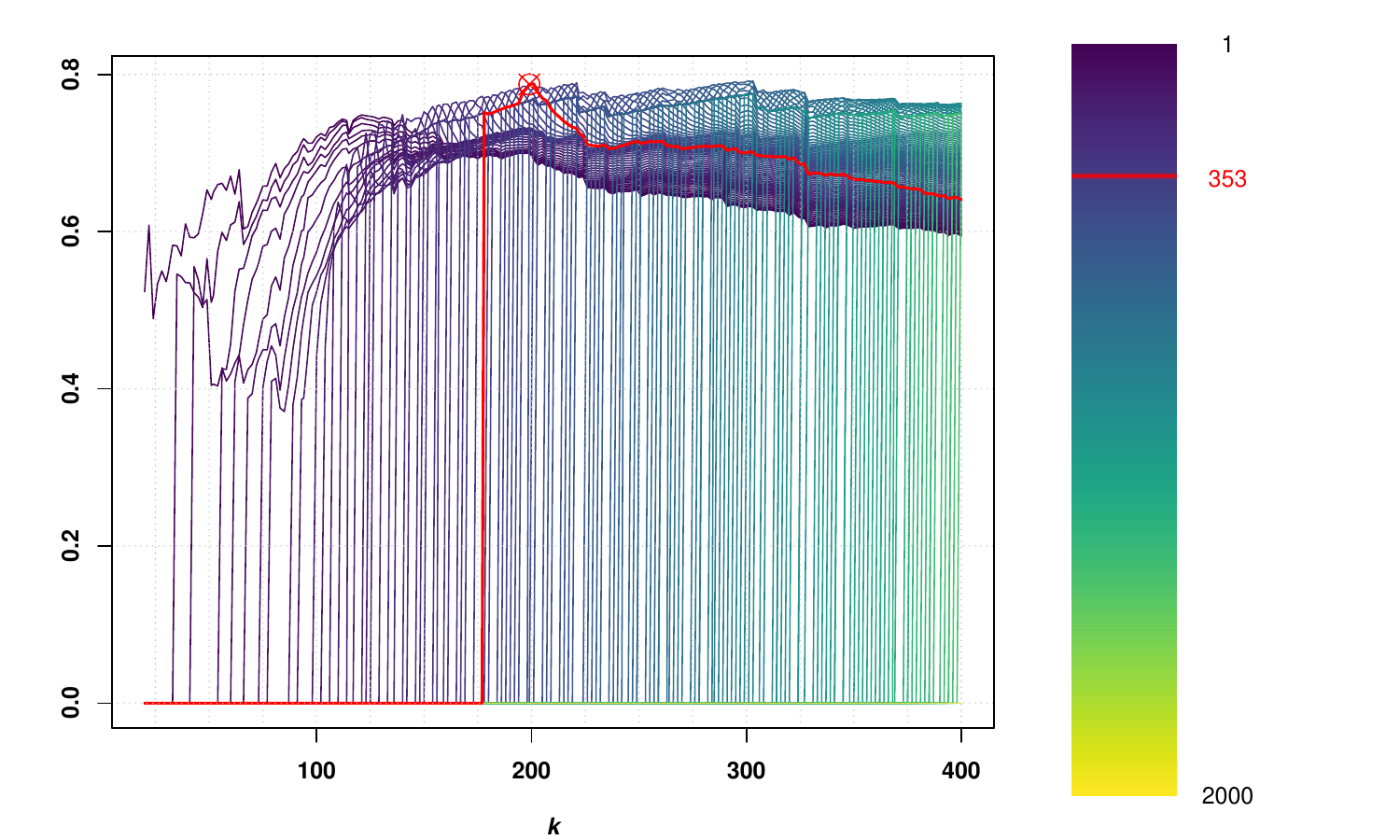}
    \end{tabular}
  \caption{Correlation $\rho( \langle X, \betasmap(y)\rangle ,Y|Y\geq y)$.}
  \label{Fig-corY}
    \end{subfigure}\\
    \begin{tabular}{cc}
    \rotatebox[origin=l]{90}{\hspace{2.2cm}$\scriptstyle{\rho( \langle X, \betasmap(y)\rangle ,X^{(j)}|Y\geq y)}$}&
    \begin{subfigure}{0.48\textwidth}
  \centering
    \includegraphics[height=6cm]{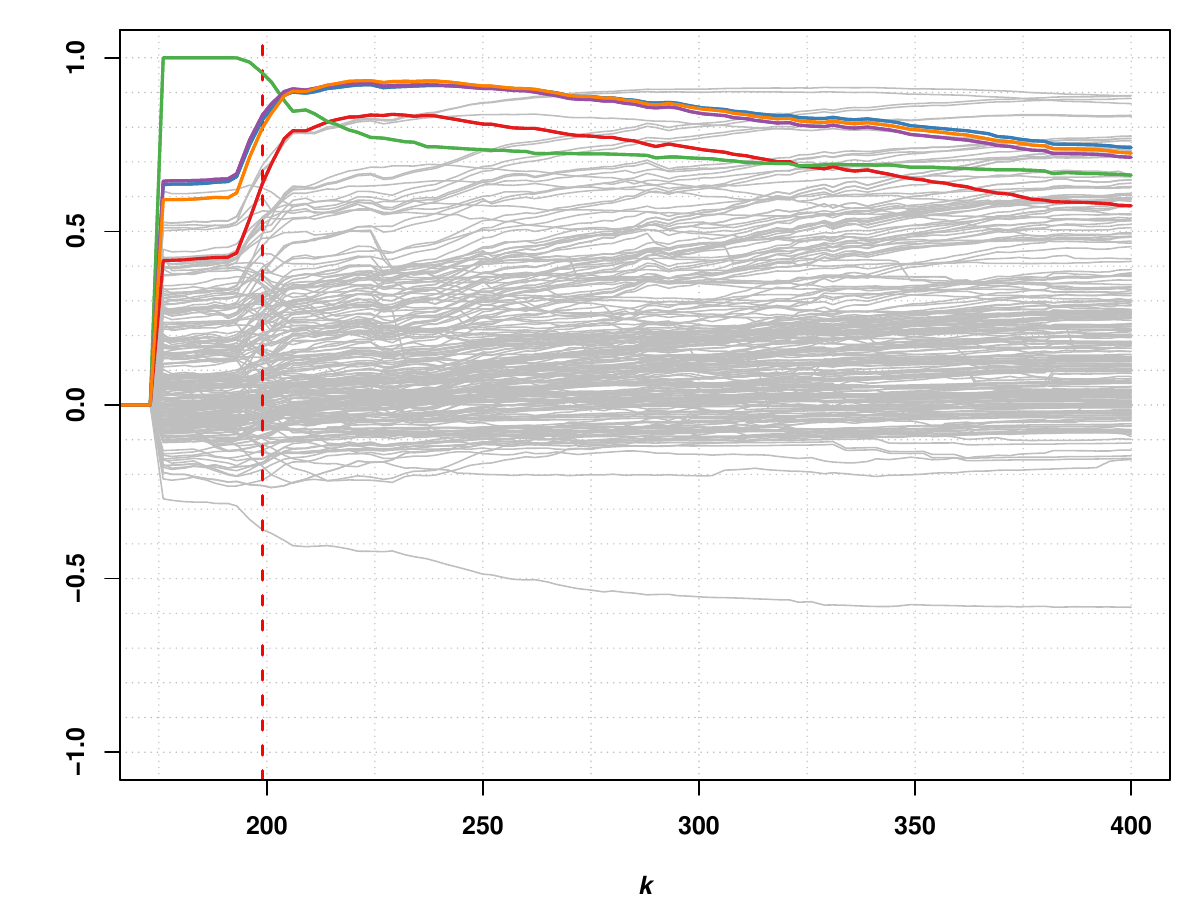}
  \caption{Correlation $\rho( \langle X, \betasmap(y)\rangle ,X^{(j)}|Y\geq y)$.}
  \label{Fig-corX}
    \end{subfigure}\hspace{1cm}
    \begin{subfigure}{0.37\textwidth}
  \centering
    \includegraphics[height=6cm]{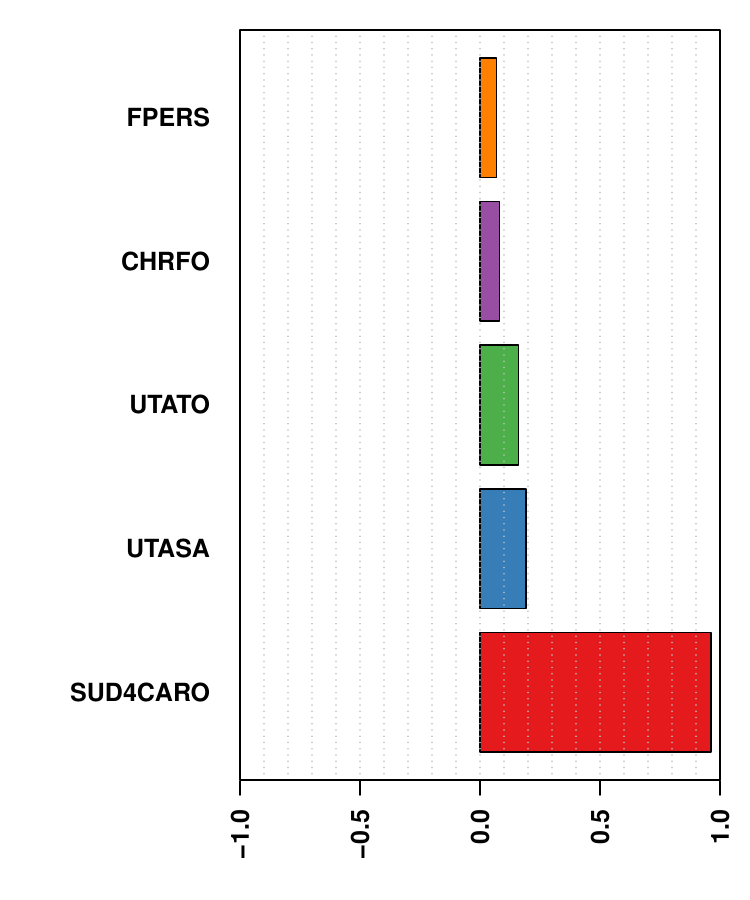}
  \caption{Non-zero coordinates of $\betasmap$.}
  \label{Fig-selected-var}
    \end{subfigure}
    \end{tabular}
    \caption{Real data example. (a) Correlation $\rho( \langle X, \betasmap(y)\rangle ,Y|Y\geq y)$ defined in Equation~\eqref{eqcorY} computed at $y=Y_{n-k+1,n}$ as a function of $k\in\{20,\dots,400\}$ for 200 evenly distributed values of $\lambda$ in $\{1,\dots,2000\}$. The selected pair $(k,\lambda)=(199,353)$ is depicted in red.
    (b) Correlation $\rho( \langle X, \betasmap(y)\rangle ,X^{(j)}|Y\geq y)$ defined in Equation~\eqref{eqcorX} computed at $y=Y_{n-k+1,n}$ as a function of $k\in\{175,\dots,400\}$ for $\lambda=353$ and $j\in\{1,\dots,259\}$. 
    (c) Non-zero coordinates of $\betasmap(Y_{n-k+1,n})$ for the selected pair $(k,\lambda)=(199,353)$.
     The colour code is the same for both left and right panels.}
    \label{fig:data_real_sparse} 
\end{figure}

The role of the tail correlation measure~\eqref{eqcorX} is to assess the correlation in the tail between the summary $\langle X,\betasmap(y)\rangle$ of the predictors built by the SEPaLS method and the initial ones $X^{(j)}$, $j\in\{1,\dots,p\}$. It is computed at the threshold $y=Y_{n-k+1,n}$ and plotted on
 Figure~\ref{Fig-corX} as a function of the number of exceedances $k$ for $\lambda=353$. 
All correlation curves feature nice stability with respect to $k$, especially in the neighbourhood of $k=199$.

In the sequel, we thus select $k=199$ and $\lambda=353$. With these choices, only 5 coordinates of $\betasmap$ out of 259 are estimated to non-zero values, see Figure~\ref{Fig-selected-var} for an illustration and Table~\ref{tab:realDEscrib} for a description of the selected variables. Meteorological variables are discarded since large productions of carrots do not seem to depend on weather conditions. 
Remarking on Figure~\ref{fig:Histo_hill_real} that the summary variable $\langle X,\betasmap(y)\rangle$ is positively correlated with the high values of $Y$,
one can conclude that, unsurprisingly, large productions are associated with large cultivated areas (\texttt{SUD4CARO}), 
large amounts of work both in terms of time (\texttt{UTASA}, \texttt{UTATO}) and remuneration charges (\texttt{FPERS}), and large investments in supplies (\texttt{CHRFO}).

\begin{table}
    \def~{\hphantom{0}}
    
    \begin{tabular}{cllc}
    \\
    \toprule
    \textbf{Selected variables} & \textbf{Description} & \textbf{Units} & $\betasmapj$ \\
    \midrule
    \texttt{SUD4CARO} & Area cultivated with field-grown carrots&  hectares & 0.978\\
    \texttt{UTASA} & Salaried work & UTA$^{(\star)}$ &0.158\\
    \texttt{UTATO} & Salaried and not salaried work &  UTA$^{(\star)}$ & 0.124\\
    \texttt{CHRFO} & Actual cost of stored supplies& euros & 0.038\\
    \texttt{FPERS} & Remuneration charges & euros & 0.026\\
    \bottomrule
    \end{tabular}
    \caption{Real data example. Description of the 5 selected variables (out of  259) associated with 598 farms producing field-grown carrots in France from 2000 to 2015. 
    The last column displays the corresponding non-zero coordinates of $\betasmap$.  \\
    $^{(\star)}$ UTA: amount of work associated with one full-time working person during one year.}
    \label{tab:realDEscrib}
\end{table}

\section{Discussion}
\label{sec-discBEPLS}

We proposed a Bayesian interpretation of the EPLS model to introduce prior information on
the direction of dimension reduction for extreme values. Two examples of shrinkage priors are provided:
a conjugate von Mises--Fisher prior allowing to consider an initial guess on the direction, and a Laplace prior enforcing sparsity on the estimated direction.
Finite sample experiments demonstrate that the proposed method is effective in high dimension ($d=300$ on simulated data and $d\simeq 260$ on real data) with moderate sample sizes ($n=500$ on simulated data and $n\simeq 600$ on real data). 
Here, we limited ourselves to the estimation of one single direction, but the SEPaLS method can easily be adapted to the estimation
of multiple directions using the iterative procedure described in \citet[Section~4]{ExtremePLS2022}.
We also focused on prior distributions yielding explicit shrinkage estimators.
It would be of interest to investigate the use of other priors:
either uninformative priors such as Jeffreys' one~\citep{jeffreys1946invariant} or other shrinkage priors~\citep{van2019shrinkage}
can be considered. The computation of the posterior mode estimate would rely on an MCMC procedure.

\section*{Acknowledgements}

This work is partially supported by the French National Research Agency (ANR) in the framework of the Investissements d'Avenir Program (ANR-15-IDEX-02). J.~Arbel acknowledges the support of ANR-21-JSTM-0001 grant. 
S.~Girard acknowledges the support of the Chair ``Stress Test, Risk Management and Financial Steering'', led by the French Ecole Polytechnique and its Foundation and sponsored by BNP Paribas.

\clearpage

\bibliographystyle{apalike}
\bibliography{bib}

\clearpage

\appendix

\section{Appendix: Proofs}
\label{sec-appendBEPLS}


This first lemma establishes that $f_{\vMFB}(\cdot|\mu,r,\kappa) $ is a proper density function integrating to one.

\begin{lemma}
\label{vMF/Bdist}
Let $p\geq2$.
For all $\mu\in S^{p-1}$, $r>0$ and $\kappa\geq0$,
$$
\int_{\|x\|_2\leq r} \frac{1}{r^p} \exp\left(\frac{\kappa \langle\mu, x\rangle}{r}\right) \dd x = \frac{1}{ 2\pi c_{p+2}(\kappa)},
$$ 
where $c_{p+2}(\kappa)$ is defined in~\eqref{cpk}.
\end{lemma}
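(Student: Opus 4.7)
}
The plan is to rescale to the unit ball, switch to spherical coordinates, evaluate the angular integral via the known $\vMFS$ normalizer, and reduce the remaining radial integral to a Bessel--function identity that precisely raises the dimension index by $2$ (which is exactly why $c_{p+2}$ appears on the right-hand side).

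First, I would apply the change of variables $y=x/r$, so that $\dd x = r^p\,\dd y$ and the integral reduces to
\[
 I(\kappa) := \int_{\|y\|_2\leq 1} \exp(\kappa\langle\mu,y\rangle)\,\dd y.
\]
Next, I would switch to spherical coordinates $y=s\omega$ with $s\in[0,1]$ and $\omega\in S^{p-1}$, which gives $\dd y = s^{p-1}\,\dd s\,\dd\omega$ and
\[
 I(\kappa) = \int_0^1 s^{p-1} \left(\int_{S^{p-1}} \exp(\kappa s\langle\mu,\omega\rangle)\,\dd\omega\right)\dd s.
\]
Since $c_p(\kappa s)$ is by definition the reciprocal of the inner integral (the $\vMFS(\mu,\kappa s)$ density integrates to one), substituting the closed form from~\eqref{cpk} yields, for $\kappa>0$,
\[
 I(\kappa) = \frac{(2\pi)^{p/2}}{\kappa^{p/2-1}} \int_0^1 s^{p/2}\, I_{p/2-1}(\kappa s)\,\dd s.
\]

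The heart of the proof is the radial integral. I would use the classical identity $\tfrac{\dd}{\dd u}[u^\nu I_\nu(u)] = u^\nu I_{\nu-1}(u)$ with $\nu=p/2$ (readily verified from the series~\eqref{besselsum}), combined with the substitution $u=\kappa s$, to obtain
\[
 \int_0^1 s^{p/2} I_{p/2-1}(\kappa s)\,\dd s
 \;=\; \frac{1}{\kappa^{p/2+1}} \int_0^\kappa u^{p/2} I_{p/2-1}(u)\,\dd u
 \;=\; \frac{I_{p/2}(\kappa)}{\kappa}.
\]
Hence $I(\kappa) = (2\pi)^{p/2}\, I_{p/2}(\kappa)/\kappa^{p/2}$. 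Expanding the definition of $c_{p+2}(\kappa)$ in~\eqref{cpk} gives $1/(2\pi c_{p+2}(\kappa)) = (2\pi)^{p/2}\,I_{p/2}(\kappa)/\kappa^{p/2}$, which matches and concludes the case $\kappa>0$. The case $\kappa=0$ follows either by continuity (both sides are continuous in $\kappa$ via the series expansion of $I_\nu$ near the origin) or directly by recognizing $I(0)$ as the volume of the unit ball, $\pi^{p/2}/\Gamma(p/2+1)$, and matching it against $1/(2\pi c_{p+2}(0))$.

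The only mildly delicate step is identifying the right Bessel identity and lining up the indices so that $I_{p/2-1}$ is promoted to $I_{p/2}$, which is exactly the algebraic reason the normalizer on the right-hand side has the dimension shifted by two.
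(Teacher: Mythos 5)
Your proof is correct and follows essentially the same route as the paper's: rescale to the unit ball, pass to polar coordinates, recognize the angular integral as $1/c_p(\rho\kappa)$, and reduce everything to the identity $\int_0^\kappa t^{p/2}I_{p/2-1}(t)\,\dd t=\kappa^{p/2}I_{p/2}(\kappa)$. The only immaterial difference is that you obtain this last identity from the antiderivative relation $\frac{\dd}{\dd u}\left[u^{p/2}I_{p/2}(u)\right]=u^{p/2}I_{p/2-1}(u)$ whereas the paper integrates the series~\eqref{besselsum} term by term, and you additionally spell out the $\kappa=0$ case, which the paper leaves implicit.
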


\begin{proof}[Proof of Lemma~\ref{vMF/Bdist}]
The change of variable $x\mapsto y=x/r$ leads to
$$
\int_{\|x\|_2\leq r} \frac{1}{r^p} \exp\left(\frac{\kappa \langle\mu, x\rangle}{r}\right) \dd x =  \int_{\|y\|_2\leq 1}  \exp({\kappa \langle\mu, y\rangle}) \dd y,
$$ 
and switching to polar coordinates yields
\begin{align*}
\int_{\|y\|_2\leq 1}  \exp({\kappa \langle\mu, y\rangle}) \dd y&= \int_{0}^{1} \rho^{p-1}\int_{S^{p-1}} \exp({\rho\kappa \langle\mu, u\rangle}) \dd u \dd\rho, \\
&= \int_{0}^{1} \frac{\rho^{p-1}}{c_p(\rho\kappa)} \dd\rho\\
&= \frac{(2\pi)^{p/2}}{\kappa^{{p/2 -1}}} \int_{0}^{1} \rho^{p/2} I_{p/2 -1}(\rho\kappa) \dd\rho\\
 &= \frac{(2\pi)^{p/2}}{\kappa^{{p}}}  \int_{0}^{\kappa} t^{p/2} I_{p/2 -1}(t) \dd t.
\end{align*}
From the definition of the modified Bessel function~(\ref{besselsum}) as a power series with infinite radius of convergence, one has:
\begin{align*}
\int_{0}^{\kappa} t^{p/2} I_{p/2-1}(t) \dd t 
&=  \sum_{\ell=0}^\infty \left(\frac{1}{2^{2\ell+p/2-1}\Gamma(p/2+\ell)\ell!}   \int_{0}^{\kappa} t^{2\ell+p-1} \dd t \right)\\
&= \sum_{l=0}^\infty \frac{\kappa^{2\ell+p}}{2^{2\ell+p/2-1}\Gamma(p/2+\ell)\ell! \,(2\ell+p)} . 
\end{align*}
Taking account of $(p/2+\ell) \Gamma(p/2+\ell) = \Gamma(p/2+\ell+1)$, it follows
\begin{align*}
\int_{0}^{\kappa} t^{p/2} I_{p/2-1}(t) \dd t &=  \kappa^{p/2} \sum_{\ell=0}^\infty \frac{1}{\Gamma(p/2+\ell+1)\ell!}  \left(\frac{\kappa}{2}\right)^{2\ell+p/2}
= \kappa^{p/2} I_{p/2}(\kappa),
\end{align*}
leading to
\begin{align*}
\int_{\|x\|_2\leq r} \frac{1}{r^p} \exp\left(\frac{\kappa \langle\mu, x\rangle}{r}\right) \dd x &= \frac{(2\pi)^{p/2}}{\kappa^{p/2}} I_{p/2}(\kappa) = \frac{1}{ 2\pi c_{p+2}(\kappa)},
\end{align*}
which concludes the proof. 
\end{proof}


\begin{proof}[Proof of Proposition~\ref{prop-MLE}]
For any $\theta_n>0$, in view of~\eqref{eq-vhat},
the optimization problem~(\ref{eq-opti1}) can be rewritten as: 
\begin{equation}
\label{optim2}
\hat\beta(y_n) = \argmax_{\|\beta\|_2=1} \exp \left(\theta_n \langle\beta, \hat v(y_n)\rangle \right) 
=  \argmax_{\|\beta\|_2=1} \prod_{i=1}^n \exp \left( \theta_n\langle\beta, X_i\rangle \Phi_{i}(y_n,Y_{1:n})  \right).
\end{equation}
Under model \ref{MM}, the triangle inequality yields 
$
\|X_i\|_2 \leq |g(Y_i)| +  \|\varepsilon_i\|_2,
$
and thus, conditionally on $(Y_i,\varepsilon_i)$, $X_i$ belongs to the ball centred at 0 with radius $r_i:=|g(Y_i)| +  \|\varepsilon_i\|_2$.
The optimization problem~(\ref{optim2}) can be rewritten in terms of densities associated with the $\vMFB$~distribution as
\begin{align*}
 \hat\beta(y_n) &= \argmax_{\|\beta\|_2=1} \prod_{i=1}^n f_{\vMFB}\left(X_i|\beta,  r_i=|g(Y_i)| + \|\varepsilon_i\|_2,\kappa_i=\theta_nr_i \Phi_{i}(y_n,Y_{1:n}) \right).
\end{align*}
It appears that $\hat\beta$ can be interpreted as the estimator maximizing the likelihood conditionally on $(Y_{1:n},\varepsilon_{1:n})$.
Since the density $p(\cdot,\cdot)$ of $(Y_{1:n},\varepsilon_{1:n})$ does not depend on $\beta$, one also has
$$ 
\hat\beta(y_n) = \argmax_{\|\beta\|_2=1} \left(\prod_{i=1}^n f_{\vMFB}\left(X_i|\beta, r_i=|g(Y_i)| + \|\varepsilon_i\|_2, \kappa_i=\theta_n r_i\Phi_{i}(y_n,Y_{1:n})\right)\right) p(Y_{1:n},\varepsilon_{1:n}),
$$
and thus $\hat\beta(y_n)$ can also be viewed as the unconditional maximum likelihood estimator of $\beta$.
\end{proof}


\noindent The next lemma will reveal useful in the proof of Proposition~\ref{prop-recall} below.
\begin{lemma}
 \label{lem-renormalised-sequence}
Let $(\sigma_n)$ and $(c_n)$ be positive real sequences with $\sigma_n\to 0$ as $n\to\infty$. Let $A$ be a random vector in $\mathbb{R}^p$, $b\in S^{p-1}$ a non-random vector,  and $(B_n)$ a sequence of random vectors in $\mathbb{R}^p$ such that
$$
\sigma_n^{-1}\left(\frac{B_n}{c_n}-b\right) \tod A.
$$
Then,
$$
\sigma_n^{-1}\left(\frac{B_n}{\Vert B_n \Vert_2}-b\right) \toP P_b^\perp(A),
$$
where $P_b^\perp(A):= A -  \langle b,A\rangle b$ denotes the projection of $A$ on the hyperplane orthogonal to $b$.
\end{lemma}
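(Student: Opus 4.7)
The plan is to expand the normalization $\|B_n\|_2$ around $c_n$ via a first-order Taylor argument, then use Slutsky's lemma and the continuous mapping theorem to push the distributional convergence through.

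Let me first set $Z_n := \sigma_n^{-1}(B_n/c_n - b)$, so by hypothesis $Z_n \tod A$, and therefore $(Z_n)$ is tight; in particular $\|Z_n\|_2 = \bigOP(1)$. I can rewrite $B_n/c_n = b + \sigma_n Z_n$, and from $\|b\|_2 = 1$ deduce
$$
\frac{\|B_n\|_2^2}{c_n^2} = \|b + \sigma_n Z_n\|_2^2 = 1 + 2\sigma_n \langle b, Z_n\rangle + \sigma_n^2 \|Z_n\|_2^2.
$$
Since $\sigma_n\to 0$ and $\|Z_n\|_2 = \bigOP(1)$, the right-hand side equals $1 + 2\sigma_n\langle b,Z_n\rangle + o_{\mathbb{P}}(\sigma_n)$. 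Taking square roots and using $\sqrt{1+u} = 1 + u/2 + O(u^2)$ for small $u$, I get
$$
\frac{\|B_n\|_2}{c_n} = 1 + \sigma_n \langle b, Z_n\rangle + o_{\mathbb{P}}(\sigma_n),
$$
and consequently
$$
\frac{c_n}{\|B_n\|_2} = 1 - \sigma_n \langle b, Z_n\rangle + o_{\mathbb{P}}(\sigma_n).
$$

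The next step is to multiply $B_n/c_n = b + \sigma_n Z_n$ by this expansion and subtract $b$. A direct computation gives
$$
\frac{B_n}{\|B_n\|_2} - b
= (b + \sigma_n Z_n)\bigl(1 - \sigma_n \langle b,Z_n\rangle + o_{\mathbb{P}}(\sigma_n)\bigr) - b
= \sigma_n\bigl(Z_n - \langle b,Z_n\rangle b\bigr) + o_{\mathbb{P}}(\sigma_n),
$$
where I expanded the product, cancelled the $b$ terms, and collected the remaining second-order contributions $\sigma_n^2 \langle b,Z_n\rangle Z_n$ into the $o_{\mathbb{P}}(\sigma_n)$ remainder using again $\|Z_n\|_2 = \bigOP(1)$. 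Dividing both sides by $\sigma_n$ yields
$$
\sigma_n^{-1}\left(\frac{B_n}{\|B_n\|_2} - b\right) = P_b^\perp(Z_n) + o_{\mathbb{P}}(1).
$$

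Finally, since $Z_n \tod A$ and $P_b^\perp$ is a continuous linear map on $\mathbb{R}^p$, the continuous mapping theorem gives $P_b^\perp(Z_n) \tod P_b^\perp(A)$, and Slutsky's lemma absorbs the $o_{\mathbb{P}}(1)$ remainder to yield the stated convergence to $P_b^\perp(A)$ (in distribution, and in probability in the case where the hypothesis itself already holds in probability, which is how the lemma will be applied to the sequence $B_n = K_n \betaml(y_n) + \kappa_0 \mu_0$ in Proposition~\ref{prop-asymp-conj}). The only mildly delicate point is controlling the second-order terms $\sigma_n^2 \|Z_n\|_2^2$ and $\sigma_n^2 \langle b,Z_n\rangle Z_n$, which is handled by the tightness of $(Z_n)$; everything else is elementary algebra together with Slutsky/continuous mapping.
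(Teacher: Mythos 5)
Your proof is correct and follows essentially the same route as the paper's: a first-order expansion of $\|B_n\|_2/c_n$, inversion, and substitution into $B_n/c_n=b+\sigma_n Z_n$, with the second-order terms controlled by tightness. Your closing caveat is well taken and in fact slightly more careful than the paper: the hypothesis only gives convergence in distribution, so the stated conclusion ``$\toP P_b^\perp(A)$'' requires either that the hypothesis hold in probability or that $P_b^\perp(A)$ be degenerate (as in the application, where $A=\xi\beta$ and $P_b^\perp(A)=0$), whereas the paper's proof glosses over this by asserting that $\epsilon_n:=Z_n-A$ converges to a Dirac mass at zero.
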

\begin{proof}[Proof of Lemma~\ref{lem-renormalised-sequence}]
Let $\epsilon_n:=\sigma_n^{-1}\left(\frac{B_n}{c_n}-b\right) - A$. From the assumption of convergence in distribution, we have that $\epsilon_n$ converges in distribution to a Dirac mass at 0. Clearly,
\begin{align*}
    \Vert B_n \Vert_2^2 = c_n^2\Vert b+ \sigma_n(A+\epsilon_n)\Vert_2^2
     = c_n^2\left(1+2\sigma_n\langle b,A+\epsilon_n\rangle +\bigOP(\sigma_n^2)\right),
\end{align*}
and inverting the latter equality yields 
\begin{align*}
    c_n = \Vert B_n \Vert_2\left(1-\sigma_n\langle b,A+\epsilon_n\rangle +\bigOP(\sigma_n^2)\right).
\end{align*}
Replacing in the expression of $B_n=c_n(b+\sigma_n(A+\epsilon_n))$, we obtain
\begin{align*}
    B_n &= \Vert B_n \Vert_2\left(1-\sigma_n\langle b,A+\epsilon_n\rangle +\bigOP(\sigma_n^2))(b+ \sigma_n(A+\epsilon_n)\right)\\
    &= \Vert B_n \Vert_2\left(b+\sigma_n(A+\epsilon_n-b\langle b,A+\epsilon_n\rangle) +\bigOP(\sigma_n^2)\right),
\end{align*}
and therefore
\begin{align*}    \sigma_n^{-1}\left(\frac{B_n}{\Vert B_n \Vert_2}-b\right) = A+\epsilon_n-b\langle b,A+\epsilon_n\rangle +\bigOP(\sigma_n^2)
    \toP A - b \langle b,A\rangle = P_b^\perp(A),
\end{align*}
which is the desired result.
\end{proof}

\begin{proof}[Proof of Proposition~\ref{prop-recall}] 
From~\citet[Theorem~1]{ExtremePLS2022}, one has
$$
\sqrt{n\bar{F}(y_n)}  \left(\frac{\hat v(y_n)}{\|{ v(y_n)}\|_2 } - \beta \right) \tod \xi \beta,
$$
with $\xi$ a centered Gaussian random variable and where
$$
 v(y_n):=\bar F(y_n) \mathbb{E}(XY  \mathbf{1}_{\{Y\geq y_n\}}) - \mathbb{E}(X\mathbf{1}_{\{Y\geq y_n\}})\mathbb{E}(Y\mathbf{1}_{\{Y\geq y_n\}}).
$$
The result follows from Lemma~\ref{lem-renormalised-sequence} applied with
$\sigma_n=1/\sqrt{n\bar F(y_n)}$, $B_n=\hat v(y_n)$, $c_n=\|{v(y_n)}\|_2$, $b=\beta$, $A=\xi\beta$ and therefore $P_b^\perp(A)=0$.
\end{proof}


\begin{proof}[Proof of Proposition~\ref{prop-post}]
In view of Bayes' rule,
the posterior distribution of $\beta$ is given by
$$  
p(\beta|X_{1:n},Y_{1:n},\varepsilon_{1:n})
    \propto \pi(\beta) p(Y_{1:n},\varepsilon_{1:n})\prod_{i=1}^n f_{\vMFB}\left(X_i|\beta,r_i=|g(Y_i)| + \|\varepsilon_i\|_2, \kappa_i= \theta_n r_i \Phi_{i}(y_n,Y_{1:n}) \right).
$$
Since $p(Y_{1:n},\varepsilon_{1:n})$ does not depend on $\beta$, the posterior distribution can be simplified as
\begin{align*}
 p(\beta|X_{1:n},Y_{1:n},\varepsilon_{1:n}) &\propto \pi(\beta)\prod_{i=1}^n f_{\vMFB}\left(X_i|\beta,r_i=|g(Y_i)| + \|\varepsilon_i\|_2, \kappa_i= \theta_n r_i \Phi_{i}(y_n,Y_{1:n}) \right)\\
  &\propto \pi(\beta)\prod_{i=1}^n \exp \left( \theta_n\langle\beta, X_i\rangle \Phi_{i}(y_n,Y_{1:n})  \right)\\
&= \pi(\beta) \exp \left( \theta_n  \|\hat v(y_n)\|_2 \langle\beta,  \betaml(y_n)\rangle \right),
\end{align*}
and the result is proved.
\end{proof}

\begin{proof}[Proof of Proposition~\ref{prop-asymp-conj}]
Let $\sigma_n=1/\sqrt{n \bar F(y_n)}$. Combining Proposition~\ref{prop:conj} and Proposition~\ref{prop-recall}, it follows
$$
\betacmap(y_n) = \frac{\beta  + \sigma_n\varepsilon_n+ (\kappa_0/K_n) \mu_0}{\|\beta  + \sigma_n\varepsilon_n+ (\kappa_0/K_n) \mu_0\|_2},
$$
where $\varepsilon_n:=\sigma_n^{-1}(\betaml(y_n)-\beta)\toP 0$. 
Taking account of $\sigma_n\to 0$ and $1/K_n \simP \sigma_n/c \to 0$ as $n\to\infty$, a first order Taylor expansion yields:
$$
\|\beta  + \sigma_n\varepsilon_n+ (\kappa_0/K_n) \mu_0\|_2^2 = 1 + 2 (\kappa_0/K_n)  \langle\mu_0,\beta\rangle +o_{\mathbb{P}}(\sigma_n) + o_{\mathbb{P}}(1/K_n),
$$
{since $\|\beta\|_2=1$,} and therefore 
$$
1/\|\beta  + \sigma_n\varepsilon_n+ (\kappa_0/K_n) \mu_0\|_2 = 1 - (\kappa_0/K_n)  \langle\mu_0,\beta\rangle +o_{\mathbb{P}}(\sigma_n) + o_{\mathbb{P}}(1/K_n).
$$
Replacing, we get
$$
\betacmap{(y_n)} = \beta + (\kappa_0/K_n) (\mu_0 -  \langle \mu_0,\beta\rangle\beta) +o_{\mathbb{P}}(\sigma_n) + o_{\mathbb{P}}(1/K_n),
$$
or equivalently,
$$
\sigma_n^{-1}(\betacmap{(y_n)} -\beta) = \kappa_0/(\sigma_n K_n)  (\mu_0 -  \langle \mu_0,\beta\rangle\beta) +o_{\mathbb{P}}(1) + o_{\mathbb{P}}(1/(\sigma_nK_n)),
$$
and the result is proved under the assumption that $\sigma_n K_n \toP c>0$ as $n\to\infty$.
\end{proof}


\begin{proof}[Proof of Proposition~\ref{prop:sparse}]
In view of~\eqref{posteriorsparse}, the MAP estimator is given by:
\begin{align*}
    \betasmap(y_n) &= \argmin_{\|\beta\|_2^2=1} \lambda \|\beta\|_1 -  K_n\langle\beta, \betaml(y_n)\rangle\\
     &= \argmin_{\|\beta\|_2^2=1} \sum_{j=1}^{p} \left(\lambda |\beta_j| -  K_n\beta_j \betamlj(y_n)\right)\\
     &=  \argmin_{\|\beta\|_2^2=1} \sum_{j=1}^{p} |\beta_j| \left(\lambda  -  K_n\sign(\beta_j)\betamlj(y_n)\right).
\end{align*}
Introducing $b_j=|\beta_j|$ and $s_j=\sign(\beta_j)$ so that $\beta_j=s_j b_j$, the above optimization problem can be rewritten as
$$
 \betasmap(y_n) = \argmin_{b,s} \sum_{j=1}^p b_j(\lambda -  K_n s_j \betamlj(y_n) ) \; \mbox{ s.t. } \|b\|_2^2=1,\; b_j\geq 0, \; |s_j|=1, \;j\in\{1,\dots,p\}.
$$
Clearly, the solution w.r.t. $s$ is given by
 $s_j=\sign(\betamlj(y_n))$ for all $j\in\{1,\dots,p\}$ and therefore
$$
 \betasmap(y) = \argmin_{b\in{\mathbb R}^p}  C(b), \; \mbox{ s.t. } \|b\|_2^2=1,\; b_j\geq 0, \;j\in\{1,\dots,p\}
$$
where
$$C(b) = \sum_{j=1}^p b_j(\lambda -  K_n |\betamlj(y_n)| ).$$

Let us introduce the two sets of indices 
\begin{align*}
J_+ = \left\{
 j\in\{1,\dots,p\} \, ; \,  \lambda - K_n |\hat{\beta}_{\text{ml},j}(y)|\geq 0
\right\} \text{ and }
J_- = \left\{
 j\in\{1,\dots,p\}  \, ; \,  \lambda - K_n |\hat{\beta}_{\text{ml},j}(y)|<0
\right\},
\end{align*}
such that $C(b)=C_+(b)-C_-(b)$ where
\begin{align*}
C_+(b)=\sum_{j\in J_+} b_j\big(\lambda-K_n |\hat{\beta}_{\text{ml},j}(y)|\big) 
\quad \text{and} \quad
C_-(b)=\sum_{j\in J_-} b_j\big|\lambda-K_n |\hat{\beta}_{\text{ml},j}(y)|\big|.
\end{align*}
The minimum of the non-negative term $C_+(b)$ is reached for $b_j=0$, $\forall j\in J_+$.
The negative term $C_-(b)$ corresponding to negative values of $\lambda-K_n |\hat{\beta}_{\text{ml},j}(y)|$ remains and the problem 
can be rewritten as
\begin{align*}
\hat{\beta}_{\text{map}}^{\text{s}}(y)  =
\argmin_{b\in{\mathbb R}^p} \sum_{j\in J_-} b_j\left(\lambda - K_n |\hat{\beta}_{\text{ml},j}(y)|\right) 
\quad \text{s.t.}\quad
\|b\|_2^2=1
\quad \text{and}\quad\left\{
\begin{array}{l}
b_j\geq 0, \quad j\in\{1,\dots,p\},\\
 b_j=0, \quad j\in J_+.
\end{array}\right.
\end{align*}
One can recognise a problem of minimization of projection on the vector of negative terms $\big(
\lambda - K_n |\hat{\beta}_{\text{ml},j}(y)|\big)_{j\in J_-}$
which is solved for positive terms $(b_j)_{j\in J_-}$ defined by
\begin{align*}
\forall j\in J_-,\ b_j = \big(K_n |\hat{\beta}_{\text{ml},j}(y)|-\lambda\big)/\sqrt{\delta}\quad
\text{where}\quad  \delta = \sum_{j\in J_-} \big(K_n |\hat{\beta}_{\text{ml},j}(y)|-\lambda\big)^2.
\end{align*}
One can notice that $\delta=\|S_{\lambda}(K_n |\hat{\beta}_{\text{ml}}(y)|)\|_2^2$, and therefore
\begin{align*}
\hat{\beta}_{\text{map}}^{\text{s}}(y) &= S_{\lambda}(K_n |\hat{\beta}_{\text{ml}}(y)|)/\|S_{\lambda}(K_n |\hat{\beta}_{\text{ml}}(y)|)\|_2.
\end{align*}
The result is thus proved.


 \end{proof}
 

 \begin{proof}[Proof of Proposition~\ref{prop-asymp-sparse}]
Let us recall the notation introduced in the proof of Proposition~\ref{prop-asymp-conj}: $\sigma_n=1/\sqrt{n \bar F(y_n)}$. Combining Proposition~\ref{prop:sparse} and Proposition~\ref{prop-recall}, it follows
that $\betasmap(y_n) = \tilde \beta(y_n)/ \|\tilde \beta(y_n)\|_2$ with,
for all $j\in\{1,\dots,p\}$:
$$
\tilde \beta_j(y_n)= S_\lambda\left( K_n (\beta_j + \sigma_n \varepsilon_{j,n}) \right),
$$
where $\varepsilon_n\toP 0$. Two cases arise:
\begin{itemize}
 \item If $\beta_j=0$ then, clearly, $\tilde\beta_j(y_n)=0$ with probability tending to one, since $K_n\sigma_n\toP c$ and $\varepsilon_n \toP 0$ as $n\to\infty$.
 \item If $\beta_j\neq 0$, then $K_n\toP \infty$ and $K_n\sigma_n\toP c$ entail $|K_n (\beta_j + \sigma_n \varepsilon_{j,n})|\toP \infty$
 as $n\to\infty$
 and, therefore, with probability tending to one, 
 \begin{equation}
 \label{eq-betajneq}
 \tilde\beta_j(y_n) =\sign(\beta_j)\left( K_n (|\beta_j| \pm \sigma_n \varepsilon_{j,n}) - \lambda\right) = 
 \beta_j K_n \left( 1  - \frac{ \lambda }{|\beta_j| K_n}(1+o_{\mathbb{P}}(1)) \right).
  \end{equation}
\end{itemize}
As a consequence, one has, with probability tending to one, 
\begin{align*}
 \|\tilde\beta(y_n)\|_2^2 &= K_n^2 \sum_{\beta_j\neq 0} \beta_j^2 \left( 1  - \frac{ \lambda }{|\beta_j| K_n}(1+o_{\mathbb{P}}(1)) \right)^2 \\
 &= K_n^2 \left\{1 + \sum_{\beta_j\neq 0} \beta_j^2 \left(  \frac{ \lambda^2 }{\beta_j^2 K_n^2}(1+o_{\mathbb{P}}(1)) - \frac{ 2 \lambda }{|\beta_j| K_n}(1+o_{\mathbb{P}}(1)) \right) \right\},
\end{align*}
since $\|\beta\|_2=1$. It follows that
$$
 \|\tilde\beta(y_n)\|_2^2  = K_n^2 \left( 1 - \frac{2 \lambda \|\beta\|_1}{K_n} (1+o_{\mathbb{P}}(1)) \right),
$$
with probability tending to one, leading to
$$
\frac{1}{\|\tilde\beta(y_n)\|_2} =  \frac{1}{K_n} \left( 1 +  \frac{ \lambda \|\beta\|_1}{K_n} (1+o_{\mathbb{P}}(1)) \right).
$$
Combining with~(\ref{eq-betajneq}), one has, for all $j\in\{1,\dots,p\}$ such that $\beta_j\neq 0$,
$$
\frac{\tilde\beta_j(y_n)}{\|\tilde\beta(y_n)\|_2} = \beta_j \left( 1 + \frac{\lambda}{K_n} \left(\|\beta\|_1 - \frac{1}{|\beta_j|} \right)(1+o_{\mathbb{P}}(1)) \right),
$$
or equivalently,
$$
\sigma_n^{-1}\left(\frac{\tilde\beta_j(y_n)}{\|\tilde\beta(y_n)\|_2} - \beta_j\right) =  \frac{\lambda}{K_n\sigma_n} \left(\|\beta\|_1 - \frac{1}{|\beta_j|} \right)\beta_j\; (1+o_{\mathbb{P}}(1)) ,
$$
and $K_n\sigma_n\toP c$ proves the result.
\end{proof}


\section{Appendix: Additional figures}\label{sec-extra-figures}

We provide below additional figures corresponding to the illustration on simulated data presented in Section~\ref{sec-simBEPLS}. They correspond to the use of the conjugate prior with parameter $c\in\{1/2,1/4\}$ (while the case $c=1$ can be found in the main text), and the sparse prior with parameter $c\in\{1,1/2,1/4\}$.

\begin{figure}[htb]
\centering
\begin{tabular}{ccc}
    & \multicolumn{2}{c}{Conjugate $\vMFS$ prior and link function $g(t)=t^c$ with $c=1/2$.}\\
    &  $\mu_0=\beta$ &   $\mu_0=\tilde\beta$\\
    \rotatebox[origin=l]{90}{\hspace{.9cm}$\tau=-0.8$} &
    \includegraphics[trim={1cm 1.5cm 0.5cm 2cm},clip,width=.43\textwidth]{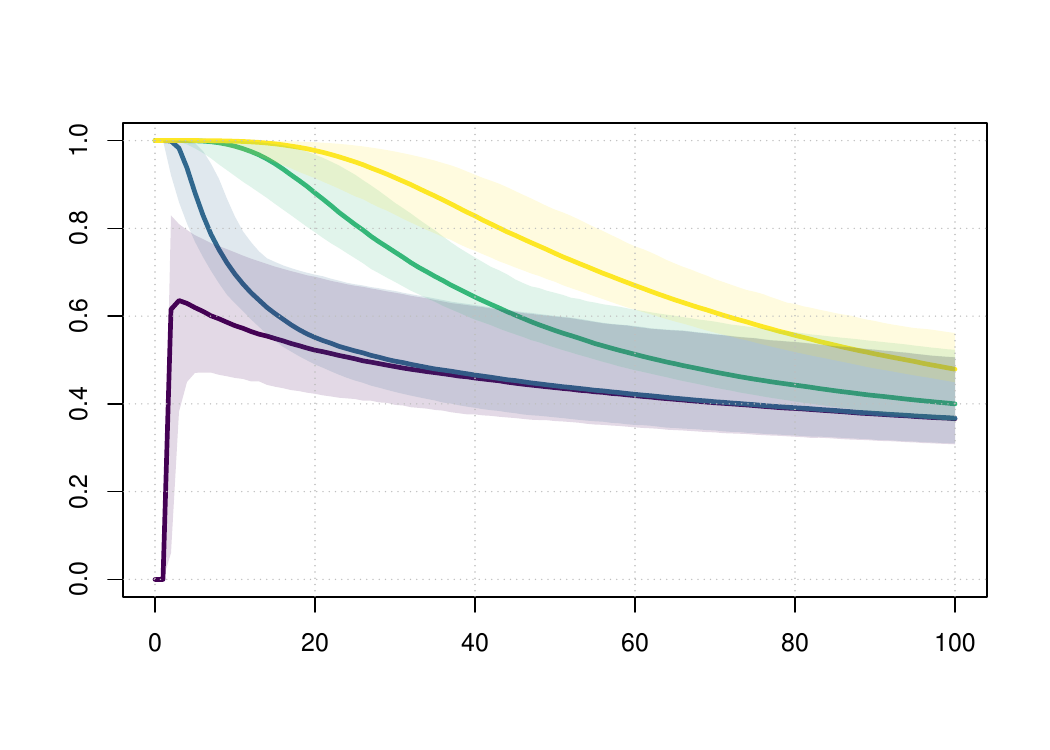} & 
    \includegraphics[trim={1cm 1.5cm 0.5cm 2cm},clip,width=.43\textwidth]{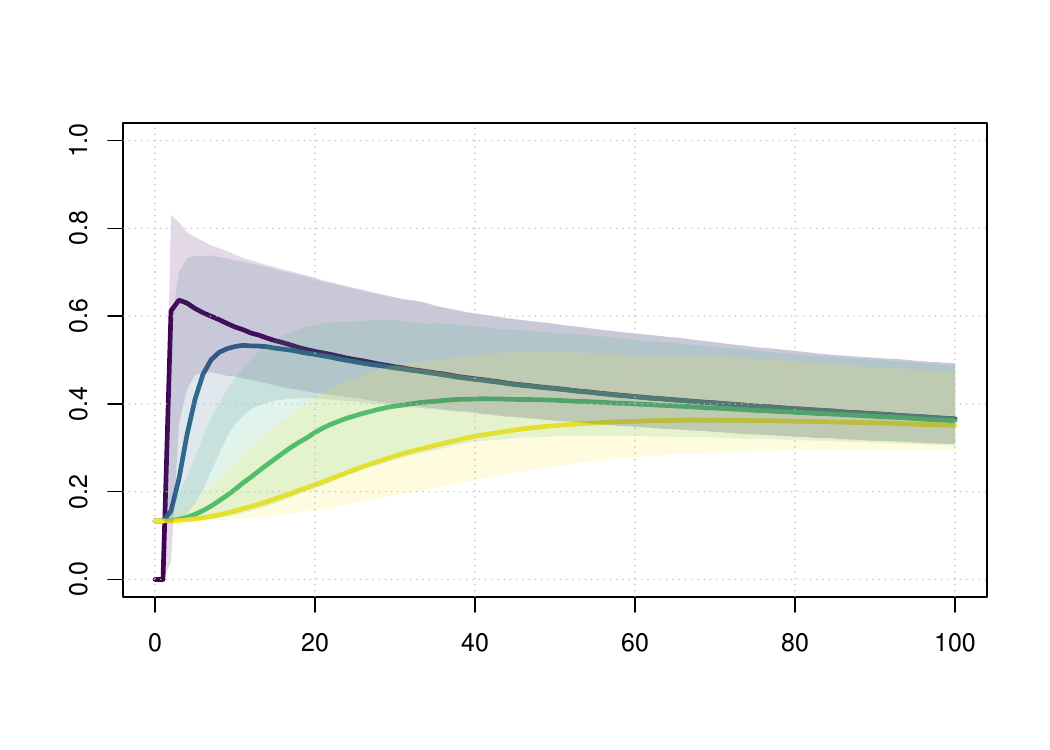}
 \\ \rotatebox[origin=l]{90}{\hspace{.9cm}$\tau=-0.2$} &
    \includegraphics[trim={1cm 1.5cm 0.5cm 2cm},clip,width=.43\textwidth]{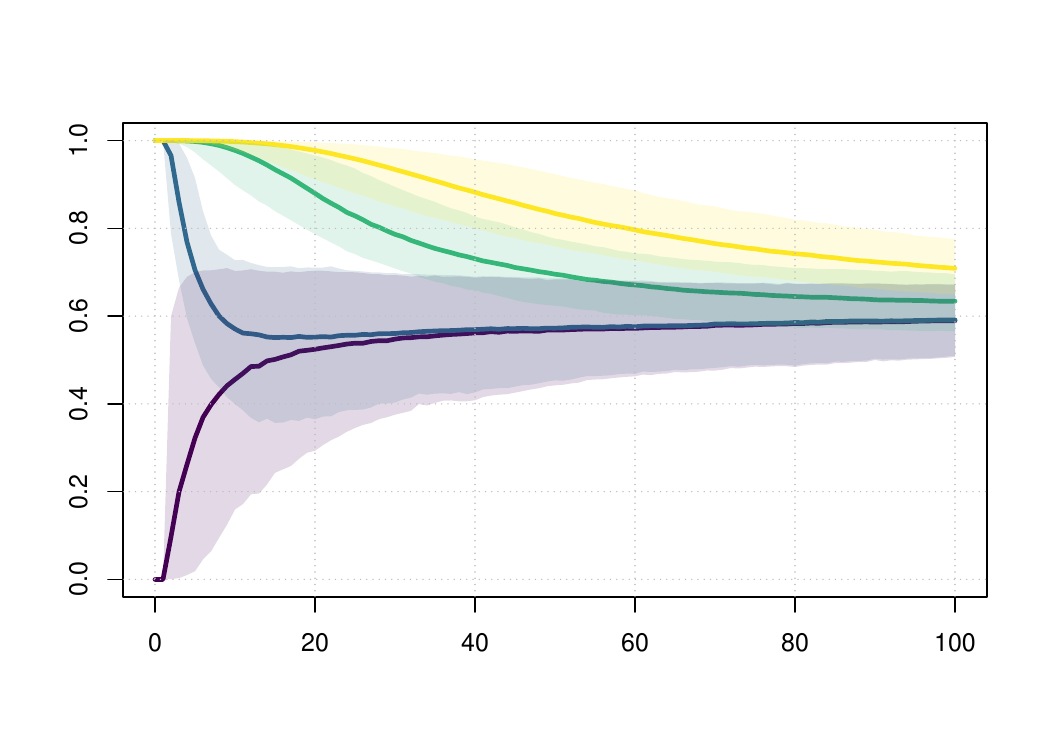}
    &
    \includegraphics[trim={1cm 1.5cm 0.5cm 2cm},clip,width=.43\textwidth]{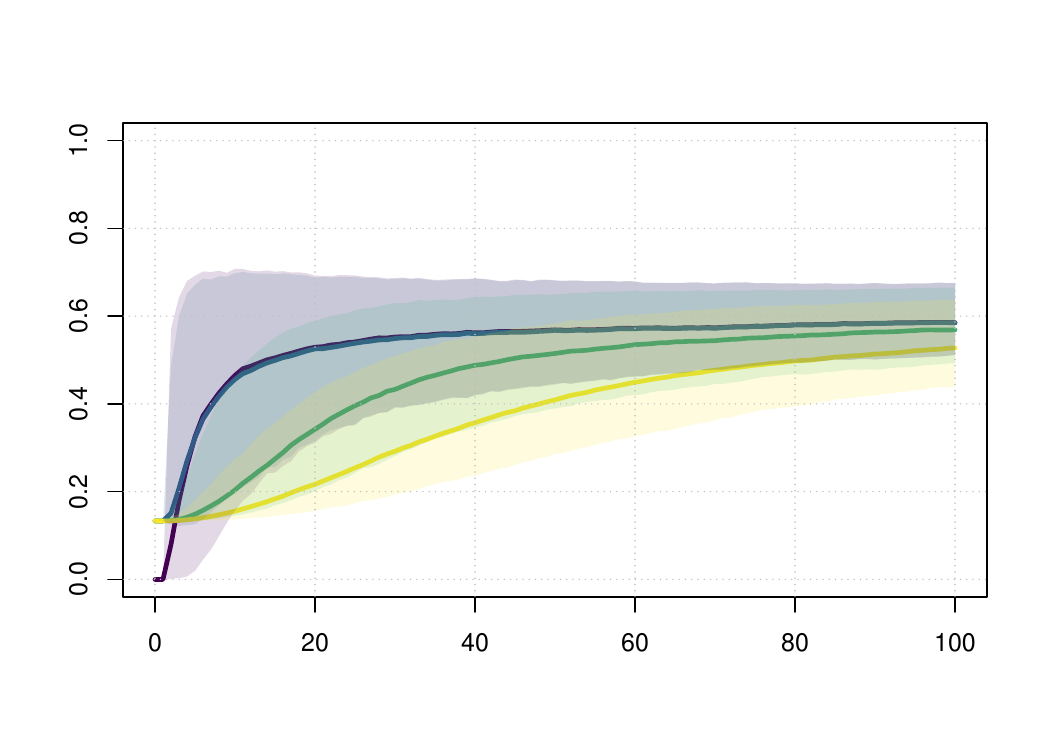}
 \\ \rotatebox[origin=l]{90}{\hspace{.9cm}$\tau=0.2$} &
    \includegraphics[trim={1cm 1.5cm 0.5cm 2cm},clip,width=.43\textwidth]{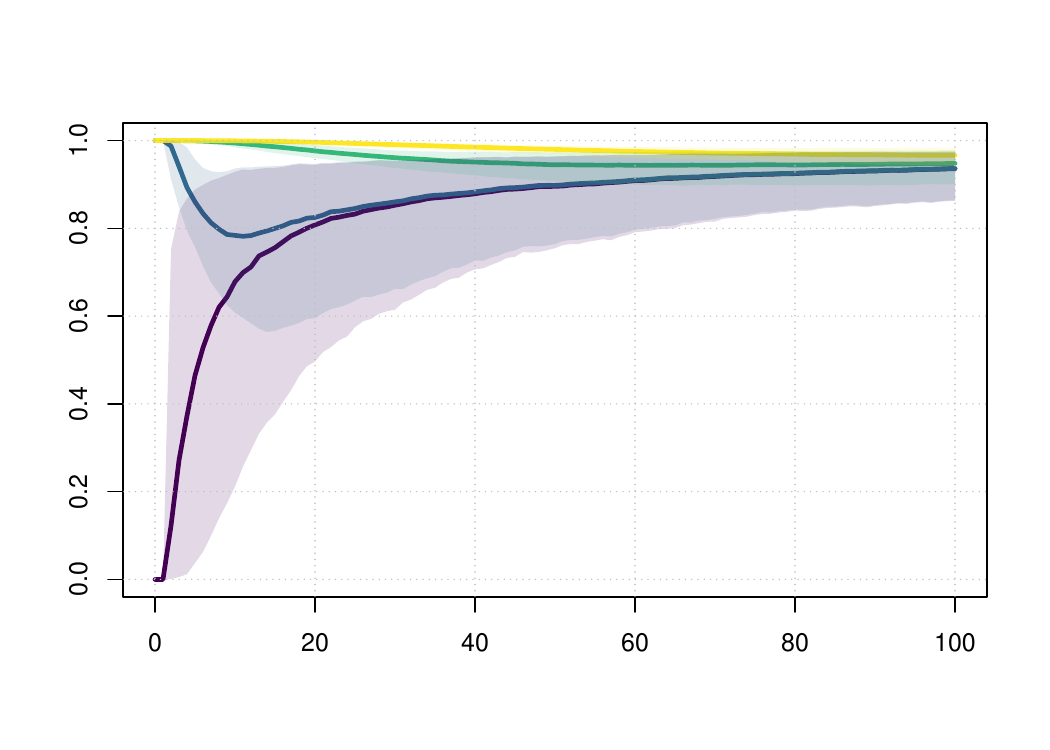}
    &
    \includegraphics[trim={1cm 1.5cm 0.5cm 2cm},clip,width=.43\textwidth]{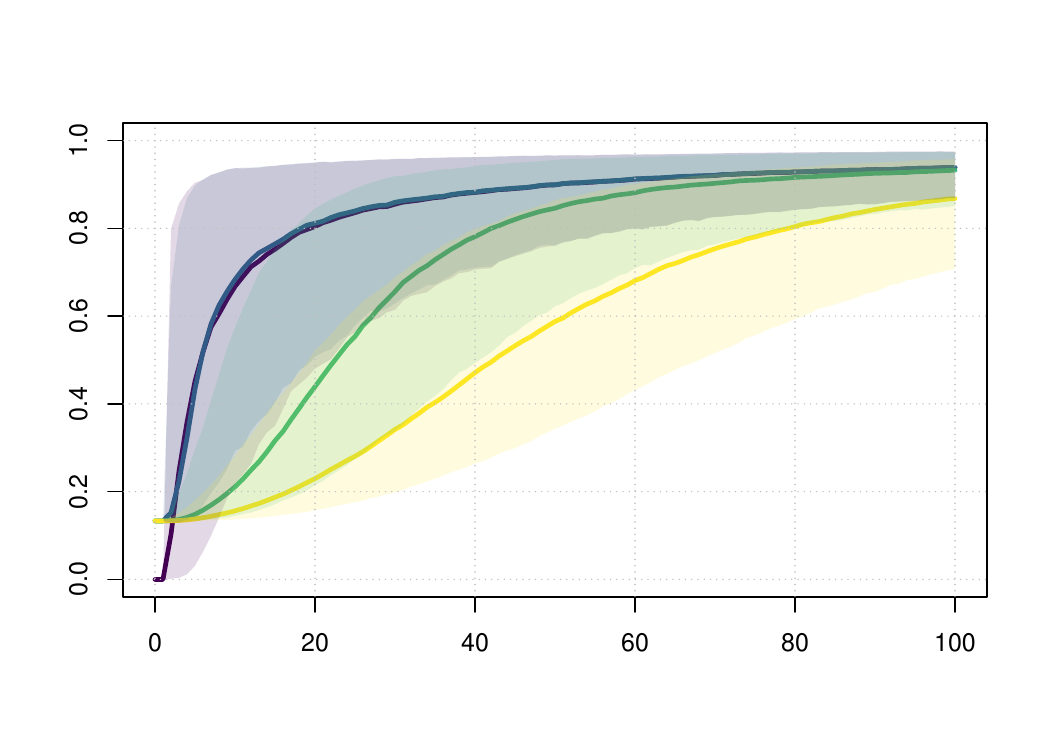}
 \\ \rotatebox[origin=l]{90}{\hspace{.9cm}$\tau=0.8$} &
    \includegraphics[trim={1cm 1.5cm 0.5cm 2cm},clip,width=.43\textwidth]{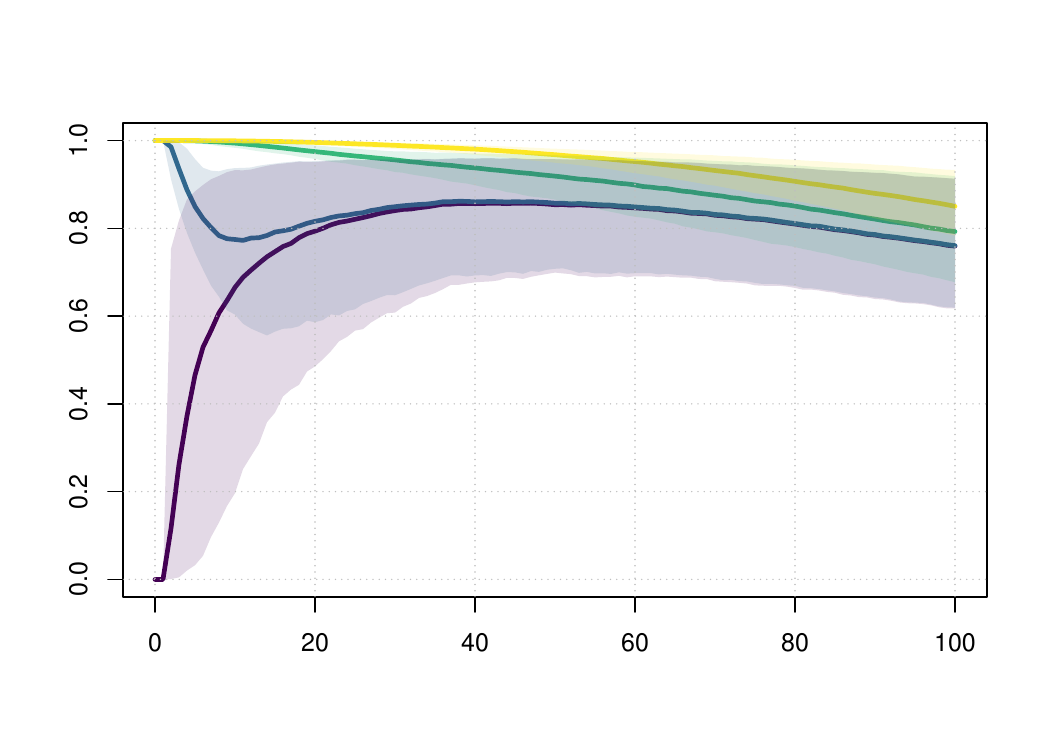}
    &
    \includegraphics[trim={1cm 1.5cm 0.5cm 2cm},clip,width=.43\textwidth]{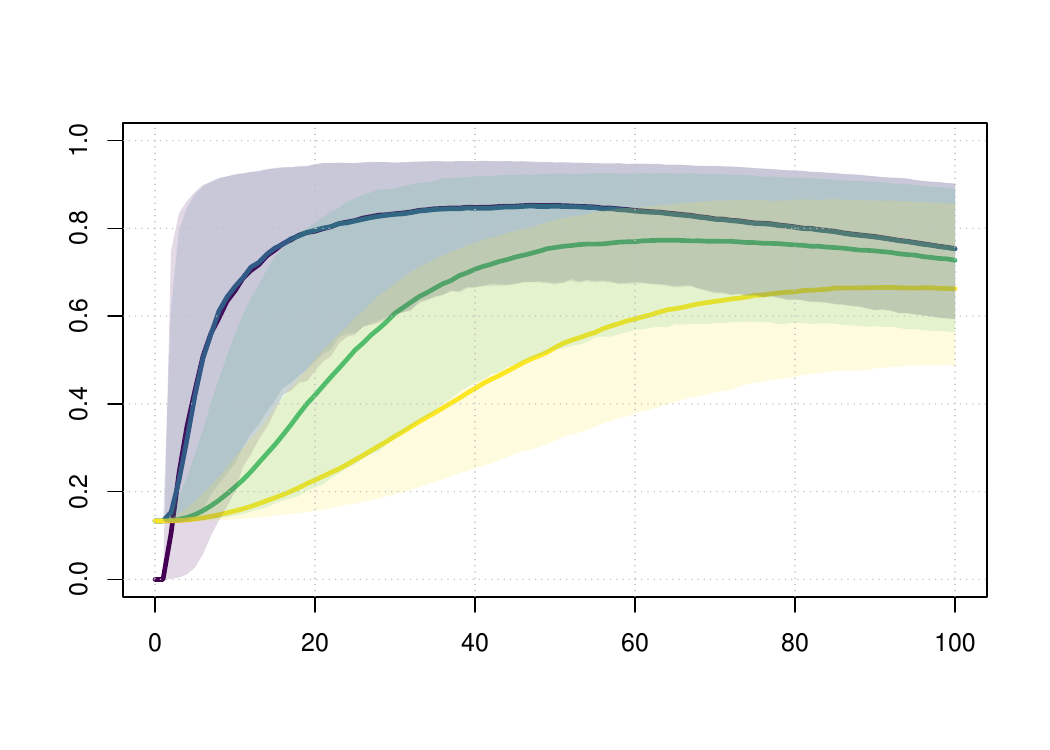}
\end{tabular}
\caption{\legendconjugate{1/2}}
\label{BEPLSConj2}
\end{figure}

\begin{figure}[htb]
\centering
\begin{tabular}{ccc}
    & \multicolumn{2}{c}{Conjugate $\vMFS$ prior and link function $g(t)=t^c$ with $c=1/4$.}\\
    &  $\mu_0=\beta$ &   $\mu_0=\tilde\beta$\\
    \rotatebox[origin=l]{90}{\hspace{.9cm}$\tau=-0.8$} &
    \includegraphics[trim={1cm 1.5cm 0.5cm 2cm},clip,width=.43\textwidth]{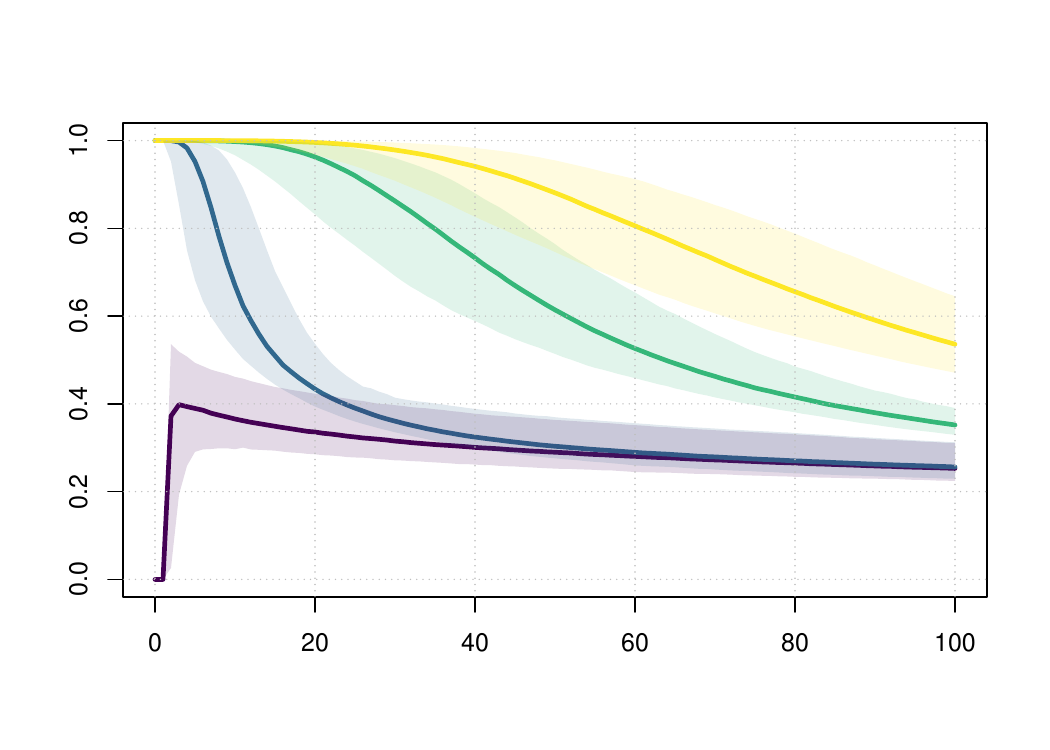}
    &
    \includegraphics[trim={1cm 1.5cm 0.5cm 2cm},clip,width=.43\textwidth]{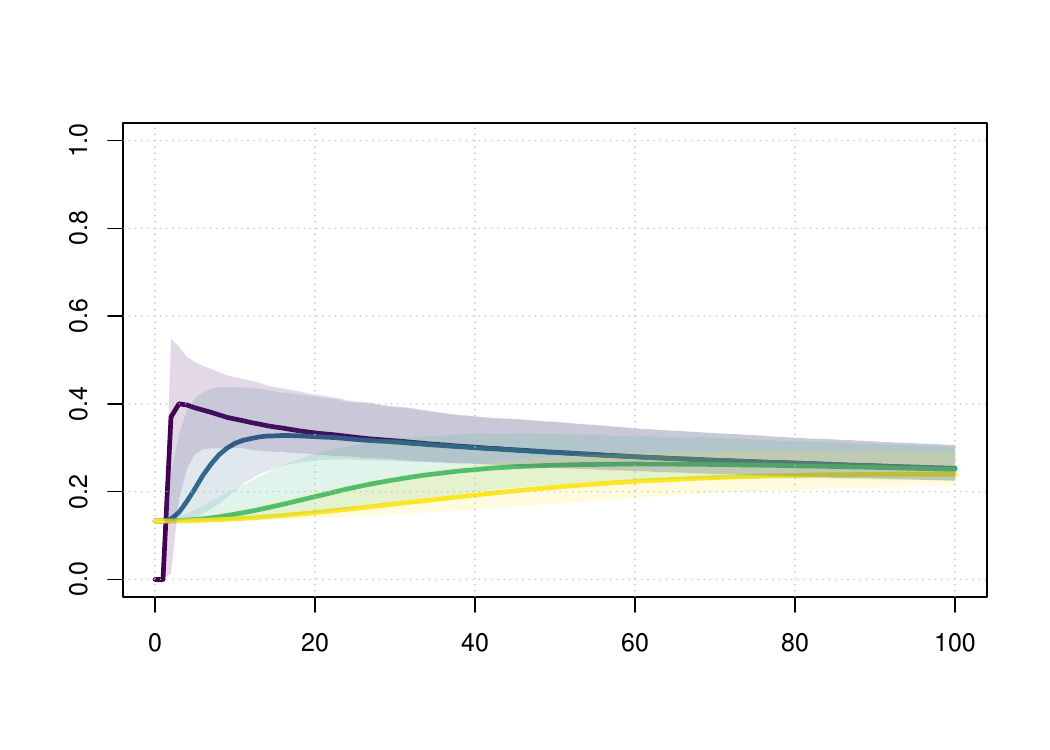}
 \\ \rotatebox[origin=l]{90}{\hspace{.9cm}$\tau=-0.2$} &
    \includegraphics[trim={1cm 1.5cm 0.5cm 2cm},clip,width=.43\textwidth]{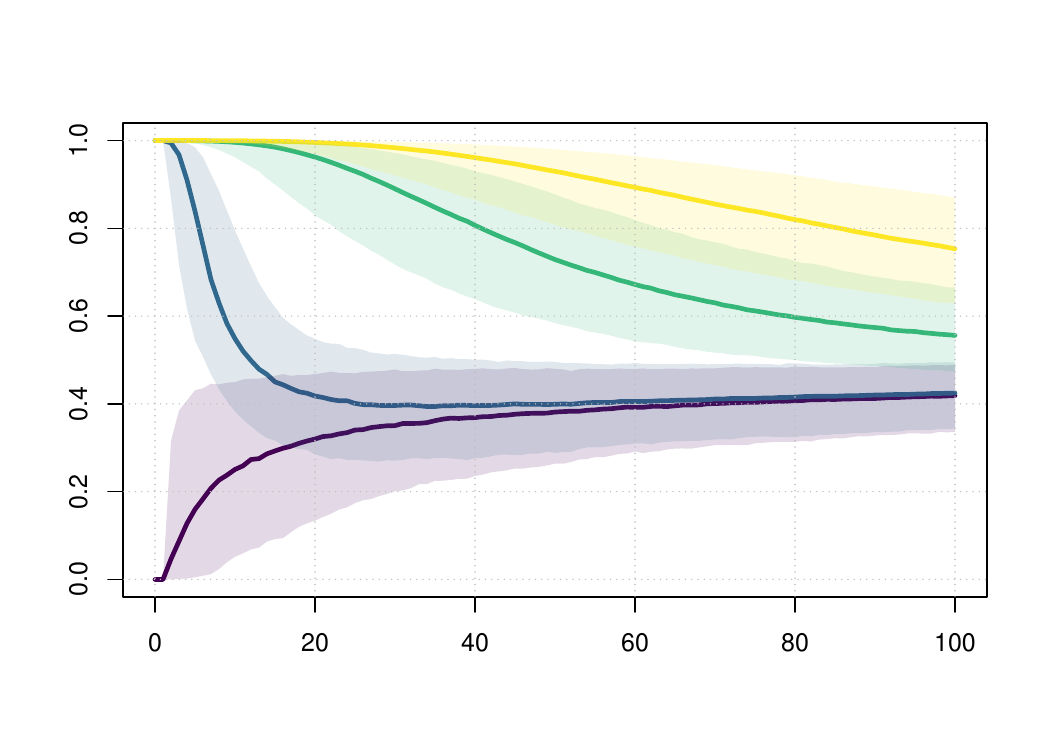}
    &
    \includegraphics[trim={1cm 1.5cm 0.5cm 2cm},clip,width=.43\textwidth]{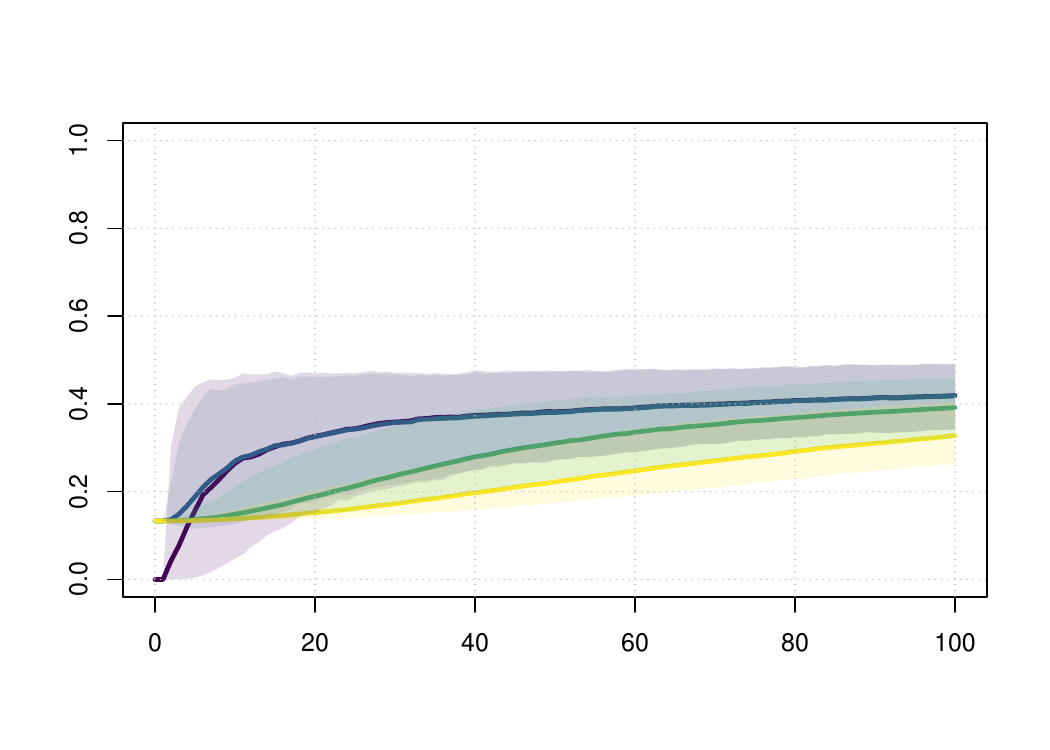}
 \\ \rotatebox[origin=l]{90}{\hspace{.9cm}$\tau=0.2$} &
    \includegraphics[trim={1cm 1.5cm 0.5cm 2cm},clip,width=.43\textwidth]{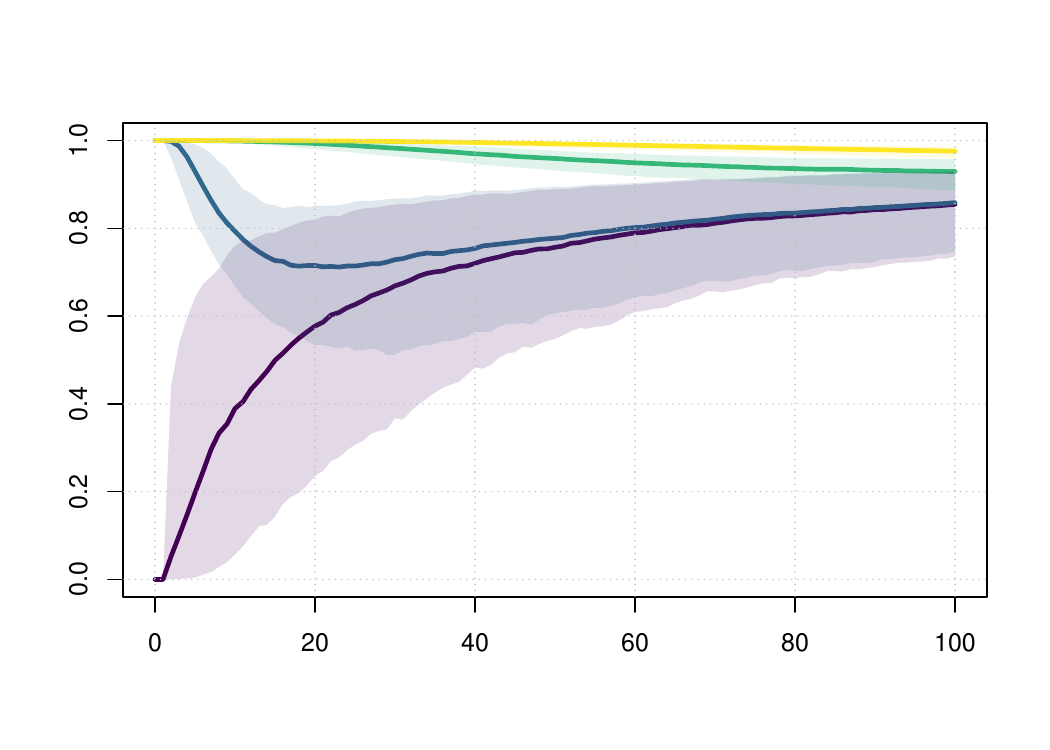}
    &
    \includegraphics[trim={1cm 1.5cm 0.5cm 2cm},clip,width=.43\textwidth]{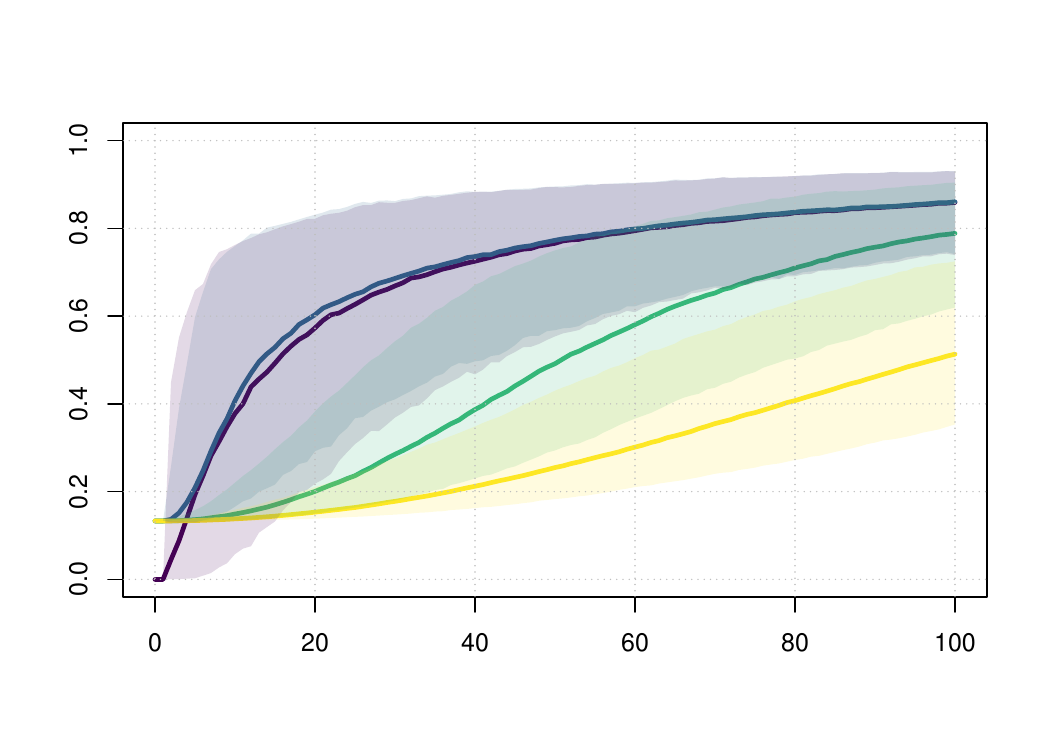}
 \\ \rotatebox[origin=l]{90}{\hspace{.9cm}$\tau=0.8$} &
    \includegraphics[trim={1cm 1.5cm 0.5cm 2cm},clip,width=.43\textwidth]{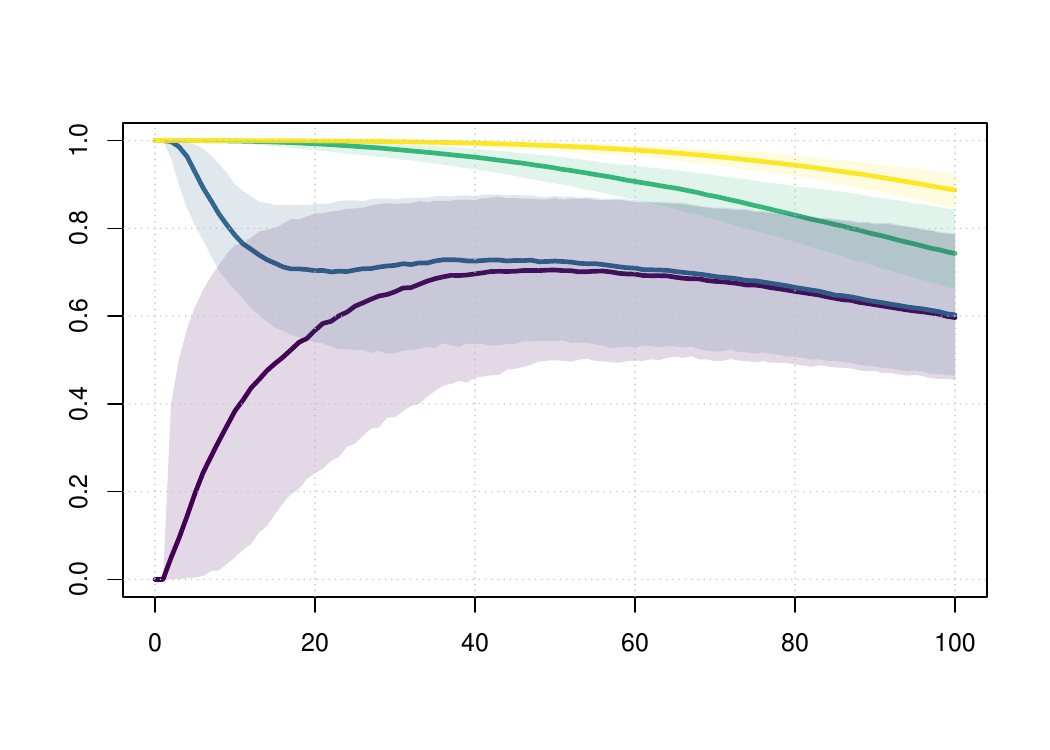}
    &
    \includegraphics[trim={1cm 1.5cm 0.5cm 2cm},clip,width=.43\textwidth]{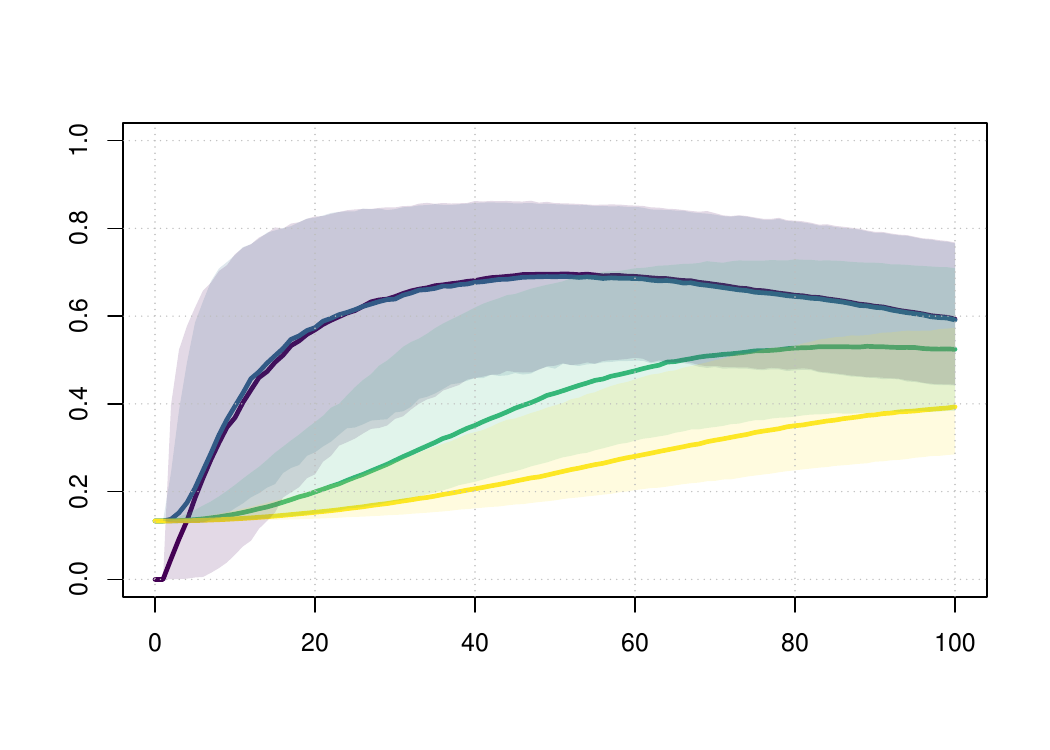}
    \end{tabular}
\caption{\legendconjugate{1/4}}
\label{BEPLSConj3}
\end{figure}

\newcommand{\legendsparse}[1]{Finite sample behavior of the SEPaLS estimator computed with the sparse prior on simulated data in dimension $d=30$ (left) and $d=300$ (right) from a Pareto distribution (${\gamma_Y}=1/5,\;a=2$) and a (rotated) Clayton copula with Kendall's tau $\tau\in\{-0.8, -0.2, 0.2, 0.8\}$ (from top to bottom). The power of the link function $g(t)=t^c$ is fixed to $c=#1$. Vertically: $\similarity (Y_{n-k+1,n})$ between $\betasmap$ and $\beta$ for as a function of the number $k \in \{1,\dots, 100\}$ of exceedances (horizontally). The concentration parameter is $\lambda\in \{0,10^{-4},5.10^{-4},10^{-3}\}$, respectively in violet, blue, green and yellow. Coloured areas correspond to $90\%$ confidence intervals.
}

\begin{figure}[htb]
\centering
\begin{tabular}{ccc}
    & \multicolumn{2}{c}{Sparse Laplace prior and link function $g(t)=t^c$ with $c=1$.}\\
    &  $d=30$ &   $d=300$\\
    \rotatebox[origin=l]{90}{\hspace{.9cm}$\tau=-0.8$} &
    \includegraphics[trim={1cm 1.5cm 0.5cm 2cm},clip,width=.43\textwidth]{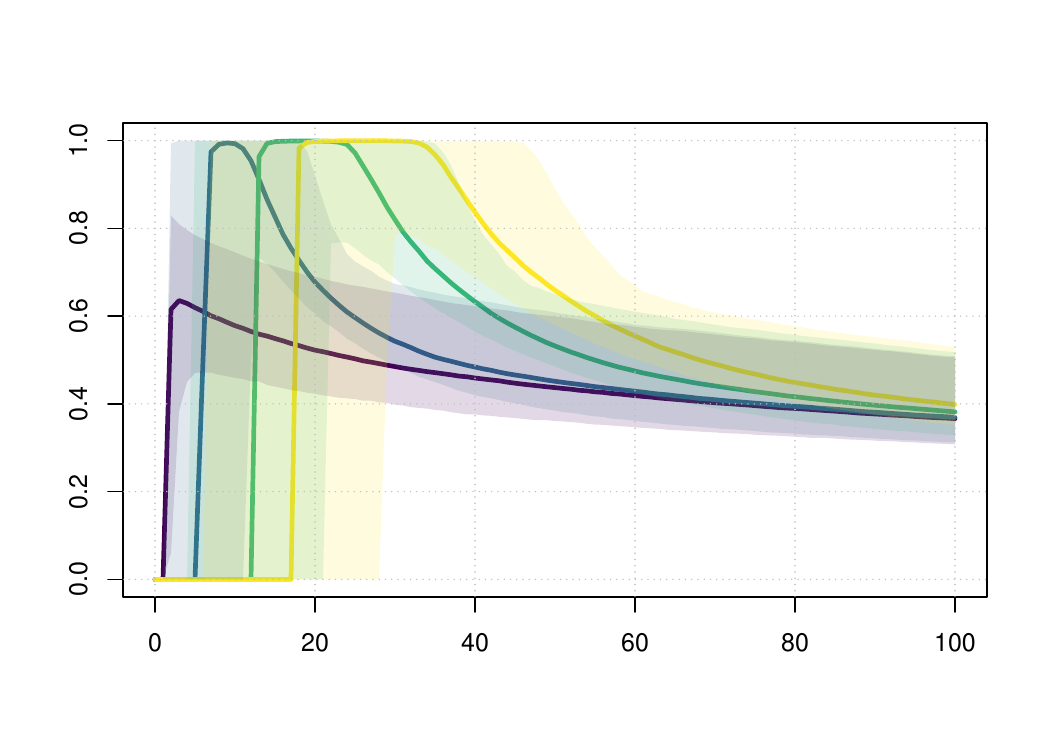}
    &
    \includegraphics[trim={1cm 1.5cm 0.5cm 2cm},clip,width=.43\textwidth]{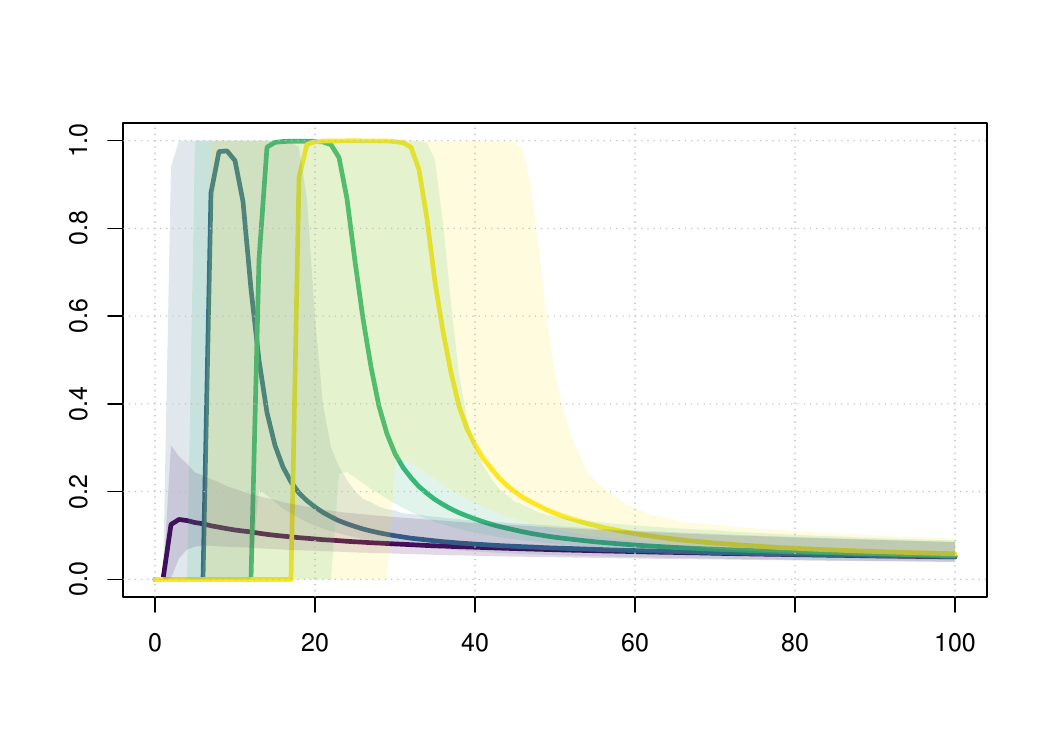}
 \\ \rotatebox[origin=l]{90}{\hspace{.9cm}$\tau=-0.2$} &
    \includegraphics[trim={1cm 1.5cm 0.5cm 2cm},clip,width=.43\textwidth]{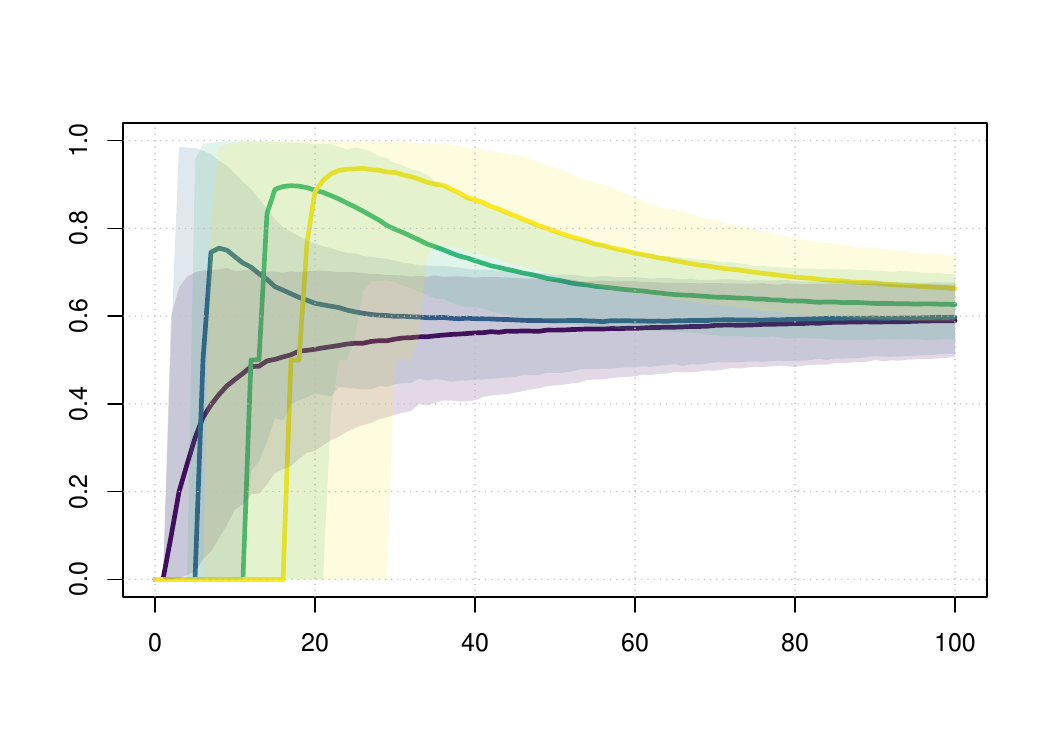}
    &
    \includegraphics[trim={1cm 1.5cm 0.5cm 2cm},clip,width=.43\textwidth]{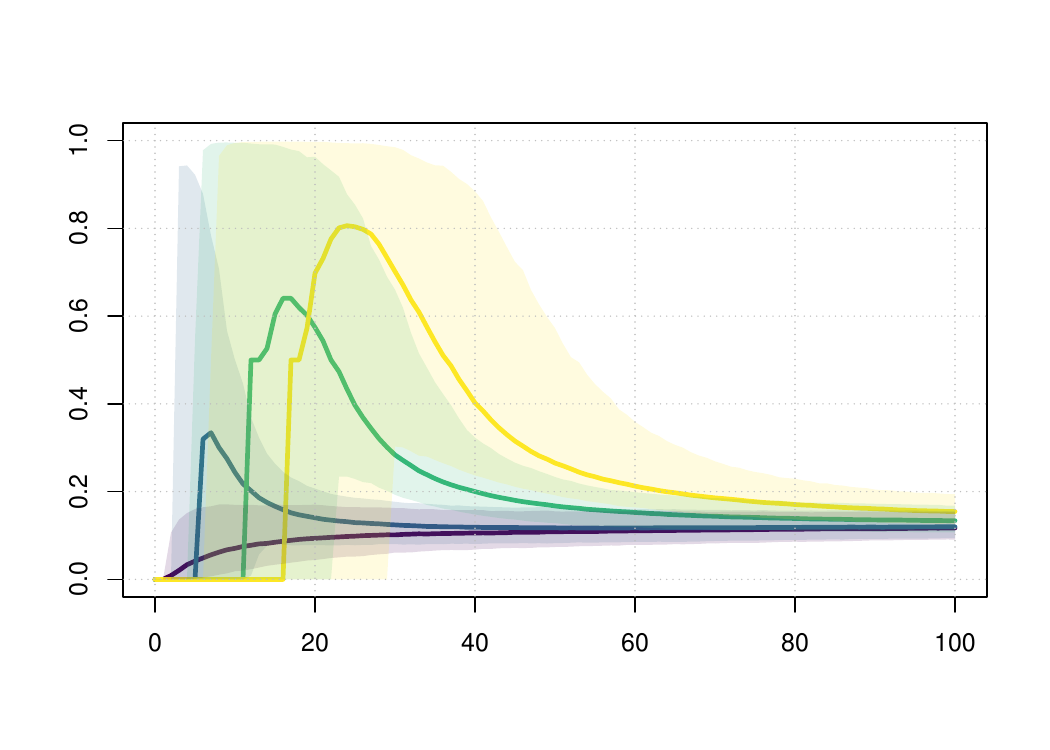}
 \\ \rotatebox[origin=l]{90}{\hspace{.9cm}$\tau=0.2$} &
    \includegraphics[trim={1cm 1.5cm 0.5cm 2cm},clip,width=.43\textwidth]{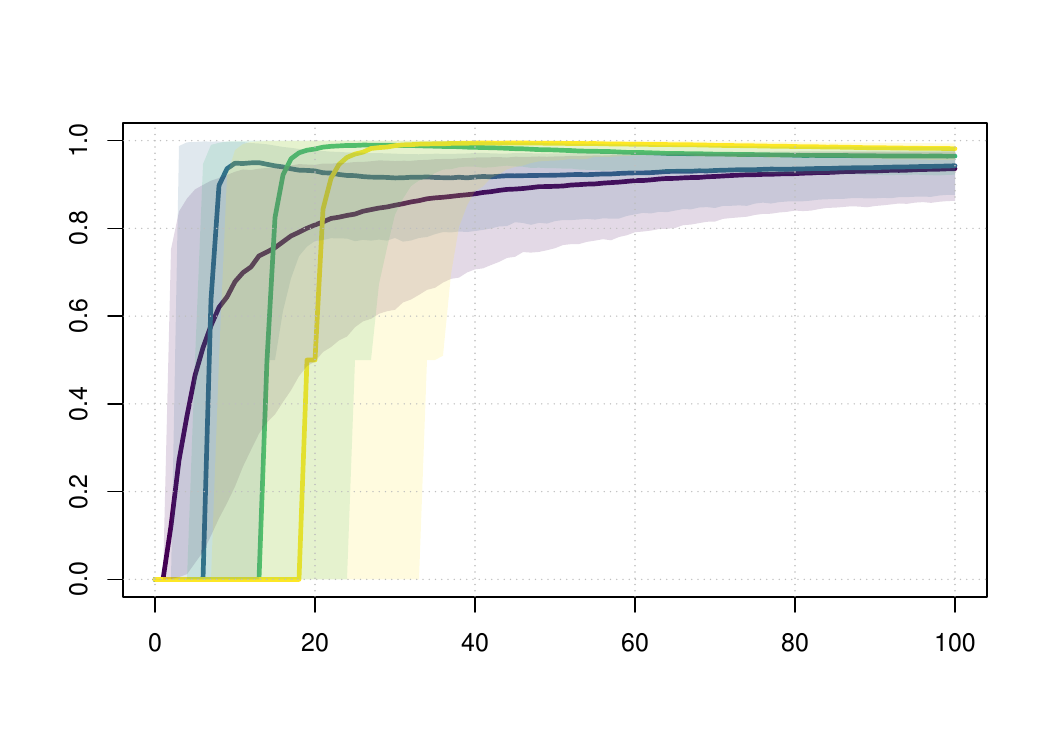}
    &
    \includegraphics[trim={1cm 1.5cm 0.5cm 2cm},clip,width=.43\textwidth]{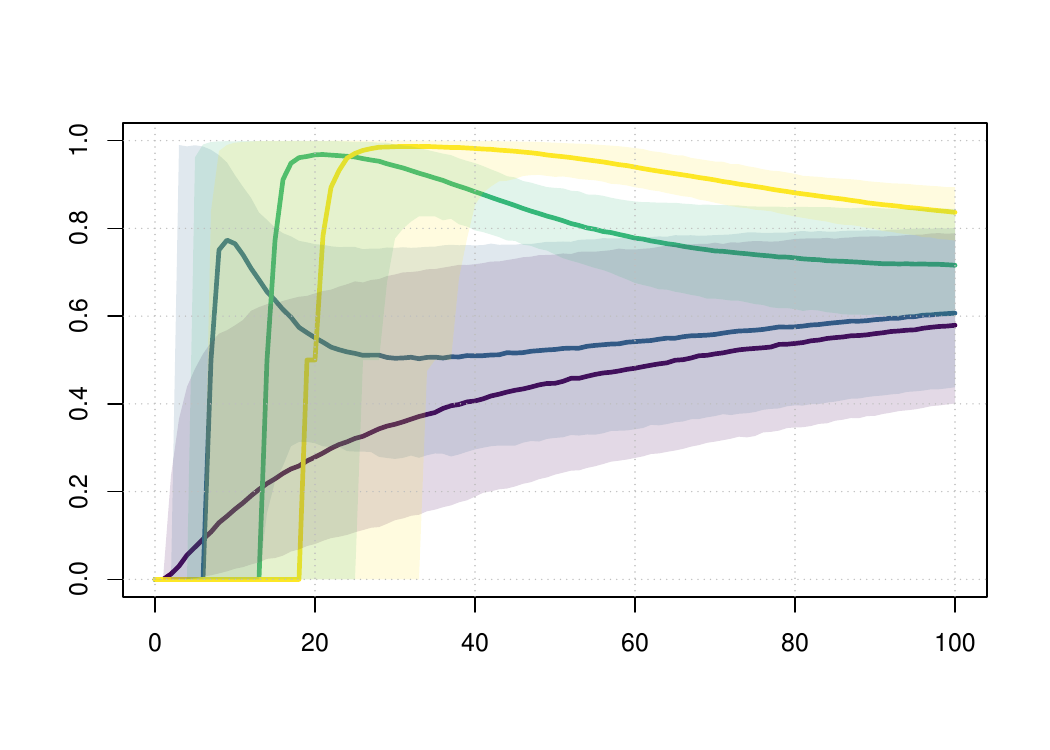}
 \\ \rotatebox[origin=l]{90}{\hspace{.9cm}$\tau=0.8$} &
    \includegraphics[trim={1cm 1.5cm 0.5cm 2cm},clip,width=.43\textwidth]{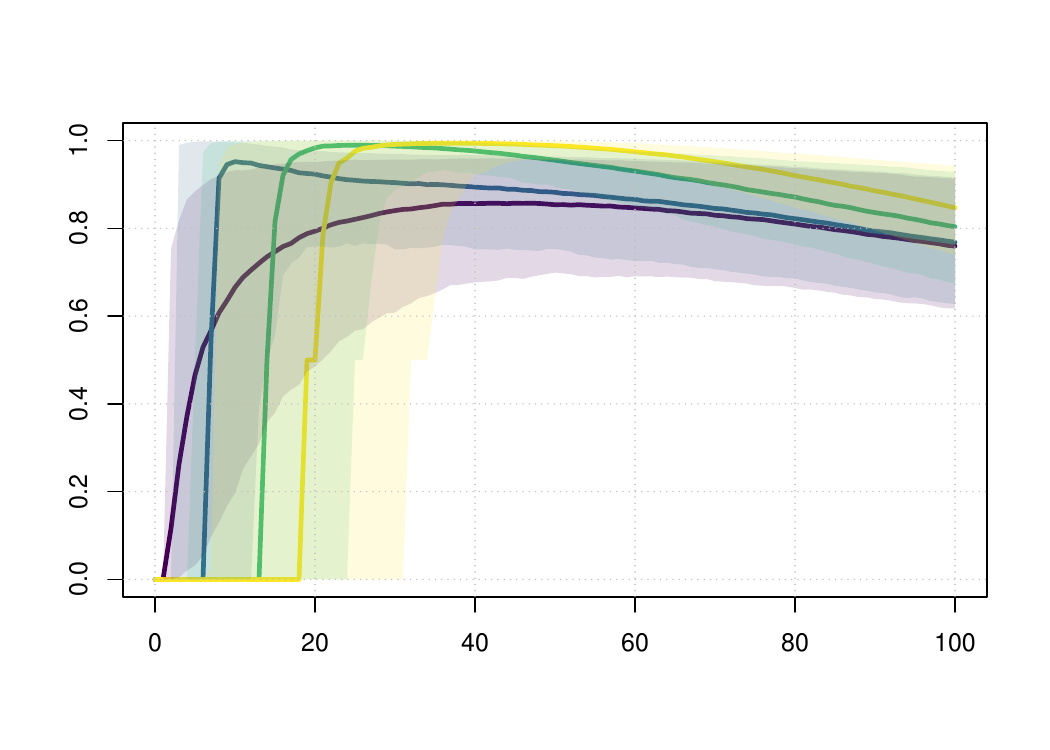}
    &
    \includegraphics[trim={1cm 1.5cm 0.5cm 2cm},clip,width=.43\textwidth]{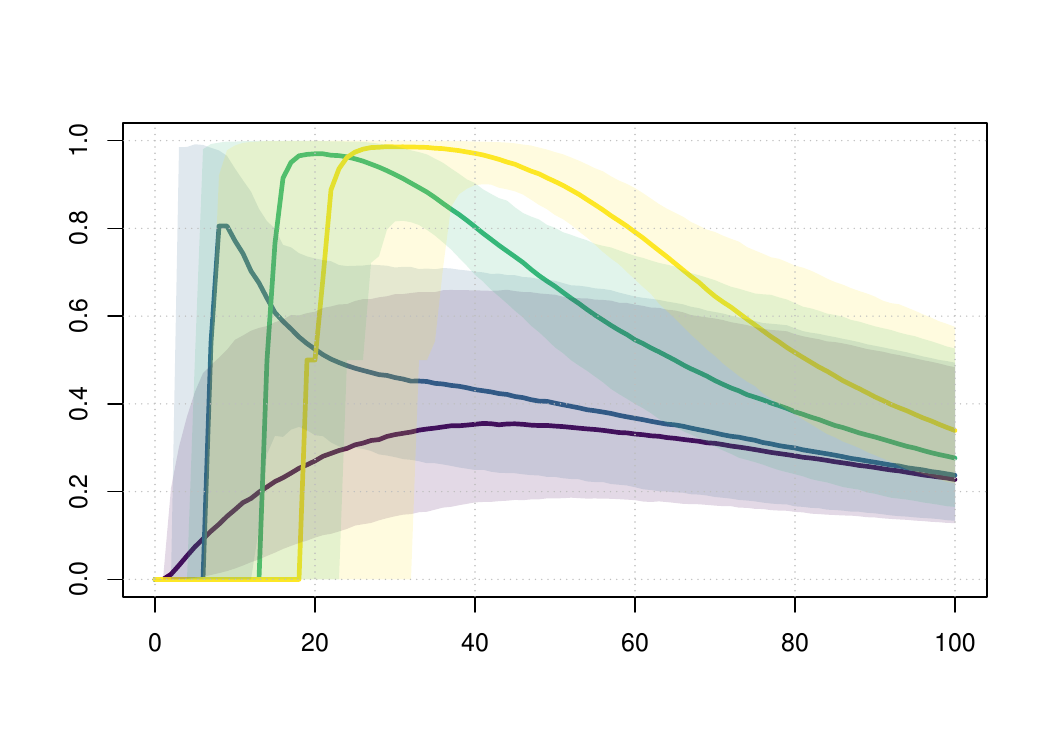}
    \end{tabular}
\caption{\legendsparse{1}}
\label{BEPLSSparse1}
\end{figure}

\begin{figure}[htb]
\centering
\begin{tabular}{ccc}
    & \multicolumn{2}{c}{Sparse Laplace prior and link function $g(t)=t^c$ with $c=1/2$.}\\
    &  $d=30$ &   $d=300$\\
    \rotatebox[origin=l]{90}{\hspace{.9cm}$\tau=-0.8$} &
    \includegraphics[trim={1cm 1.5cm 0.5cm 2cm},clip,width=.43\textwidth]{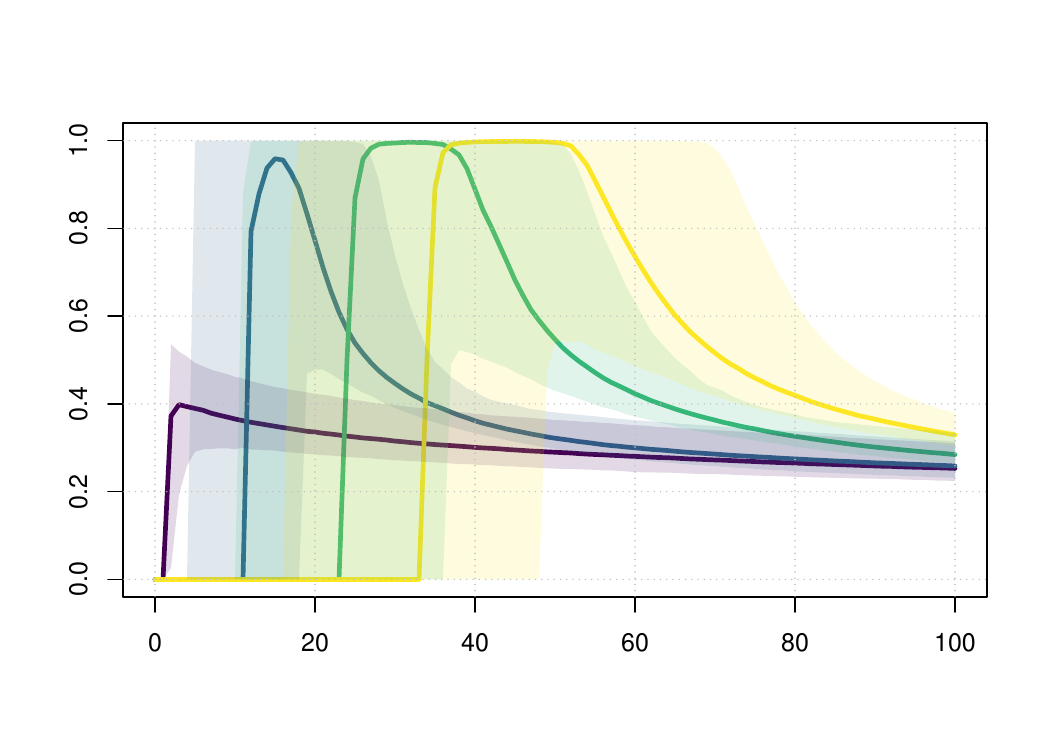}
    &
    \includegraphics[trim={1cm 1.5cm 0.5cm 2cm},clip,width=.43\textwidth]{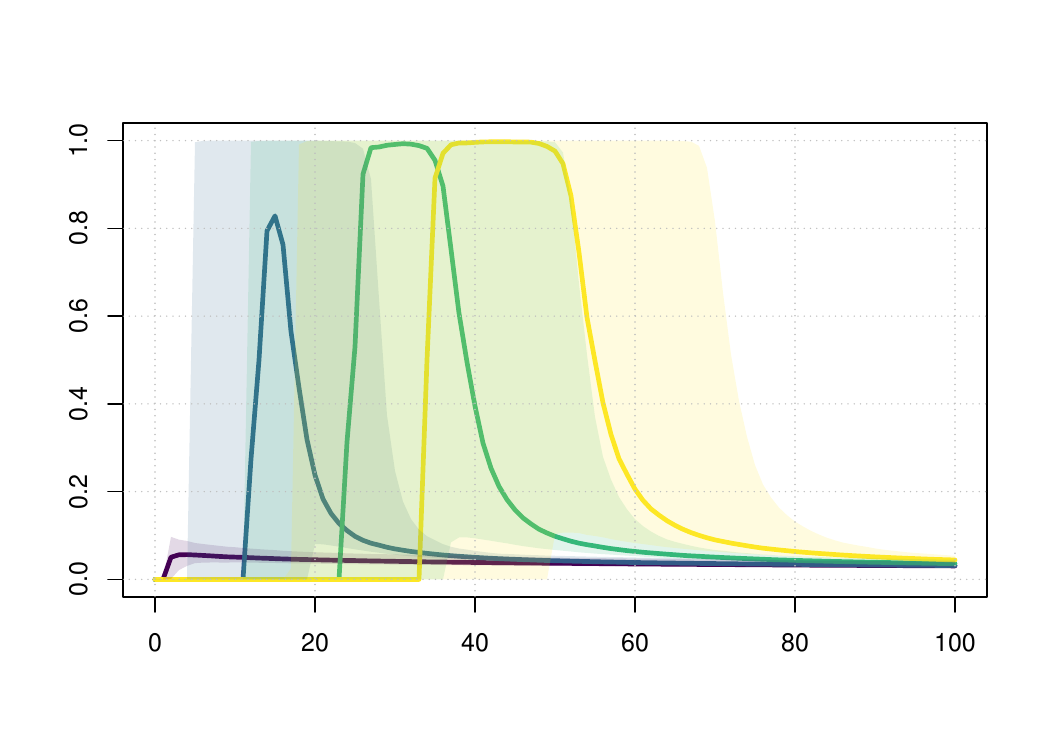}
 \\ \rotatebox[origin=l]{90}{\hspace{.9cm}$\tau=-0.2$} &
    \includegraphics[trim={1cm 1.5cm 0.5cm 2cm},clip,width=.43\textwidth]{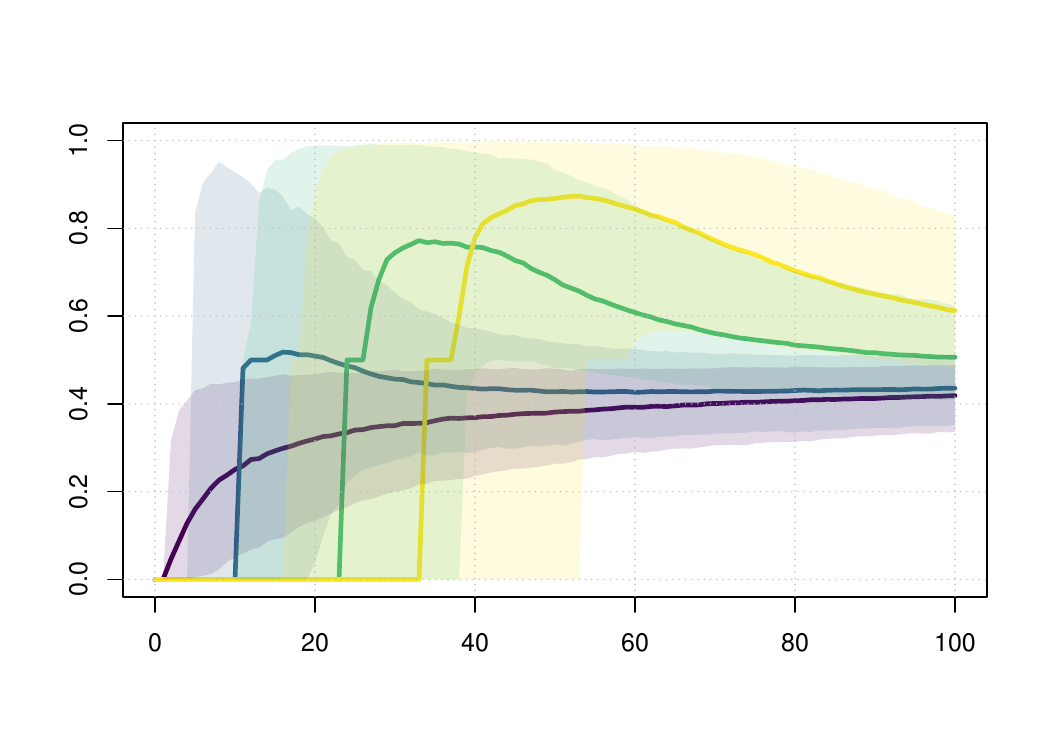}
    &
    \includegraphics[trim={1cm 1.5cm 0.5cm 2cm},clip,width=.43\textwidth]{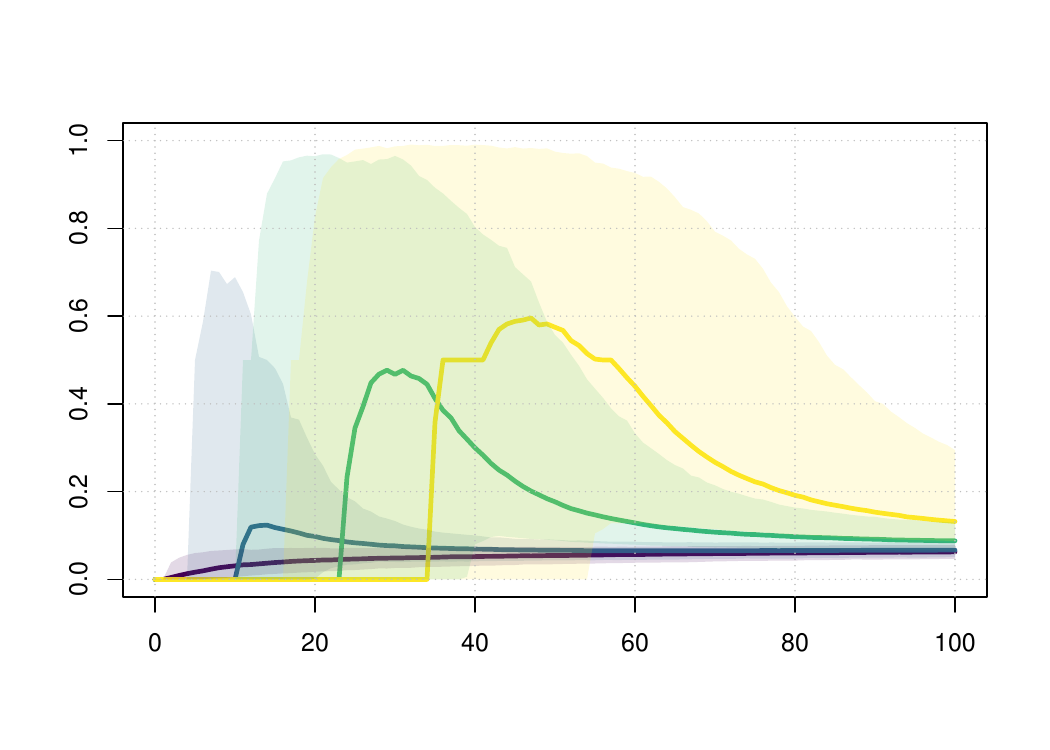}
 \\ \rotatebox[origin=l]{90}{\hspace{.9cm}$\tau=0.2$} &
    \includegraphics[trim={1cm 1.5cm 0.5cm 2cm},clip,width=.43\textwidth]{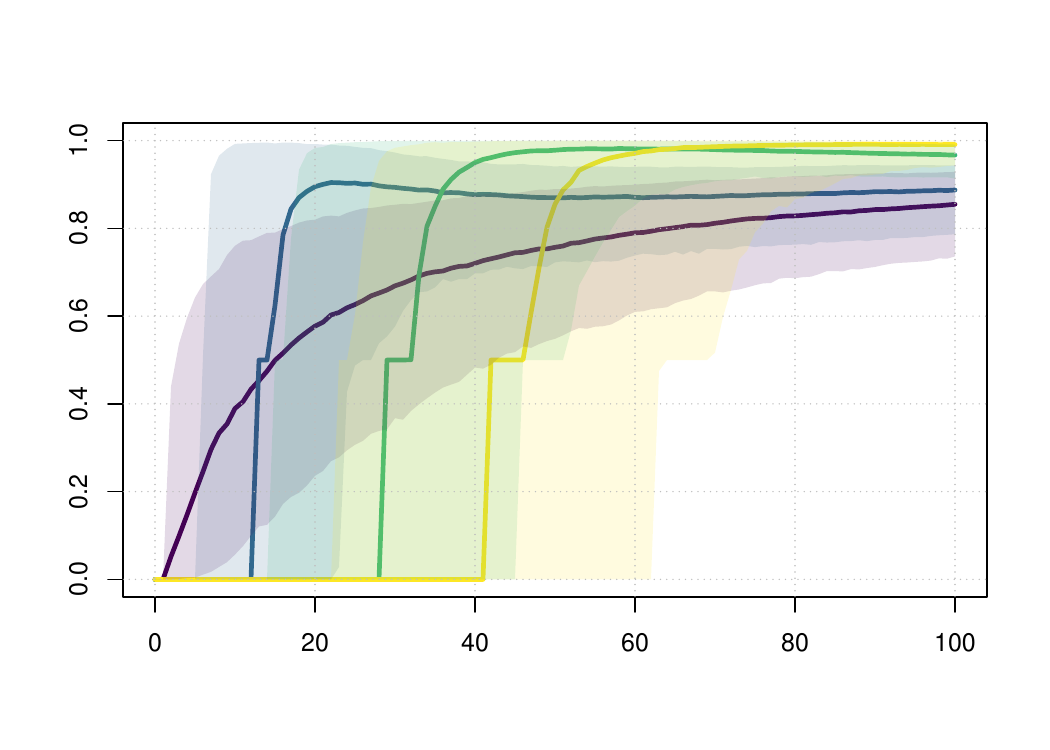}
    &
    \includegraphics[trim={1cm 1.5cm 0.5cm 2cm},clip,width=.43\textwidth]{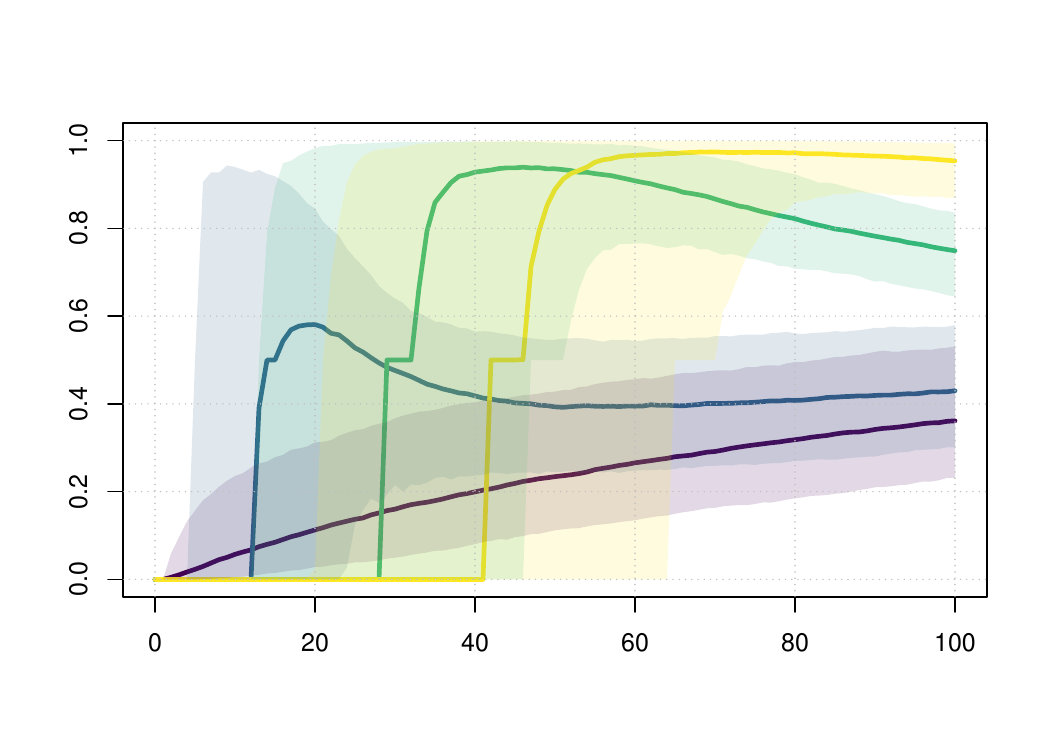}
 \\ \rotatebox[origin=l]{90}{\hspace{.9cm}$\tau=0.8$} &
    \includegraphics[trim={1cm 1.5cm 0.5cm 2cm},clip,width=.43\textwidth]{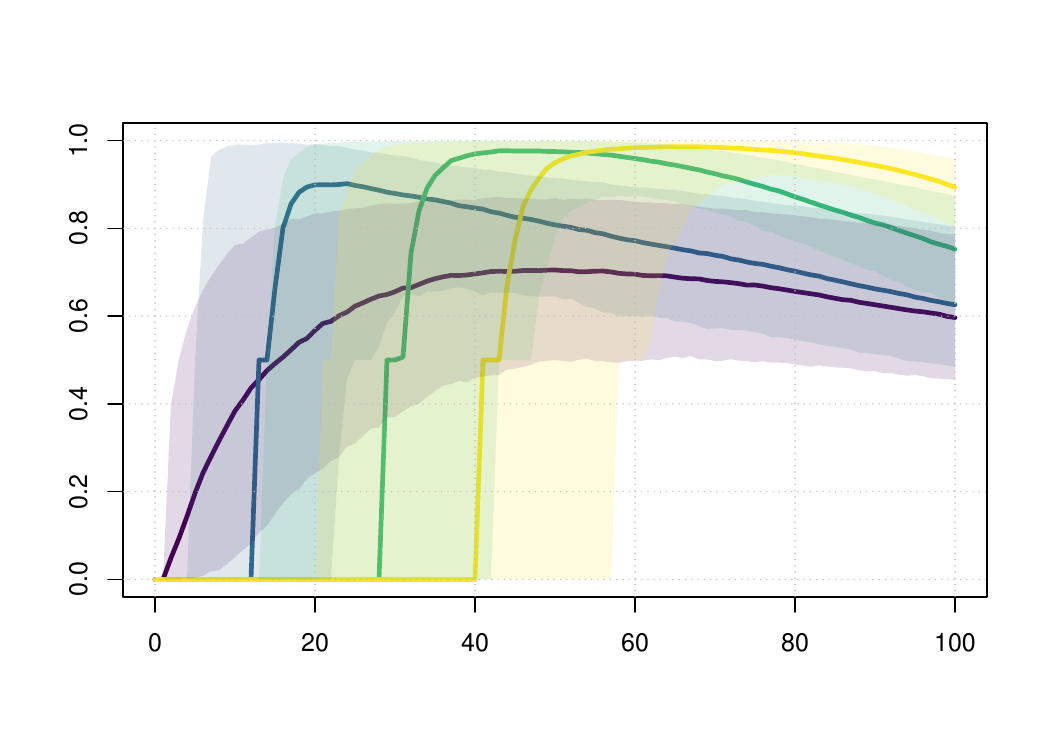}
    &
    \includegraphics[trim={1cm 1.5cm 0.5cm 2cm},clip,width=.43\textwidth]{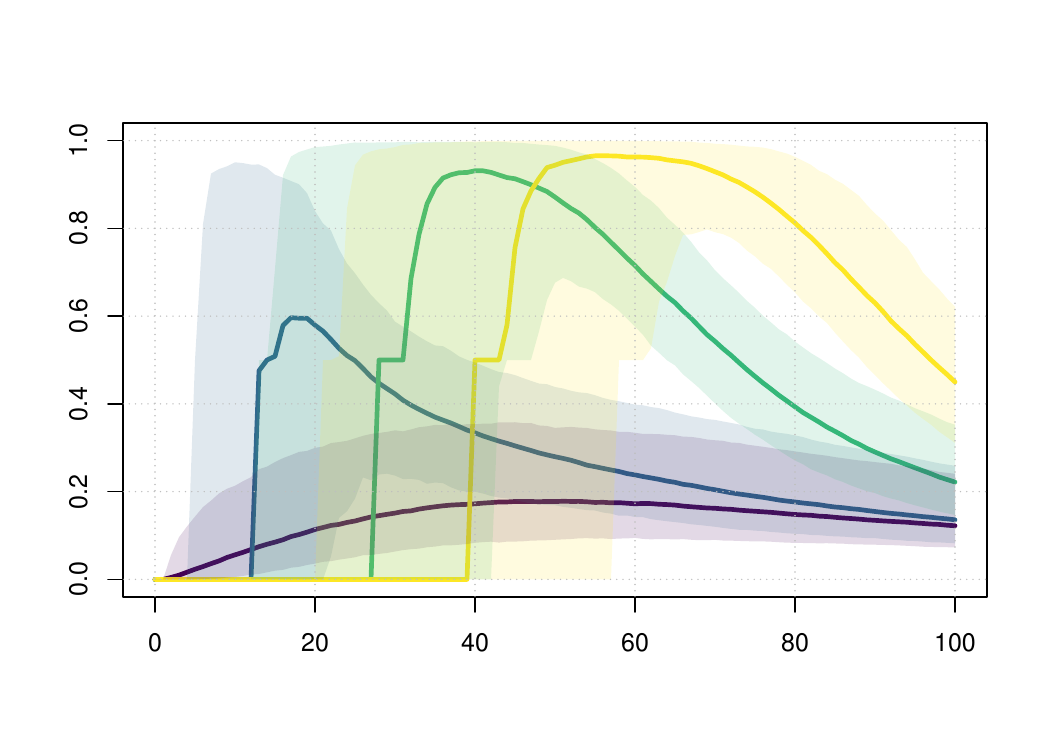}
    \end{tabular}
\caption{\legendsparse{1/2}}
\label{BEPLSSparse2}
\end{figure}

\begin{figure}[htb]
\centering
\begin{tabular}{ccc}
    & \multicolumn{2}{c}{Sparse Laplace prior and link function $g(t)=t^c$ with $c=1/4$.}\\
    &  $d=30$ &   $d=300$\\
    \rotatebox[origin=l]{90}{\hspace{.9cm}$\tau=-0.8$} &
    \includegraphics[trim={1cm 1.5cm 0.5cm 2cm},clip,width=.43\textwidth]{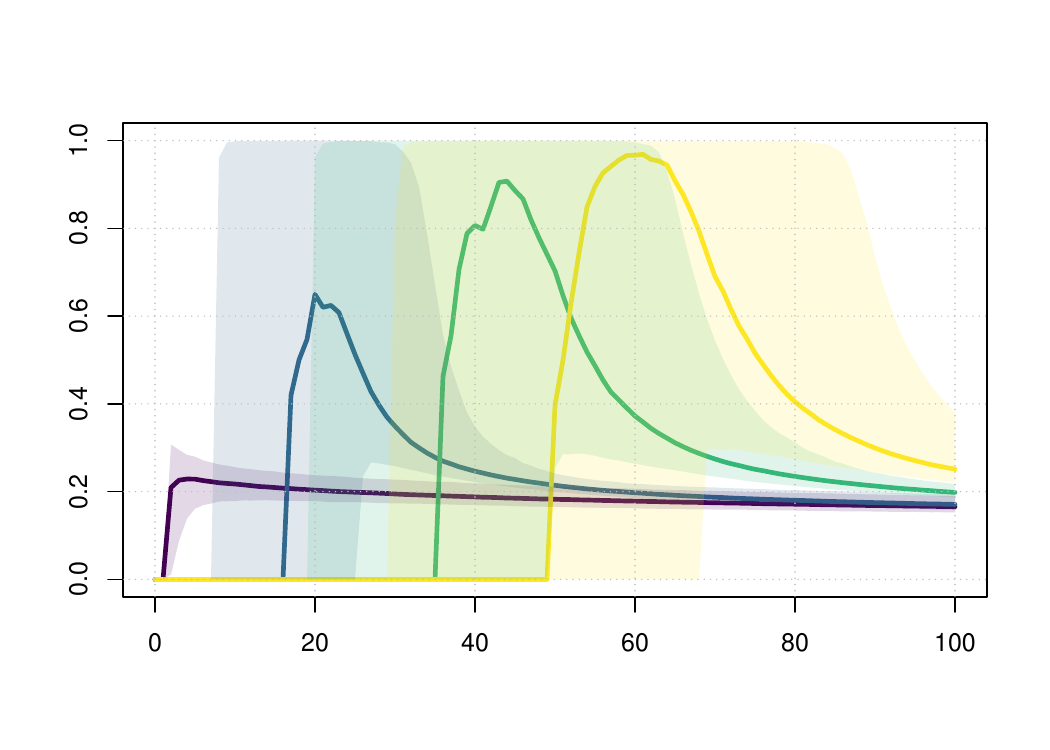}
    &
    \includegraphics[trim={1cm 1.5cm 0.5cm 2cm},clip,width=.43\textwidth]{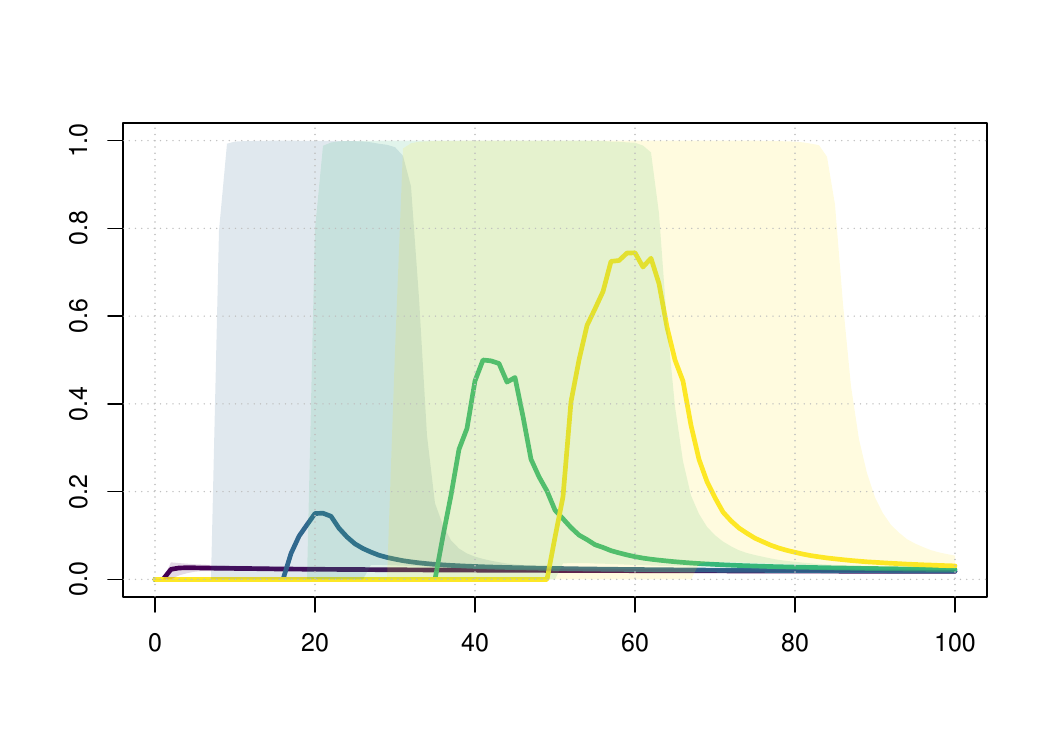}
 \\ \rotatebox[origin=l]{90}{\hspace{.9cm}$\tau=-0.2$} &
    \includegraphics[trim={1cm 1.5cm 0.5cm 2cm},clip,width=.43\textwidth]{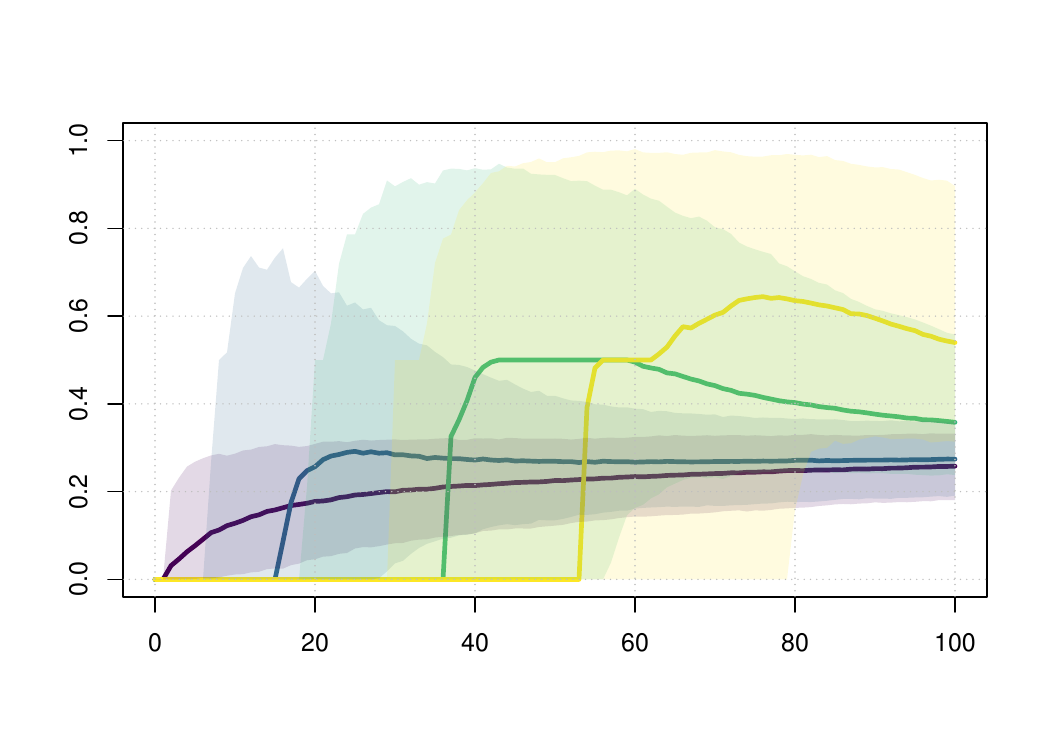}
    &
    \includegraphics[trim={1cm 1.5cm 0.5cm 2cm},clip,width=.43\textwidth]{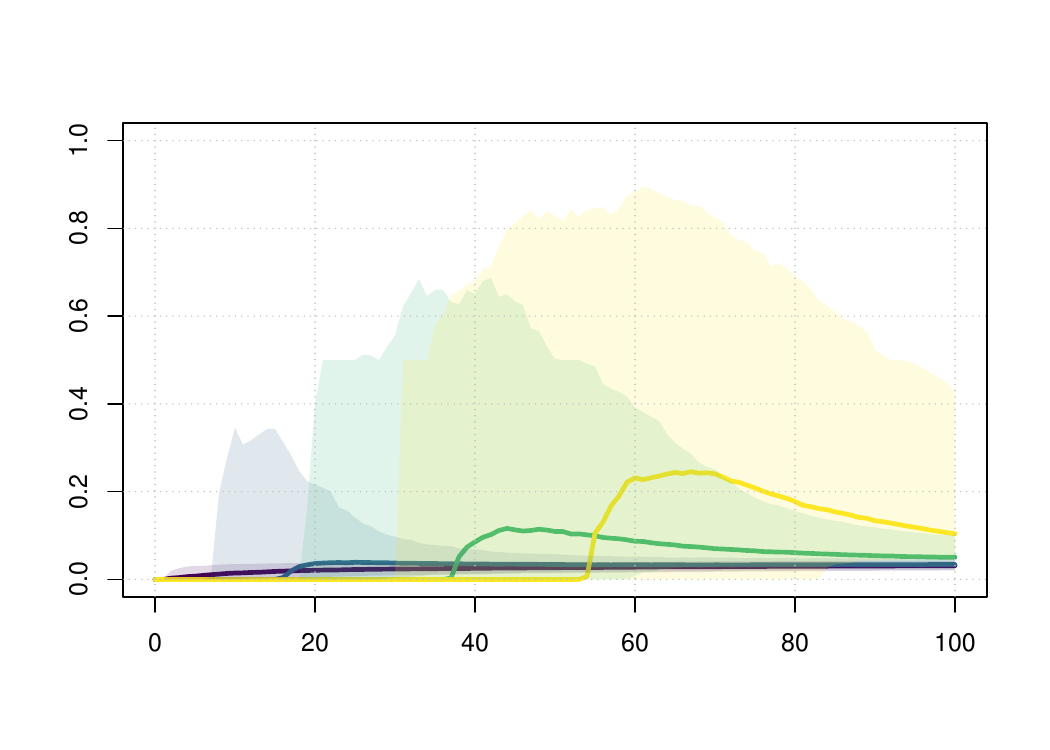}
 \\ \rotatebox[origin=l]{90}{\hspace{.9cm}$\tau=0.2$} &
    \includegraphics[trim={1cm 1.5cm 0.5cm 2cm},clip,width=.43\textwidth]{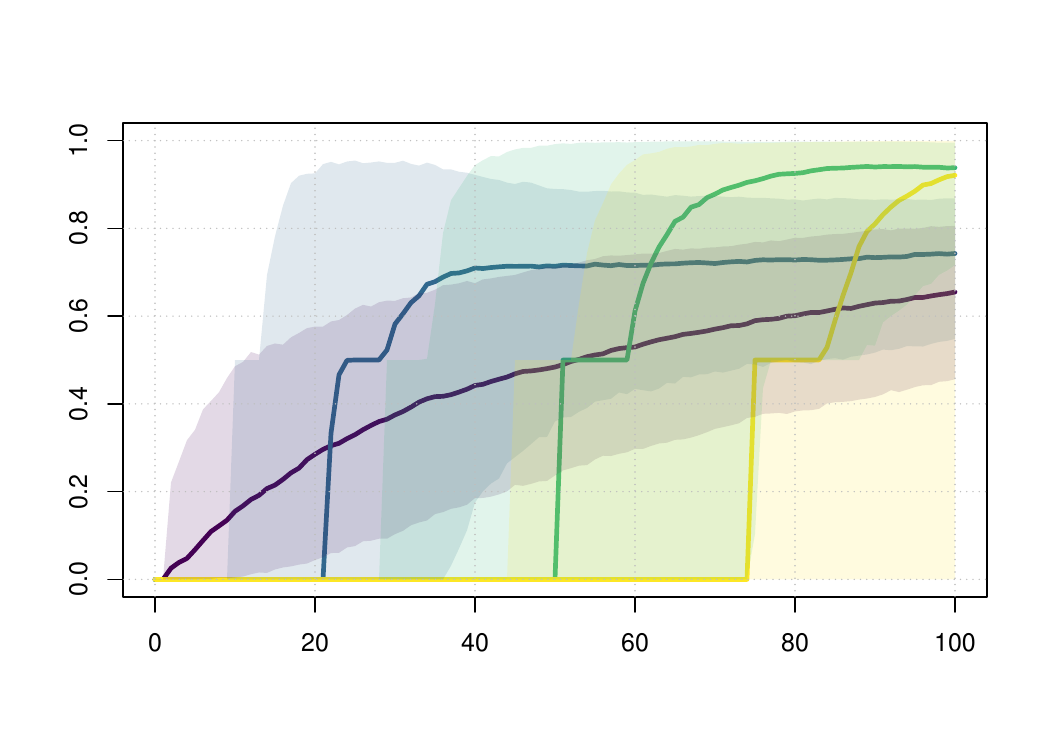}
    &
    \includegraphics[trim={1cm 1.5cm 0.5cm 2cm},clip,width=.43\textwidth]{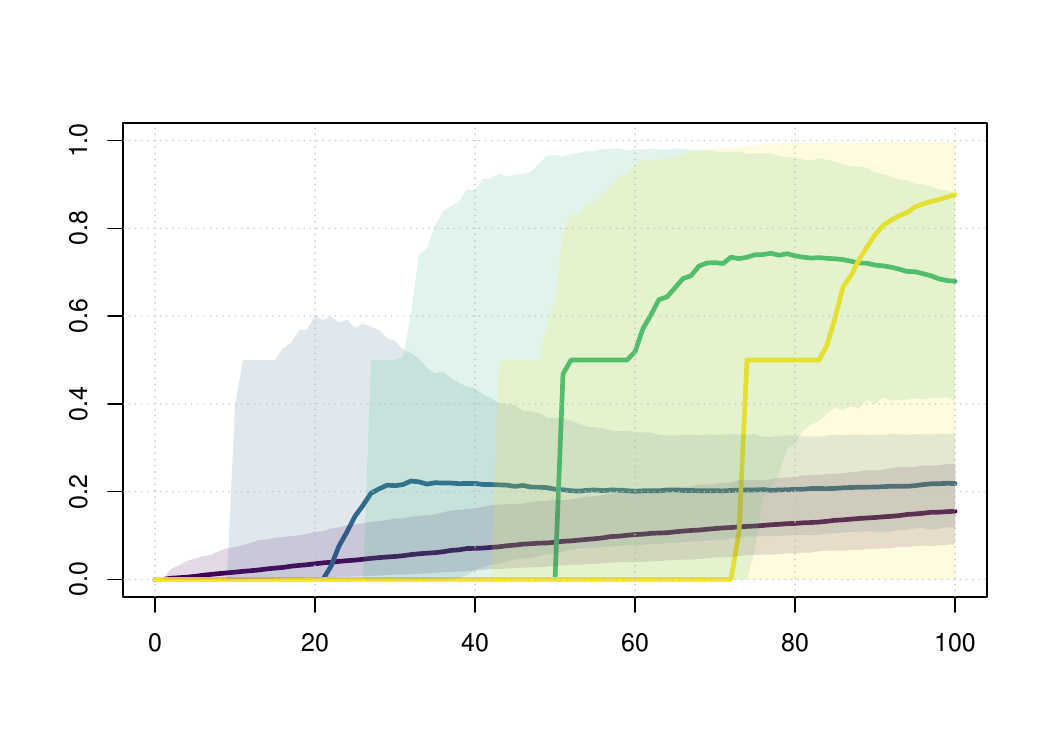}
 \\ \rotatebox[origin=l]{90}{\hspace{.9cm}$\tau=0.8$} &
    \includegraphics[trim={1cm 1.5cm 0.5cm 2cm},clip,width=.43\textwidth]{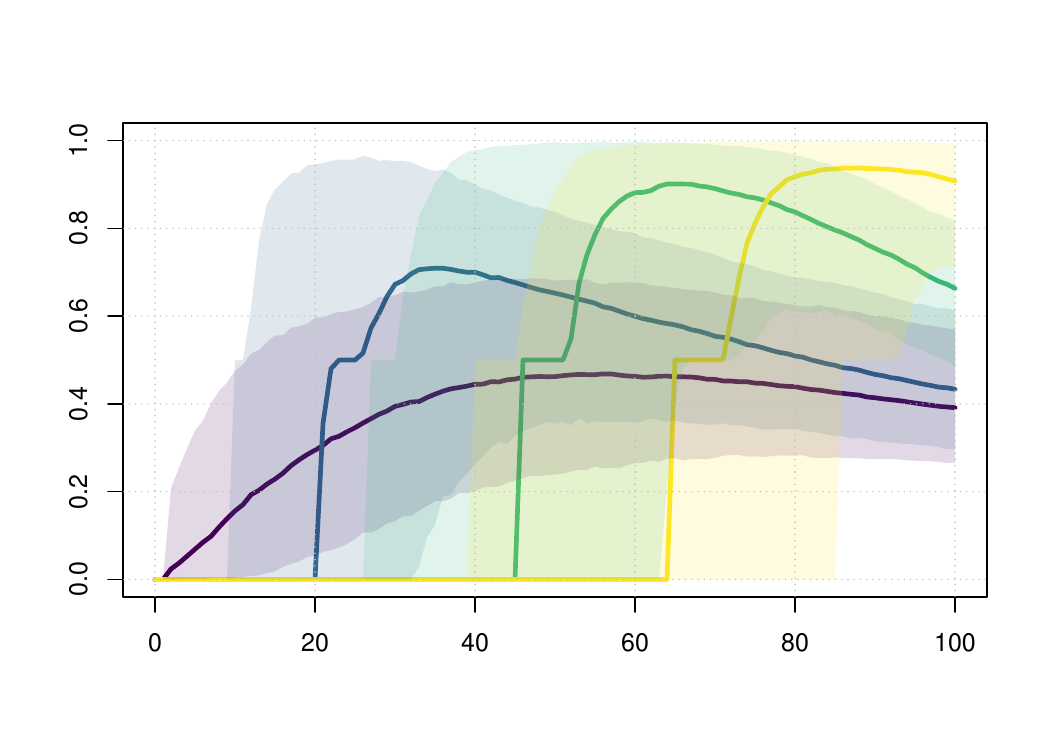}
    &
    \includegraphics[trim={1cm 1.5cm 0.5cm 2cm},clip,width=.43\textwidth]{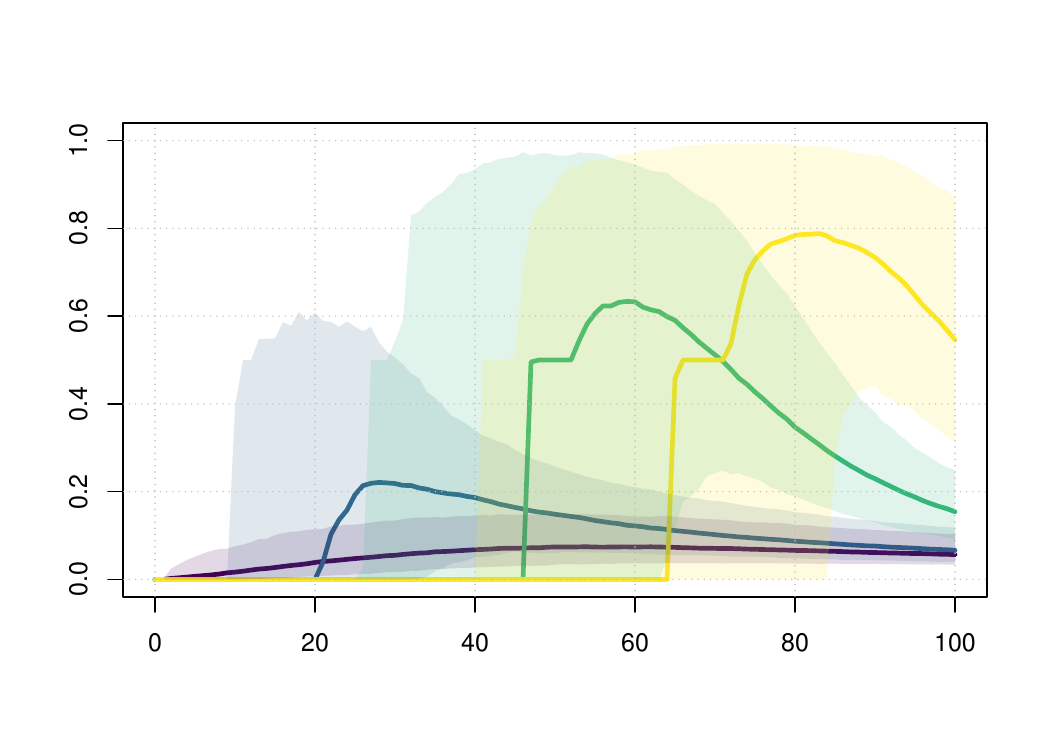}
    \end{tabular}
    \caption{\legendsparse{1/4}}
\label{BEPLSSparse3}
\end{figure}

\end{document}